\newif\ifarxiv%
\newcommand\flush{\mathop{\mathsf{flush}}}
\newcommand\X{\mathit{pkt}_\mathsf{X}}
\newcommand\Topo{\mathsf{Topo}}
\newcommand\addr{\alpha}
\newcommand\Red{{\color{red}\mathsf{R}}}
\newcommand\Blue{{\color{blue}\mathsf{B}}}
\newcommand\Tangerine{{\color{orange}\mathsf{T}}}
\newcommand\pcap{\textsc{pcap}}
\newcommand\Purple{{\color{violet}\mathsf{P}}}
\newcommand\push{\mathop{\mathsf{push}}}
\newcommand\pop{\mathop{\mathsf{pop}}}
\newcommand\pifopush{\mathop{\textsc{push}}}
\newcommand\pifopop{\mathop{\textsc{pop}}}
\newcommand\hide[1]{}
\newcommand\Imp{\Rightarrow}
\renewcommand\phi\varphi%
\newcommand\vertlabel[1]{\begin{turn}{90}#1\end{turn}}
\newcommand\pic[1][tree]{\includegraphics[width=.4\textwidth]{gallery/#1.png}}
\newcolumntype{C}{>{\centering\arraybackslash}m{.4\textwidth}}
\newcommand{\ir}[3][]{%
  \ifthenelse{\isempty{#1}}{%
    \ifthenelse{\isempty{#2}}{%
        \inferrule*[]{{ }}{#3}%
    }{%
        \inferrule*[]{#2}{#3}%
  }}{%
    \inferrule*[right={\sc #1}]{#2}{#3}%
  }%
}
\theoremstyle{definition}
\newtheorem{remark}{Remark}
\renewcommand{\vec}[1]{{#1}s}
\keywords{packet scheduling, formal semantics, programmable scheduling}
\begin{document}

\title{Formal Abstractions for Packet Scheduling}

\author{Anshuman Mohan}
\orcid{0000-0002-6803-9767}
\email{amohan@cs.cornell.edu}
\affiliation{%
  \institution{Cornell University}
  \country{USA}
}

\author{Yunhe Liu}
\orcid{0000-0003-4677-8902}
\email{yunheliu@cs.cornell.edu}
\affiliation{%
  \institution{Cornell University}
  \country{USA}
}

\author{Nate Foster}
\orcid{0000-0002-6557-684X}
\email{jnfoster@cs.cornell.edu}
\affiliation{%
  \institution{Cornell University}
  \country{USA}
}

\author{Tobias Kappé}
\orcid{0000-0002-6068-880X}
\email{t.kappe@uva.nl}
\affiliation{%
  \institution{Open University}
  \country{the Netherlands}
}
\affiliation{%
  \institution{ILLC, University of Amsterdam}
  \country{the Netherlands}
}

\author{Dexter Kozen}
\orcid{0000-0002-8007-4725}
\email{kozen@cs.cornell.edu}
\affiliation{%
  \institution{Cornell University}
  \country{USA}
}

\begin{abstract}
Early programming models for software-defined networking (SDN) focused on basic features for controlling network-wide forwarding paths, but more recent work has considered richer features, such as packet scheduling and queueing, that affect performance.
In particular, \emph{PIFO trees}, proposed by Sivaraman et al., offer a flexible and efficient primitive for \emph{programmable} packet scheduling.
Prior work has shown that PIFO trees can express a wide range of practical algorithms including strict priority, weighted fair queueing, and hierarchical schemes.
However, the semantic properties of PIFO trees are not well understood.

This paper studies PIFO trees from a programming language perspective.
We formalize the syntax and semantics of PIFO trees in an operational model that decouples the scheduling policy running on a tree from the topology of the tree.
Building on this formalization, we develop compilation algorithms that allow the behavior of a PIFO tree written against one topology to be realized using a tree with a different topology.
Such a compiler could be used to optimize an implementation of PIFO trees, or realize a logical PIFO tree on a target with a fixed topology baked into the hardware.
To support experimentation, we develop a software simulator for PIFO trees, and we present case studies illustrating its behavior on standard and custom algorithms.
\end{abstract}

\maketitle

\section{Introduction}

Over the past decade, programmable networks have gone from a dream to reality~\cite{deep-programmability}.
But why do network owners want to program the network?
Although there has been some buzz around trendy topics like self-driving networks~\cite{bingzhe20} and in-network computing~\cite{netcache,netchain,netpaxos}, network owners often have less flashy priorities: they simply want to build networks that provide reliable service under uncertain operating conditions.
Doing this well in practice requires not just fine-grained control over routing, but also prioritizing certain packets over others---i.e., controlling packet scheduling---using algorithms that operate at line rate.

Today, most routers support just a few scheduling algorithms---e.g., strict priority, weighted fair queueing, etc.---that are baked into the hardware.
An administrator can select from these algorithms and configure their parameters to some extent, but they typically cannot implement entirely new algorithms.
To get around this, \citet{Sivaraman16} proposed a new model for \emph{programmable} packet schedulers based on a novel data structure called a \emph{PIFO tree}.\footnote{A PIFO is just a priority queue (\emph{push-in-first-out}) that is additionally defined to break ties in \emph{first-in-first-out} (\emph{FIFO}) order.}
This relatively simple data structure can be instantiated to not only realize a wide range of well-studied packet scheduling algorithms, but also compose them hierarchically.
It seems likely that PIFO trees will be supported on network devices in the near future---indeed, the original paper on PIFO trees presented a detailed hardware design and demonstrated its feasibility, and researchers have also started to explore how the PIFO abstraction can be emulated on fixed hardware~\cite{Alcoz20, vass2022}.

Informally, a PIFO tree associates a PIFO with each of its nodes.
The leaf PIFOs hold buffered packets and the internal PIFOs hold scheduling data.
To insert a packet into a PIFO tree, we first determine the leaf where the packet should be buffered, and then walk down the tree from the root to that leaf.
At each internal node along this path, we insert into the node's PIFO a downward reference to the next node.
At the leaf node we insert the packet itself, and this completes the insertion.
The subtlety comes from choosing the \emph{rank}s with which downward references and packets are inserted into PIFOs; these ranks are determined by a program called a \emph{scheduling transaction}.
Conversely, releasing a packet employs a simple, fixed algorithm: pop the root's PIFO to get an index to its most favorably ranked child, and recurse on that child until arriving at a leaf.
Popping the leaf gives us the tree's most favorably ranked packet, which is then emitted.

Despite their operational simplicity, relatively little is known about how the topology of PIFO trees affects the expressivity of their scheduling algorithms.
We are also not aware of any work that studies how to translate scheduling algorithms written for a tree with one topology onto another tree with a different topology.
From a theoretical perspective, this question is interesting because it offers insights into the expressiveness of PIFO trees.
From a practical perspective, it is important because it allows chip designers to focus on developing high-speed implementations with fixed topologies, freeing up hardware resources for implementing the scheduling logic.

In this paper we develop the first formal account of PIFO trees, using tools and techniques from the field of programming languages to model their semantics.
We then embark on a comprehensive study of PIFO trees, examining how expressiveness is affected by variations in topology.
This leads to our main result: an algorithm that can compile a scheduling policy written against one PIFO tree onto another PIFO tree of a different topology.
We furthermore develop a reference implementation of our algorithm, as well as a simulator that we use to validate the algorithm.
Finally, we compare the behavior of the simulator against scheduling algorithms running on a state-of-the-art switch.

More concretely, this paper makes the following contributions:
\begin{itemize}
    \item
    We present the first formal syntax and semantics for PIFO trees, emphasizing the separation of concerns between the topology of a tree and the scheduling algorithm running on it.
    \item
    We study the semantics of PIFO trees in terms of the permutations they produce, and use this as a tool to formally distinguish the expressiveness of PIFO trees based on their topologies.
    \item
    We propose a fast algorithm that compiles a PIFO tree into a regular-branching PIFO tree, and a slower algorithm that compiles a PIFO tree into an arbitrary-branching PIFO tree.
    Our algorithms are accompanied by formal proofs characterizing when compilation is possible.
    \item
    We provide an implementation of the first algorithm, as well as a simulator for PIFO trees.
    We use this simulator to validate that our compiler preserves PIFO tree behavior, and compare PIFO trees against scheduling algorithms implemented on a hardware switch.
\end{itemize}

\noindent
Overall, this work takes a first step toward higher-level abstractions for specifying scheduling algorithms by developing compilation tools, and suggests directions for future work on scheduling, in the networking domain and beyond.

\section{Overview}%
\label{sec:overview}

We start by discussing what is perhaps the most obvious way to implement software-defined scheduling: a programmable priority queue.
In this model, incoming packets are ranked using a (user-defined) \emph{scheduling transaction}, and a packet is enqueued according to its rank.
In general, the scheduling transaction may read and write state.
To dequeue a packet, we simply remove the most favorably ranked packet (i.e., the one with the with lowest rank or, equivalently, highest priority).

Although such an approach is simple and relatively easy to implement, it is unable to express a basic feature of many packet scheduling algorithms: the ability to reorder buffered packets after they are enqueued.
To address this shortcoming, we introduce PIFO trees~\cite{Sivaraman16}, which extend the simple programmable queue model and support such reordering.

\subsection{Programmable Priority Queues}%
\label{sec:pifos}

Suppose you are responsible for programming a switch where incoming packets can be divided into two flows, coming to you from data centers in $\Red$ochester and $\Blue$uffalo respectively.
The flows are to be given equal priority up to the availability of packets.
This last caveat is important: if one flow becomes inactive and the other stays active, then the active flow is given free rein to use the excess bandwidth until the other flow starts up again, at which point it again receives its fair share.

Fortunately, the queue in your network device is programmable: it allows you to specify a \emph{scheduling transaction}, which assigns an integer rank to each packet and may also update some internal state.
The packet is then inserted into a priority queue, or PIFO, which orders the packets by rank.
Whenever the link becomes available, a different component in the network device removes the most favorably ranked packet from the queue and transmits it.

The scenario where we aim for an equal split between $\Red$ and $\Blue$ traffic can be implemented by assigning ranks to incoming packets such that the contents of the queue interleave these flows whenever possible, and by maintaining some state.
Specifically, our aim will be to maintain a queue that has one of the following three forms at any given time.
Note that, in keeping with network queueing tradition, the most favorably ranked item in the queue is on the \emph{right}, which we call the \emph{head} of the queue.
These queues are read from right to left, in the order they will be dequeued.
\begin{mathpar}
    \Blue_n, \ldots, \Blue_1,~~(\Red, \Blue)^*
    \and
    \Red_n, \ldots, \Red_1,~~(\Red, \Blue)^*
    \and
    (\Red, \Blue)^*
\end{mathpar}
Here, we write $(\Red, \Blue)^*$ to stand in for a balanced (possibly zero) number of interleaved $\Red$ and $\Blue$ packets interleaved in either order, such as $\Red, \Blue, \Red, \Blue$ or $\Blue, \Red, \Blue, \Red, \Blue, \Red$.

Let us see how these three cases accept a new $\Red$ or $\Blue$ packet.
We just have a PIFO, so our power is limited: all we can do is assign the incoming packet a rank and then push the packet into the PIFO at that rank.
The packet is inserted among the previously buffered packets without affecting the relative order of those packets.
We distinguish based on the three possible shapes of the queue:
\begin{itemize}
    \item
    To enqueue a $\Blue$ into $\Blue_n, \ldots, \Blue_1,~~(\Red, \Blue)^*$, give it any rank that puts it within the list of unbalanced $\Blue$s.
    Making it $\Blue_{n+1}$ puts it on the far left and preserves arrival order within the~$\Blue$ flow.
    To enqueue an $\Red$, put it just before or just after $\Blue_1$ (depending on whether the item at the head of the queue is $\Red$ or $\Blue$).
    This extends the initial (balanced) part of the queue.
    \item
    To enqueue a $\Blue$ or an $\Red$ into $\Red_n, \ldots, \Red_1,~~(\Red, \Blue)^*$, act symmetrically to the previous case.
    \item
    To enqueue a $\Blue$ or an $\Red$ into $(\Red, \Blue)^*$, assign it a rank that puts it on the very left.
    This creates a queue of either the first or the second kind.
\end{itemize}

\subsection{Achieving Balance within Flows}%
\label{sec:motivating_trees}

The solution proposed for sharing bandwidth between $\Red$ and $\Blue$ works well thus far, but now a complication arises.
As it turns out, traffic from $\Red$ochester can be further divided by destination: some packets are bound for $\Tangerine$oronto and some for $\Purple$ittsburgh.
Traffic to these sites also needs to be balanced equally in a~1:1 split, but \emph{within} the share of traffic allocated to $\Red$ packets.
That is, every two packets released should contain one $\Red$ and one $\Blue$, and every two~$\Red$ packets released should contain one $\Tangerine$ and one $\Purple$.
Again, this is up to the availability of packets.

We now highlight a particular case that may arise when trying to program a scheduling transaction that adheres to these constraints, and show that the desired split cannot be achieved using our single programmable queue.
Suppose $\Tangerine$ is initially silent, while $\Purple$ and $\Blue$ are active.
In this case, our scheduler should balance $\Purple$ and $\Blue$ \emph{evenly}: in the absence of $\Tangerine$ traffic, $\Purple$ traffic gets to take up the entire share allocated to $\Red$ traffic.
Under this policy, the queue might take the following form:
\[\Blue_3, \Blue_2, \Purple_2, \Blue_1, \Purple_1\]
So far, so good: opportunities for transmission are divided evenly between $\Purple$ and $\Blue$, up to availability of packets.
But now a $\Tangerine$oronto-bound packet $\Tangerine_1$ arrives, and our scheduling transaction needs to insert it somewhere in this queue.
With an eye to the 1:1 split \emph{within} the $\Red$ flow, $\Tangerine_1$ can go either before or after $\Purple_1$ but it really does need to go before $\Purple_2$---after all, if this is not the case, then opportunities for $\Red$ traffic are not being split evenly between $\Tangerine$ and $\Purple$.
This leaves us with three possible choices for inserting $\Tangerine_1$, which would result in one of the following queues:
\begin{mathpar}
    \Blue_3, \Blue_2, \Purple_2, \Tangerine_1, \Blue_1, \Purple_1
    \and
    \Blue_3, \Blue_2, \Purple_2, \Blue_1, \Tangerine_1, \Purple_1
    \and
    \Blue_3, \Blue_2, \Purple_2, \Blue_1, \Purple_1, \Tangerine_1
\end{mathpar}
However, \emph{all} of these choices violate the 1:1 contract between $\Red$ and~$\Blue$---recall that both~$\Tangerine$ and~$\Purple$ packets are still $\Red$ packets, so the options above all propose to send two $\Red$s in a row while an unbalanced $\Blue$ languishes in the back.
Queues that \emph{do} satisfy the balance requirements include
\begin{mathpar}
    \Blue_3, \Purple_2, \Blue_2, {\Tangerine_1}, \Blue_1, \Purple_1
    \and
    \Blue_3, \Purple_2, \Blue_2, \Purple_1, \Blue_1, {\Tangerine_1}
\end{mathpar}
but note that these require the reordering of previously buffered packets relative to each other.
Since our programmable queue model allows us to assign a rank only to the incoming packet, \emph{it is impossible to achieve the desired behavior using a single PIFO}.
Moreover, it was not obvious at the onset that the algorithm would require the reordering of previously buffered packets.

\subsection{Introducing PIFO Trees}

\citet{Sivaraman16} propose a remarkably elegant solution to the problem discussed above: instead of a single PIFO, use a \emph{tree} of PIFOs.
The leaves of this tree correspond to flows, and each leaf carries, in its PIFO, packets ranked according to rules local to the flow.
Meanwhile, the internal nodes of the tree contain PIFOs that prioritize the opportunities for transmission \emph{between classes of traffic}, without referring to a particular packet within those classes.
This model provides a layer of indirection, which lets us eke out more expressiveness from the simple PIFO primitive while maintaining the flexibility of being able to program the relative priorities of each flow and subflow.

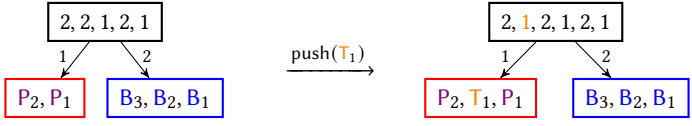
\begin{figure}
\begin{center}
\begin{tikzpicture}[->, >=stealth', auto, bend angle=10]
\tikzstyle{dot}=[state, inner sep=1.3pt, minimum size=0pt, draw=black, fill=black]
\tikzstyle{box}=[inner sep=0pt, minimum height=0.5cm, draw=black, thick]
\small
  \node[box] (eps) {\phantom{b}$2, 2, 1, 2, 1$\phantom{b}};
  \node[box, draw=red, thick] (N0) [below of=eps, xshift=-8mm] {$\phantom{b}\Purple_2, \Purple_1\phantom{b}$};
  \node[box, draw=blue, thick] (N1) [below of=eps, xshift=8mm] {$\phantom{b}\Blue_3, \Blue_2, \Blue_1\phantom{b}$};
  \path (eps) edge node[swap, pos=.65, xshift=2pt, yshift=-2pt] {$\scriptstyle 1$} (N0);
  \path (eps) edge node[pos=.65, xshift=-2pt, yshift=-2pt] {$\scriptstyle 2$} (N1);
  \node[box] (epsA) [right of=eps, node distance=60mm] {$\phantom{b}2, {\color{orange}1}, 2, 1, 2, 1\phantom{b}$};
  \node[box, draw=red, thick] (N0A) [below of=epsA, xshift=-10mm] {$\phantom{b}\Purple_2, \Tangerine_1, \Purple_1\phantom{b}$};
  \node[box, draw=blue, thick] (N1A) [below of=epsA, xshift=10mm] {$\phantom{b}\Blue_3, \Blue_2, \Blue_1\phantom{b}$};
  \path (epsA) edge node[swap, pos=.65, xshift=2pt, yshift=-2pt] {$\scriptstyle 1$} (N0A);
  \path (epsA) edge node[pos=.65, xshift=-2pt, yshift=-2pt] {$\scriptstyle 2$} (N1A);
  \node [right of=eps, node distance=30mm, yshift=-5mm] {$\xrightarrow{\push (\Tangerine_1)}$};
\end{tikzpicture}
\end{center}
\caption{A PIFO tree allows buffered packets to be reordered.}%
\label{fig:pifo_leaf_push}
\end{figure}

To illustrate how a PIFO tree operates, we now show how this model can be used to tackle the problem posed by the $\Red$/$\Blue$/$\Tangerine$/$\Purple$ traffic scenario discussed prior.
The PIFO tree that we will use is depicted in \Cref{fig:pifo_leaf_push} on the left.
Every node in this tree carries a PIFO, which we render, as before, with the most favorably ranked item on the right.
An internal node carries transmission opportunities for its children (here, indexed $1$ and $2$), while leaf nodes carry packets in their PIFOs.
The tree on the left holds our original five packets, before the arrival of $\Tangerine_1$. We maintain the left leaf as the $\Red$ leaf and the right leaf as the $\Blue$ leaf; we show this with colored borders.

Dequeueing the tree on the left triggers two steps.
First, we dequeue an element from the PIFO in the root; in this case we obtain $1$, a reference to the left child.
Second, we look at the left leaf, where dequeueing the PIFO returns the packet $\Purple_1$.
We emit this packet.
In general, the $\pop$ operation recurses into a tree until it reaches a leaf, following the path guided by the references popped from the internal nodes.
Repeating $\pop$ until the tree on the left is empty would yield the sequence $\Blue_3, \Blue_2, \Purple_2, \Blue_1, \Purple_1$, meaning that the tree on the left matches the setup of our problematic example.

Now $\Tangerine_1$ arrives. To $\push$ $\Tangerine_1$ into the tree, we perform two steps:
\begin{enumerate}
    \item
    Since $\Tangerine_1$ belongs to the $\Red$ flow, we insert it into the PIFO of the left leaf.
    We must assign~$\Tangerine_1$ a rank; as noted before, $\Tangerine_1$ needs to be ranked more favorably than $\Purple_2$, but either position relative to $\Purple_1$ is allowed.
    We place~$\Tangerine_1$ after~$\Purple_1$ in an attempt to stay closer to arrival order.
    \item
    We enqueue a \textcolor{orange}{$1$} index (shown in orange because it was enqueued on account of a $\Tangerine$ packet) into the root node.
    We get to pick {\color{orange}$1$}'s rank before inserting it into the root PIFO;\@ we choose a rank that maintains 1:1 harmony between the $1$s and $2$s.
\end{enumerate}
The resulting tree is on the right in \Cref{fig:pifo_leaf_push}.
Note how repeatedly popping elements from this tree gives the desired ordering $\Blue_3, \Purple_2, \Blue_2, \Tangerine_1, \Blue_1, \Purple_1$.
Critically, when the {\color{orange}$1$} at the root is popped, it causes the release of~$\Purple_2$ and not~$\Tangerine_1$.
This is because the index $1$ refers to an \emph{opportunity} for the left child to propose a packet, not to any individual packet enqueued there.
Indeed, the packet $\Tangerine_1$ has already been released by the time {\color{orange}$1$} is dequeued.
The release of $\Tangerine_1$ was triggered by an older $1$ index.

As before, we \emph{just} used PIFOs---at enqueue, each item was assigned a rank and inserted without affecting the relative order of the older items in the PIFO---but the on-the-fly target switching that we have just seen, in which an index enqueued on account of one packet may eventually lead to the release of a different packet, is elegant and powerful.
This is how PIFO trees facilitate the relative reordering of buffered packets despite the limitations of their relatively simplistic primitives.

\subsection{Implementations and Expressiveness}

A practical implementation of PIFO trees needs to operate at line rate---state-of-the-art devices achieve tens of terabits per second, which corresponds to 10 billion operations per second~\cite{tofino}.
\citet{Sivaraman16} proposed a hardware implementation, which overlays the PIFO tree on a mesh of interconnected hardware PIFOs, and a subset of this mesh is used to pass packets and metadata between parents and children on each operation.
This is a very flexible approach, capable of accommodating more or less\footnote{The number of different scheduling policies to control each internal node is also limited; refer to op.\ cit.\ for further details.} any PIFO tree whose number of internal nodes and leaves does not exceed the number of hardware PIFOs.
However, a fully connected mesh comes at a cost: it essentially requires the hardware to be configurable in terms of the connections that are used, which induces overhead in terms of performance, chip surface, cost, and complexity.

One could imagine a refinement of this hardware model, where the topology of the tree is fixed.
This raises the question: \emph{how much would a fixed topology of PIFO trees limit packet scheduling?}
In this motivating section we have shown that a single PIFO (i.e., a PIFO tree consisting of just one leaf) is less expressive than a two-level PIFO tree, but it is not immediately clear how this distinction generalizes.
Conversely, \emph{is it possible for one PIFO tree to express the behavior of another, even if their topologies differ?}
A constructive proof of such a correspondence would yield a compilation procedure, which could then be exploited to implement a user-designed PIFO tree on fixed hardware.

\subsection{Technical Contributions}

Our aim is to undertake a comprehensive study of PIFO trees as a semantic model for programmable packet queues.
Our technical contributions, and the remainder of this paper, are organized as follows.
\begin{itemize}
    \item[\S\ref{sec:synsem}.]
    We provide a rigorous model of the PIFO tree data structure, including its contents, operations, and well-formedness conditions.
    \item[\S\ref{SEC:LIMITS}.]
    Using this model, we formally prove that PIFO trees with $n$ leaves are less expressive than PIFO trees with $m$ leaves when $n < m$.
    By extension, this means that if one looks at the class of PIFO trees of degree $k$, taller PIFO trees are strictly more expressive.
    \item[\S\ref{SEC:EMBEDDING}.]
    We develop the notion of \emph{(homomorphic) embedding} as a tool for showing that the scheduling behavior of one PIFO tree can be replicated by another.
    \item[\S\ref{sec:embedding_algs}.]
    We propose two procedures to compute embeddings between PIFO trees, if they exist.
    These algorithms map a packet scheduling architect's scheduling policy, written against whatever tree the architect finds intuitive, onto the fixed PIFO tree actually implemented in hardware.
    \item[\S\ref{sec:implementation}.]
    We implement a simulator for PIFO tree behavior, as well as the algorithm from \S\ref{sec:embedding_algs}.
    We use this simulator to compare PIFO trees against their embeddings, as well as against standard algorithms implemented on a state-of-the-art programmable hardware switch.
\end{itemize}

\section{Structure and Semantics}%
\label{sec:synsem}

We now give a more formal definition of the syntax and semantics of PIFO trees.
Let's fix some notation.
When $K$ is a set and $n \in \mathbb{N}$, we write $K^n$ for the set of $n$-element lists over $K$.
We denote such lists in the plural (e.g., $\vec{k} \in K^n$), write $|\vec{k}|$ for the list length $n$, read the $i$-th element (for $1 \leq i \leq n$) with $\vec{k}[i]$, and write $\vec{k}[k/i]$ for the list $\vec{k}$ with the $i$-th element replaced by~$k$ $\in K$.

\subsection{Structure}
Since the topologies of PIFO trees are important for our results, we isolate them into a separate type, which we will momentarily use as a parameter.
Conceptually, these are very straightforward: a tree topology is just a finite tree that does not hold any data.
\begin{definition}
The set of \emph{(tree) topologies}, denoted $\Topo$, is the smallest set satisfying the rules:
\begin{mathpar}
    \inferrule{~}{%
        * \in \Topo
    }
    \and
    \inferrule{%
        n \in \mathbb{N} \\
        \vec{t} \in \Topo^n
    }{%
        \mathsf{Node}(\vec{t}) \in \Topo
    }
\end{mathpar}
Here, $*$ is the topology of a single-node tree.
Given a list $\vec{t}$ of $n$ topologies, $\mathsf{Node}(\vec{t})$ is the topology of a node with $\vec{t}$ as its children; the topology of the $i$-th child of $\mathsf{Node}(\vec{t})$ is given by $\vec{t}[i]$.
\end{definition}

\begin{example}
The topology of the PIFO tree from \Cref{fig:pifo_leaf_push} is given by $\mathsf{Node}(ts)$, where $ts$ is a two-element list of topologies, with both entries equal to $*$.
\end{example}

As mentioned, PIFO trees are trees where each node holds a PIFO\@.
A \emph{leaf} node uses its PIFO to hold packets, while an \emph{internal} node carries (1) a list of its children, and (2) a PIFO that holds valid references to those children.
We can now formally define the set of PIFO trees of a given topology.

\begin{definition}
We fix an opaque set $\mathsf{Pkt}$ of \emph{packets}, and a totally ordered set $\mathsf{Rk}$ of \emph{ranks}.
For any set $S$, we also presuppose a set $\mathsf{PIFO}(S)$ of PIFOs holding values from $S$ and ordered by $\mathsf{Rk}$.

The set of \emph{PIFO trees} of a topology $t \in \Topo$, denoted $\mathsf{PIFOTree}(t)$, is defined inductively by
\begin{mathpar}
    \inferrule{%
        p \in \mathsf{PIFO}(\mathsf{Pkt})
    }{%
        \mathsf{Leaf}(p) \in \mathsf{PIFOTree}(*)
    }
    \and
    \inferrule{%
        n \in \mathbb{N} \\
        \vec{t} \in \Topo^n \\
        p \in \mathsf{PIFO}(\{1, \dots, n\}) \\\\
        \forall 1 \leq i \leq n.\ \vec{q}[i] \in \mathsf{PIFOTree}(\vec{t}[i])
    }{%
        \mathsf{Internal}(\vec{q}, p) \in \mathsf{PIFOTree}(\mathsf{Node}(\vec{t}))
    }
\end{mathpar}
Here, $\mathsf{Leaf}(p)$ is a PIFO tree consisting of just one leaf holding the packet-PIFO $p$, and $\mathsf{Internal}(\vec{q}, p)$ is a PIFO tree whose root node holds $p$, an index-PIFO, as well as $\vec{q}$, a list of PIFO tree children.
\end{definition}

\subsection{Semantics}
We can now define the partial function $\pop$, which takes a PIFO tree and outputs a packet and an updated PIFO tree.
We assume that PIFOs themselves support a partial function $\pifopop: \mathsf{PIFO}(S) \rightharpoonup S \times \mathsf{PIFO}(S)$, which returns the most favorably ranked element of the PIFO and an updated PIFO\@.

\begin{definition}
For all topologies $t \in \Topo$, define $\pop: \mathsf{PIFOTree}(t) \rightharpoonup \mathsf{Pkt} \times \mathsf{PIFOTree}(t)$ by
\begin{mathpar}
    \inferrule{%
        \pifopop (p) = (\mathit{pkt}, p')
    }{%
        \pop (\mathsf{Leaf}(p)) = (\mathit{pkt}, \mathsf{Leaf}(p'))
    }
    \and
    \inferrule{%
        \pifopop (p) = (i, p') \and
        \pop (\vec{q}[i]) = (\mathit{pkt}, q')
    }{%
        \pop (\mathsf{Internal}(\vec{q}, p)) = (\mathit{pkt}, \mathsf{Internal}(\vec{q}[q'/i], p'))
    }
\end{mathpar}
In rules of this form, here and hereafter, the types of the variables are inferred from context.
For example,~$p$ is a packet-PIFO in the rule on the left but an index-PIFO in the rule on the right.
\end{definition}

When run on a leaf, $\pop$ simply applies $\pifopop$ to its PIFO, returning the released packet and the updated node.
When $\pop$ is run on an internal node, it applies $\pifopop$ to its PIFO;\@ the returned value is an index $i$ pointing to a child node, on which $\pop$ is called recursively.
The packet returned by the $i$-th child is also the present node's answer, along with a PIFO tree reflecting the effects of $\pop$ping.

It is worth pointing out that $\pop$ is not a total function: it may be the case that $\pop(q)$ is undefined, most obviously when when $q$ is a leaf or internal node with an empty PIFO (hence the call to $\pifopop$ is undefined).
This can also cascade up the tree, e.g., when an internal node's PIFO points to its $i$-th child, but the recursive $\pop$ call fails there.
We expand on how to prevent this in \Cref{sec:wf}.

\smallskip
To define the operation $\push$, which inserts a packet into a PIFO tree, we will diverge slightly from \citet{Sivaraman16}.
There, a PIFO tree chooses a leaf to enqueue the packet; the algorithm then walks from this leaf to the root and enqueues a downward reference at each node, where the rank of each enqueue is computed by a scheduling transaction attached to each non-root node.
In our treatment, we assume that $\push$ is supplied with a \emph{path} that contains all of this information precomputed; the path is constructed by an external \emph{control}, which we will define in a moment.

\begin{definition}
The set of paths for a topology $t \in \Topo$, denoted $\mathsf{Path}(t)$, is defined as follows.
\begin{mathpar}
    \inferrule{%
        r \in \mathsf{Rk}
    }{%
        r \in \mathsf{Path}(*)
    }
    \and
    \inferrule{%
        \vec{t} \in \Topo^n \and
        1 \leq i \leq n \and
        r \in \mathsf{Rk} \and
        pt \in \mathsf{Path}(\vec{t}[i])
    }{%
        (i, r) :: \mathit{pt} \in \mathsf{Path}(\mathsf{Node}(\vec{t}))
    }
\end{mathpar}
\end{definition}

Intuitively, a path is a list $(i_1, r_1) :: \dots :: (i_n, r_n) :: r_{n+1}$, where the first $n$ elements $(i_j, r_j)$ contain the index $i_j$ of the next child and the rank $r_j$ at which this index should be enqueued, and the last element $r_{n+1}$ is the rank with which the packet should be enqueued in the leaf PIFO\@.
For example, the path $(1, 10) :: (3, 5) :: 6$ means that the scheduling transaction wants us to perform three steps:
\begin{enumerate}
    \item
    Enqueue the index $1$ in the root node with rank $10$.
    \item
    Enqueue the index $3$ in the first child of the root node with rank $5$.
    \item
    Enqueue the packet itself in the third child of the first child of the root node with rank $6$.
\end{enumerate}

\noindent
We now define the function $\push$, which acts on a path for the topology of a PIFO tree.
We assume each PIFO admits a function $\pifopush: \mathsf{PIFO}(S) \times S \times \mathsf{Rk} \to \mathsf{PIFO}(S)$, which takes a PIFO, an element, and a rank, and returns an updated PIFO with the element enqueued at the given rank.

\begin{definition}
Let $t \in \Topo$.
For all $\mathit{pkt} \in \mathsf{Pkt}$, we define the function $\push: \mathsf{PIFOTree}(t) \times \mathsf{Pkt} \times \mathsf{Path}(t)\to \mathsf{PIFOTree}(t) $ inductively, as follows (types are inferred from context, as before).
\begin{mathpar}
    \inferrule{%
        \pifopush (p, \mathit{pkt}, r) = p'
    }{%
        \push (\mathsf{Leaf}(p), \mathit{pkt}, r) = \mathsf{Leaf}(p')
    }
    \and
    \inferrule{%
        \push (\vec{q}[i], \mathit{pkt}, \mathit{pt}) = q' \and
        \pifopush (p, i, r) = p'
    }{%
        \push (\mathsf{Internal}(\vec{q}, p), \mathit{pkt}, (i, r) :: pt) =
            \mathsf{Internal}(\vec{q}[q'/i], p')
    }
\end{mathpar}
\end{definition}

When the node is a leaf, $\push$ simply enqueues the packet into the PIFO using the given rank.
When the node is internal, it enqueues an index into the appropriate child into its own PIFO, and recurses by calling $\push$ on the appropriate child with an appropriately shortened path (in either order).
If $\push$ succeeds on the child, an altered version of the child is returned; we functionally update the present node to reflect its new index-PIFO and list of PIFO tree children.

%

\medskip
We model the scheduling algorithm that is run on the tree separately in a \emph{control}, which keeps track of (1) a state from a fixed set $\mathsf{St}$ of states, as well as (2) a \emph{scheduling transaction} that decides on a path (of the right topology) for each incoming packet, while possibly updating the state.

\begin{definition}
Let $t \in \Topo$.
A \emph{control} over $t$ is a triple $(s, q, z)$, where $s \in \mathsf{St}$ is the \emph{current state}, $q$ is a PIFO tree of topology $t$, and $z: \mathsf{St} \times \mathsf{Pkt} \to \mathsf{Path}(t) \times \mathsf{St}$ is a function called the \emph{scheduling transaction}.
We write $\mathsf{Control}(t)$ for the set of controls over $t$.
\end{definition}

\begin{remark}
To recover the presentation of the node-bound scheduling transactions from~\citet{Sivaraman16}, one can just project the return value of $z$ to obtain a rank and child index for each node (updating the state only in the scheduling transaction for the root node).
Conversely, individual scheduling transactions for each node can be glued together into a monolithic one that has the same effect.
For the sake of brevity, we do not expand on these constructions.
\end{remark}

\subsection{Well-Formedness}%
\label{sec:wf}

From the development preceding, it should be clear that the $\pop$ and $\push$ operations do not change a PIFO tree's topology.
There is, however, one more important invariant which these operations maintain, but which we have not yet discussed.
This has to do with the \emph{validity} of references carried by the internal nodes.
Put simply, the $\pop$ and $\push$ operations maintain that if the PIFO of an internal node carries $n$ references to its $m$-th child, then the leaves below that child carry exactly~$n$ packets.
This prevents $\pop$ from ``getting stuck'' as it traverses a non-empty tree looking for a packet.
We now formalize this notion in a typing relation, as follows.

\begin{definition}
Let $S$ be a set, and $p \in \mathsf{PIFO}(S)$.
We write $|p|$ for the size of $p$, which is the number of elements it holds; furthermore, when $s \in S$, we write $|p|_s$ for the number of times $s$ occurs in $p$.

Let $q$ be a PIFO tree.
We write $|q|$ for the \emph{size} of $q$, which is defined as the number of packets enqueued at the leaves.
Formally, $|\cdot|: \mathsf{PIFOTree}(t) \to \mathbb{N}$ is defined for every $t \in \Topo$ by
\begin{mathpar}
    |\mathsf{Leaf}(p)| = |p|
    \and
    |\mathsf{Internal}(\vec{q}, p)| = |\vec{q}[1]| + \cdots + |\vec{q}[n]| \quad (n = |\vec{q}|)
\end{mathpar}
We say that $q$ is \emph{well-formed}, denoted $\vdash q$, if it adheres to the following rules.
\begin{mathpar}
    \inferrule{~}{%
        \vdash \mathsf{Leaf}(p)
    }
    \and
    \inferrule{%
        \forall 1 \leq i \leq |\vec{q}|.\ {\vdash \vec{q}[i]} \wedge {|p|_i = |\vec{q}[i]|}
    }{%
        \vdash \mathsf{Internal}(\vec{q}, p)
    }
\end{mathpar}
\end{definition}

Intuitively, leaves are well-formed, and an internal node is well-formed if, for all legal values of child-index~$i$, the~$i$-th child is itself a well-formed PIFO tree, and the number of times~$i$ occurs in the index-PIFO~$p$ is equal to the number of packets held by the~$i$-th child.

With all of these notions in place, we are ready to formally state the invariants for PIFO trees.
\begin{lemma}%
\label{lemma:sanity-check}
Let $t \in \Topo$ and $q \in \mathsf{PIFOTree}(t)$ and $\mathit{pkt} \in \mathsf{Pkt}$.
The following hold.
\begin{enumerate}[(i)]
    \item
    If $pt \in \mathsf{Path}(t)$, then $\push(q, \mathit{pkt}, pt)$ is well-defined, and $\vdash \push(q, \mathit{pkt}, pt)$.
    \item
    If $|q| > 0$ and $\vdash q$, then $\pop(q)$ is well-defined, and $\vdash q'$ where $\pop(q) = (\mathit{pkt}, q')$.
\end{enumerate}
\end{lemma}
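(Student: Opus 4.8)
The plan is to prove both parts by induction on the structure of the PIFO tree $q$ (equivalently, on $t \in \Topo$): since $\push$, $\pop$, the grammar of $\mathsf{Path}(t)$, and the well-formedness judgment $\vdash(\cdot)$ all recurse on exactly this structure, the two cases of the induction ($\mathsf{Leaf}$ and $\mathsf{Internal}$) line up with the clauses of each definition. Two preparatory moves make this go through. First, I would pin down the properties of the PIFO primitives I rely on: $\pifopush(p,s,r)$ is always defined, increments $\len{p}$ and $\len{p}_s$ by one, and leaves $\len{p}_{s'}$ fixed for $s' \neq s$; while $\pifopop(p)$ is defined exactly when $\len{p} > 0$, in which case it returns some $s$ that actually occurs in $p$ together with a PIFO in which $\len{p}$ and $\len{p}_s$ have each decreased by one. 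Second, I would strengthen the induction hypothesis to also track size: in part (i), $\len{\push(q,\mathit{pkt},pt)} = \len{q}+1$; in part (ii), $\len{q'} = \len{q}-1$. These size equations are exactly what is needed to re-verify the arithmetic premise $\len{p}_i = \len{\vec{q}[i]}$ of the internal-node well-formedness rule after each update.

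\emph{Part (i).} The leaf case is immediate: $pt$ is then a rank, the leaf clause of $\push$ fires, the result is a leaf (hence well-formed), and its size is one larger. In the case $q = \mathsf{Internal}(\vec{q},p)$, membership $pt \in \mathsf{Path}(\mathsf{Node}(\vec{t}))$ forces $pt = (i,r) :: pt'$ with $pt' \in \mathsf{Path}(\vec{t}[i])$, so the induction hypothesis applies to the child $\vec{q}[i]$ and residual path $pt'$ and produces a well-defined, well-formed $q'$ with $\len{q'} = \len{\vec{q}[i]}+1$; the internal clause then yields $\mathsf{Internal}(\vec{q}[q'/i], \pifopush(p,i,r))$, whose two well-formedness premises I would check branch by branch --- the off-path children and their index counts are untouched, and for the on-path child the $+1$ on the PIFO side matches the $+1$ on the subtree side. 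Here a subtlety arises: discharging the premises for the off-path children $j \neq i$ requires $\vdash \vec{q}[j]$ and $\len{p}_j = \len{\vec{q}[j]}$, i.e.\ it requires $\vdash q$. I would therefore carry $\vdash q$ as a hypothesis of (i) too, matching (ii); this costs nothing, since these operations are only ever applied starting from the empty --- hence well-formed --- tree.

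\emph{Part (ii).} Again by induction on $q$, now with $\len{q} > 0$ and $\vdash q$ available. The leaf case follows directly from the PIFO properties: $\len{q} = \len{p} > 0$, so $\pifopop(p)$ is defined, hence $\pop(\mathsf{Leaf}(p))$ is defined, the result is well-formed, and its size is $\len{q}-1$. In the case $q = \mathsf{Internal}(\vec{q},p)$, well-formedness gives $\len{p}_i = \len{\vec{q}[i]}$ for every $i$, hence $\len{p} = \sum_i \len{p}_i = \sum_i \len{\vec{q}[i]} = \len{q} > 0$; so $\pifopop(p) = (i,p')$ is defined and the popped $i$ actually occurs in $p$, whence $\len{\vec{q}[i]} = \len{p}_i > 0$. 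Since also $\vdash \vec{q}[i]$, the induction hypothesis applies to the child and delivers a well-defined $\pop(\vec{q}[i]) = (\mathit{pkt}, q')$ with $\vdash q'$ and $\len{q'} = \len{\vec{q}[i]}-1$; the internal clause then returns $\mathsf{Internal}(\vec{q}[q'/i], p')$, which is well-formed by exactly the branch-by-branch check of part (i) with a matching $-1$ on both sides of the on-path child, and has size $\len{q}-1$.

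The one step that is not pure bookkeeping is this last argument in part (ii): showing that $\pop$ does not get stuck, i.e.\ that the subtree named by the index popped at the root actually holds a packet. That is where well-formedness earns its keep --- it is the ingredient that converts ``$i$ occurs in $p$'' into ``$\vec{q}[i]$ holds $\len{p}_i > 0$ packets'' --- and it is why well-formedness must be threaded through the induction and proved jointly with the size equation, rather than checked after the fact. Everything else reduces to the size arithmetic once that invariant is being maintained.
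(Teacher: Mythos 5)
Your proof follows the same route as the paper --- induction on $t$, with the $\mathsf{Leaf}$ and $\mathsf{Internal}$ cases tracking the clauses of $\push$, $\pop$, $\mathsf{Path}(t)$, and the well-formedness judgment --- and the details you fill in (the assumed properties of the PIFO primitives and the strengthened induction hypothesis with size equations) are exactly what the paper's terse ``by (dependent) induction on $t$'' elides. One thing worth flagging: your observation in part (i) that re-verifying well-formedness of the off-path children $j \neq i$ requires $\vdash \vec{q}[j]$ and $|p|_j = |\vec{q}[j]|$, i.e.\ requires $\vdash q$ as a standing hypothesis, is a genuine correction to the lemma as stated in the paper (without $\vdash q$, one can readily construct a malformed $q$ for which $\push$ produces another malformed tree). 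This omission is benign in practice --- as you note, $\push$ is only ever iterated starting from the empty, hence well-formed, tree --- but it is a real gap in the paper's statement, and your identification of it is correct.
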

\begin{proof}[Proof sketch]
Both of these follow by (dependent) induction on $t$.
\end{proof}

\section{Limits to Expressiveness}%
\label{SEC:LIMITS}

We have already suggested that some PIFO trees can express more policies than others.
This section shows formally that, in general, PIFO trees with more leaves are more expressive.
In turn, this tells us that a fixed-topology hardware implementation of PIFO trees needs to support a sufficient number of leaves if it is to express policies of practical interest.

Throughout this section, we use word notation for finite sequences of varying length, to contrast with the list notation used for sequences of fixed length.
For instance, we will use juxtaposition to construct lists of packets $pkt_3pkt_2pkt_1$ and write $\cdot$ for the concatenation operator.
The constant $\epsilon$ denotes an empty sequence, and $S^*$ is the set of all finite sequences over elements of a set $S$.
\ifarxiv%
For the sake of brevity, most proofs relating to this section are deferred to \Cref{appendix:proofs-limits}.
\else%
For the sake of brevity, most proofs are deferred to the extended version~\cite[Appendix A]{arxiv}.
\fi%

\subsection{Simulation}
Up to this point, we have been imprecise about what it means for one PIFO tree to replicate the behavior of another, but now we need a formal notion.
To this end, we instantiate a tried and true idea from process algebra~\cite{milner-1971}, and more generally, coalgebra~\cite{rutten-2000}.

Intuitively, we wish to ask whether, given PIFO trees~$q_1$ and~$q_2$, any $\pop$ (resp.\ $\push$) performed on~$q_1$ can be mimicked by a $\pop$ (resp.\ $\push$) on~$q_2$, such that~$q_1$ and~$q_2$ schedule packets identically.

\begin{definition}
Let $t_1, t_2 \in \Topo$.
We call a relation $R \subseteq \mathsf{PIFOTree}(t_1) \times \mathsf{PIFOTree}(t_2)$ a \emph{simulation} if it satisfies the following conditions, for all $pkt \in \mathsf{Pkt}$ and $q_1 \mathrel{R} q_2$:
\begin{enumerate}
    \item
    If $\pop(q_1)$ is undefined, then so is $\pop(q_2)$.
    \item
    If $\pop(q_1) = (pkt, q_1')$, then $\pop(q_2) = (pkt, q_2')$ such that $q_1' \mathrel{R} q_2'$.
    \item
    For all $pt_1 \in \mathsf{Path}(t_1)$, there exists a $pt_2 \in \mathsf{Path}(t_2)$ such that
    \[
        \push(q_1, pkt, pt_2) \; \mathrel{R} \; \push(q_2, pkt, pt_2).
    \]
\end{enumerate}
If such a simulation exists, we say that $q_1$ \emph{is simulated by} $q_2$, and we write $q_1 \preceq q_2$.
\end{definition}
It follows that, if~$c_1$ is a control for~$q_1$, then there exists a control~$c_2$ for~$q_2$ that can make~$q_2$ behave in the same way as~$q_1$.
Conversely, if~$q_1$ is \emph{not} simulated by~$q_2$, then there exists some control~$c_1$ for~$q_1$ whose scheduling decisions cannot be replicated by \emph{any} control for~$q_2$.

\begin{example}
To build intuition, let us construct a simple simulation between two trees of the same topology.
Let ${t \in \Topo}$, and let ${s: \mathsf{Rnk} \to \mathsf{Rnk}}$ be some monotone function on $\mathsf{Rnk}$.
We define $R_s \subseteq \mathsf{PIFOTree}(t) \times \mathsf{PIFOTree}(t)$ as the relation where $q_1 \mathrel{R_s} q_2$ if and only if $q_1$ and $q_2$ agree on the contents of PIFOs in corresponding (leaf or internal) nodes, except that if $r$ is the rank of an item in $q_1$, then $s(r)$ is the rank of that item in $q_2$.
It is then easy to show that $R_s$ is a simulation; in particular, when $pkt \in \mathsf{Pkt}$ and $pt_1 \in \mathsf{Path}(t)$ such that $\push(q_1, pkt, pt_1) = q_1'$, we can apply~$s$ to all the ranks in~$pt_1$ to obtain~${pt_2 \in \mathsf{Path}(t)}$, and note that ${q_1' \;\; R_s \; \push(q_2, pkt, pt_2)}$.
\end{example}

\subsection{Snapshots and Flushes}

Showing that a relation on PIFO trees is a simulation is not always trivial, but typically doable.
In contrast, showing that a PIFO tree $q_1$ \emph{cannot} be simulated by a PIFO tree $q_2$ is a different endeavor altogether.
To argue this formally, we have to find a property that remains true for $q_1$ over a sequence of actions, but is eventually falsified when mimicking those actions on $q_2$.
We now introduce \emph{snapshots} and \emph{flushing} as tools to formalize such invariants, and show how the two relate.

A snapshot tells us about the relative order of packets already fixed by leaf nodes alone, but completely disregards the ordering dictated by the indices in the tree's internal PIFOs.

\begin{definition}
Let $p \in \mathsf{PIFO}(S)$ be a PIFO over some set $S$.
We write $\textsc{flush}(p)$ for the sequence of elements (from $S^*$) retrieved by repeatedly calling $\pifopop$ on the PIFO $p$ until it is empty, with the last $\pifopop$ped element on the far left.
Furthermore, let~$q$ be a PIFO tree.
We inductively define the \emph{snapshot} of~$q$, denoted $\mathsf{snap}(q)$, as a sequence of sequences given by:
\begin{align*}
    \mathsf{snap}(\mathsf{Leaf}(p)) &= [\textsc{flush}(p)] \\
    \mathsf{snap}(\mathsf{Internal}(\vec{q}, p)) &= \mathsf{snap}(\vec{q}[1]) \mathop{+\!+} \cdots \mathop{+\!+} \mathsf{snap}(\vec{q}[n]) \quad (n = |q|)
\end{align*}
To prevent confusion between the different levels of finite sequences, we write $[x]$ to denote a single-element list containing $x$, and denote list concatenation on the upper level by $+\!+$.
\end{definition}

A more delicate concept, $\flush$, \emph{does} take into account the indices enqueued in the tree's internal PIFOs: it yields the sequence obtained by $\pop$ping a well-formed PIFO tree until it is empty.
Keeping with convention, $\flush$ puts the most favorably ranked packet in the PIFO at the right of its output.

\begin{definition}
Let $t \in \Topo$ and let $q \in \mathsf{PIFOTree}(t)$ be a well-formed PIFO tree.
We define $\flush(q)$ as a list of packets by induction on $|q|$, as follows:
\begin{mathpar}
    \inferrule{%
        |q| = 0
    }{%
        \flush(q) = \epsilon
    }
    \and
    \inferrule{%
        |q| > 0 \and
        \pop(q) = (pkt, q')
    }{%
        \flush(q) = \flush(q') \cdot pkt
    }
\end{mathpar}
\end{definition}

\begin{restatable}{lemma}{restateflushinterleave}%
\label{lemma:flush-interleave}
If $q$ is a well-formed PIFO tree, $\flush(q)$ is an interleaving of the lists in $\mathsf{snap}(q)$.
\end{restatable}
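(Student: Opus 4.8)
The plan is to argue by structural induction on $q$, isolating the crux as an auxiliary claim about how $\flush$ distributes over the children of an internal node. In the base case $q = \mathsf{Leaf}(p)$, a short induction on $|p|$ shows $\flush(\mathsf{Leaf}(p)) = \textsc{flush}(p)$ — flushing a leaf just drains its PIFO, and both $\flush$ and $\textsc{flush}$ place the first element popped on the far right — so since $\mathsf{snap}(\mathsf{Leaf}(p)) = [\textsc{flush}(p)]$ is a list of lists containing the single list $\textsc{flush}(p)$, and the only interleaving of one list is that list itself, the claim holds.

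For the inductive case $q = \mathsf{Internal}(\vec{q}, p)$, well-formedness of $q$ makes each child $\vec{q}[i]$ well-formed, so the induction hypothesis gives that each $\flush(\vec{q}[i])$ is an interleaving of the lists in $\mathsf{snap}(\vec{q}[i])$. It therefore suffices to prove the \emph{sublemma}: for a well-formed internal tree, $\flush(\mathsf{Internal}(\vec{q}, p))$ is an interleaving of $[\flush(\vec{q}[1]), \dots, \flush(\vec{q}[n])]$ with $n = |\vec{q}|$. Given the sublemma, the lemma follows from two routine facts about interleavings, each proved by induction on list length: (i) if $w$ is an interleaving of $[L_1, \dots, L_n]$ and each $L_j$ is an interleaving of a list of sublists $\ell_j$, then $w$ is an interleaving of $\ell_1 \mathop{+\!+} \cdots \mathop{+\!+} \ell_n$; and (ii) appending one element to the end of an interleaving and to the end of the corresponding component list again yields an interleaving. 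Fact (i), applied to ``$\flush(q)$ interleaves the $\flush(\vec{q}[i])$'' together with the induction hypothesis, gives that $\flush(q)$ interleaves $\mathsf{snap}(\vec{q}[1]) \mathop{+\!+} \cdots \mathop{+\!+} \mathsf{snap}(\vec{q}[n]) = \mathsf{snap}(q)$, as required.

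I would prove the sublemma by induction on $|\mathsf{Internal}(\vec{q}, p)|$. If the size is $0$, well-formedness forces every $\vec{q}[i]$ to be empty, so both sides are $\epsilon$. If the size is positive, \Cref{lemma:sanity-check} makes $\pop(\mathsf{Internal}(\vec{q}, p)) = (pkt, q')$ well-defined, with $q' = \mathsf{Internal}(\vec{q}[q''/i], p')$ where $\pifopop(p) = (i, p')$ and $\pop(\vec{q}[i]) = (pkt, q'')$, and $q'$ again well-formed and of size one less. The induction hypothesis on $q'$ yields that $\flush(q')$ is an interleaving of $[\flush(\vec{q}[1]), \dots, \flush(q''), \dots, \flush(\vec{q}[n])]$, with $\flush(q'')$ in position $i$. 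Well-formedness further gives $|\vec{q}[i]| = |p|_i \geq 1$, so $\flush(\vec{q}[i]) = \flush(q'') \cdot pkt$ by definition of $\flush$; that is, $pkt$ is exactly the element extending $\flush(q'')$ to $\flush(\vec{q}[i])$. Hence, by fact (ii), $\flush(\mathsf{Internal}(\vec{q}, p)) = \flush(q') \cdot pkt$ is an interleaving of $[\flush(\vec{q}[1]), \dots, \flush(\vec{q}[i]), \dots, \flush(\vec{q}[n])]$, completing the sublemma.

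The main obstacle is the bookkeeping in the sublemma's inductive step: keeping track of which component list the just-popped packet came from, and checking that it attaches at the correct (rightmost) end of that list. This is precisely where well-formedness earns its keep — once to ensure $\pop$ never gets stuck (so the size induction is well-founded and $q'$ remains well-formed) and once to ensure the child $\vec{q}[i]$ selected by the root's PIFO is nonempty (so $\flush(\vec{q}[i])$ really does end in $pkt$). Everything else — the outer structural induction and the two interleaving facts — is routine by comparison.
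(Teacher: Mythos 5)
Your proof is correct, and it locates well-formedness exactly where it matters: \Cref{lemma:sanity-check} keeps the inner size induction well-founded by guaranteeing $q'$ is again well-formed, and the equation $|\vec{q}[i]| = |p|_i \geq 1$ ensures the child selected by the root PIFO is nonempty, so $pkt$ really is the element extending $\flush(q'')$ to $\flush(\vec{q}[i])$ at the right end; fact~(ii) then closes the sublemma's inductive step, and fact~(i) together with $\mathsf{snap}(q) = \mathsf{snap}(\vec{q}[1]) \mathop{+\!+} \cdots \mathop{+\!+} \mathsf{snap}(\vec{q}[n])$ lifts the sublemma to the full claim. Two small remarks on exposition. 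You write that ``given the sublemma, the lemma follows from two routine facts,'' but only fact~(i) is used at that stage; fact~(ii) lives entirely inside the sublemma's proof. And the two-level induction (structural on $q$ outside, on $|q|$ inside) could be collapsed into a single induction on $|q|$ tracking the definition of $\flush$: a single $\pop$ removes the head of exactly one leaf PIFO (a short auxiliary structural lemma), so $\mathsf{snap}(q)$ differs from $\mathsf{snap}(q')$ only by appending $pkt$ to the right end of one component list, and fact~(ii) gives the claim directly from the hypothesis on $q'$. Neither formulation is simpler in any essential way — your version cleanly separates the shuffle at the root from the recursive shuffles in the children, which is a perfectly good way to organize the bookkeeping.
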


It is possible for PIFO trees to disagree on their snapshots, but still be in a simulation---for instance, a PIFO tree could simulate a PIFO tree of the same topology, while permuting the contents of some of the leaves.
However, well-formed PIFO trees that are in simulation must agree on $\flush$.

\begin{restatable}{lemma}{restatesimflush}%
\label{lemma:sim-flush}
Let $q_1$ and $q_2$ be well-formed PIFO trees.
If $q_1 \preceq q_2$, then $\flush(q_1) = \flush(q_2)$.
\end{restatable}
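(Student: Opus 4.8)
The plan is to fix a single simulation $R$ witnessing $q_1 \preceq q_2$ and argue by induction on the size $|q_1|$ that, for every pair of well-formed PIFO trees related by $R$, the two $\flush$ outputs coincide; at the recursive step the new pair is again related by the very same $R$, so no strengthening of the statement is needed. The engine of the proof is the pair of facts recorded in \Cref{lemma:sanity-check}(ii): $\pop$ is defined on well-formed trees of positive size and it preserves well-formedness. Note that clause (3) of the definition of simulation, which concerns $\push$, is never used here --- $\flush$ is built purely from $\pop$.

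Before the induction I would record a small observation: for a well-formed PIFO tree $q$, $\pop(q)$ is defined \emph{if and only if} $|q| > 0$. The ``if'' direction is exactly \Cref{lemma:sanity-check}(ii). For ``only if'', suppose $|q| = 0$: a leaf of size $0$ carries an empty PIFO; and for $\mathsf{Internal}(\vec{q}, p)$ of size $0$, well-formedness gives $|p|_i = |\vec{q}[i]| = 0$ for every child $i$, using $\sum_i |\vec{q}[i]| = |q| = 0$, so $p$ is empty. In either case $\pifopop$ on the node's own PIFO is undefined, hence so is $\pop$.

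For the base case $|q_1| = 0$: by the observation $\pop(q_1)$ is undefined, so by clause (1) of simulation $\pop(q_2)$ is undefined as well, and since $q_2$ is well-formed the observation (read contrapositively) gives $|q_2| = 0$; hence $\flush(q_1) = \epsilon = \flush(q_2)$. For the inductive step $|q_1| > 0$: \Cref{lemma:sanity-check}(ii) makes $\pop(q_1) = (pkt, q_1')$ well-defined with $\vdash q_1'$, popping strips exactly one packet from the leaves so $|q_1'| = |q_1| - 1$ (a routine induction on $\pop$), and the recursive clause of $\flush$ gives $\flush(q_1) = \flush(q_1') \cdot pkt$. Clause (2) of simulation then yields $\pop(q_2) = (pkt, q_2')$ with the \emph{same} packet $pkt$ and $q_1' \mathrel{R} q_2'$; this coincidence of the released packet is the crucial point. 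Since $\pop(q_2)$ is defined we have $|q_2| > 0$ by the observation, so \Cref{lemma:sanity-check}(ii) gives $\vdash q_2'$ and $\flush(q_2) = \flush(q_2') \cdot pkt$. The induction hypothesis applies to the related well-formed pair $(q_1', q_2')$, yielding $\flush(q_1') = \flush(q_2')$, and therefore $\flush(q_1) = \flush(q_1') \cdot pkt = \flush(q_2') \cdot pkt = \flush(q_2)$.

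The proof is essentially bookkeeping, and the only things requiring care are maintaining the invariant that \emph{both} trees stay well-formed after each $\pop$ --- handled uniformly by \Cref{lemma:sanity-check}(ii) once one checks $|q_2| > 0$ --- together with the two-way link between ``$\pop$ undefined'' and ``size zero'' isolated in the preliminary observation. The fact that $\pop$ removes exactly one packet, needed both for the induction to descend and, implicitly, for $\flush$ to be well-defined in the first place, is the other routine point to confirm.
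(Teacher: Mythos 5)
Your proof is correct and, given the structure of \Cref{lemma:sanity-check} and the definition of $\flush$, is essentially the canonical argument: induction on $|q_1|$, using simulation clauses (1) and (2) together with the preservation of well-formedness under $\pop$. Your preliminary observation that $\pop$ is defined on a well-formed tree if and only if its size is positive (with the ``only if'' direction depending on well-formedness to force the root PIFO of an internal node to be empty when all leaves are) is exactly the glue needed in the base case, and noting that $|q_1'| = |q_1| - 1$ justifies the descent in the inductive step; both are routine but worth stating, as you do.
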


\subsection{Fewer Leaves are Less Expressive}

We now put the formalisms introduced above to work, by constructing, for each topology with~$n$ leaves, a PIFO tree that \emph{cannot} be simulated by any PIFO tree with fewer leaves.
We show that packets in different leaves can always be permuted by a specially chosen sequence of actions, while packets that appear in the same leaf are forced to obey the relative order they have in that leaf.

\begin{definition}
Let $t \in \Topo$.
We write $|t|$ for the \emph{number of leaves} in $t$; formally,
\begin{mathpar}
|{*}| = 1
\and
|\mathsf{Node}(\vec{t})| = |\vec{t}[1]| + \cdots + |\vec{t}[n]| \quad (n = |\vec{t}|)
\end{mathpar}
\end{definition}

\Cref{lemma:flush-interleave} tells us that the $\flush$ of a PIFO tree is an interleaving of the lists in its snapshot.
In particular, when each leaf holds a single element, \emph{any} ordering of the packets in this tree is an interleaving of these snapshots.
Say each leaf indeed holds only one packet.
Can any ordering of packets be achieved?
The following lemma answers this question in the positive, and in the process provides us with a construction to obtain a tree that describes any permutation.

\begin{restatable}{lemma}{restateconstructperm}%
\label{lemma:construct-perm}
Let $t \in \Topo$, and let $P = \{ pkt_1, \dots, pkt_{|t|} \}$ be a set of distinct packets.
For each permutation $\pi$ on $P$, there exists a well-formed PIFO tree $q_\pi \in \mathsf{PIFOTree}(t)$ such that
\begin{mathpar}
\flush(q_\pi) = \pi(pkt_{|t|}) \cdots \pi(pkt_1)
\and
\mathsf{snap}(q_\pi) = [pkt_1, \ldots, pkt_{|t|}]
\end{mathpar}
\end{restatable}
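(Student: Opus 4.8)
The plan is to prove the lemma by induction on the topology $t$, constructing $q_\pi$ and checking its three properties along the way. The base case $t = {*}$ is immediate: then $|t| = 1$, $P = \{pkt_1\}$, the only permutation $\pi$ is the identity, and $q_\pi = \mathsf{Leaf}(p)$ where $p$ is the one-element PIFO holding $pkt_1$ (at an arbitrary rank) does the job, since $\flush(q_\pi) = pkt_1$, $\mathsf{snap}(q_\pi) = [pkt_1]$, and leaves are well-formed by definition.

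For the inductive step $t = \mathsf{Node}(\vec{t})$ with $\vec{t} \in \Topo^n$, put $\ell_i = |\vec{t}[i]|$ and $L_i = \ell_1 + \cdots + \ell_i$ (with $L_0 = 0$), so that $|t| = L_n$. Because the required snapshot $[pkt_1, \dots, pkt_{|t|}]$ must be the concatenation of the children's snapshots, we are forced to route the contiguous block $P_i = \{pkt_{L_{i-1}+1}, \dots, pkt_{L_i}\}$ into the $i$-th child, with that child's snapshot equal to $[pkt_{L_{i-1}+1}, \dots, pkt_{L_i}]$. Relabelling $P_i$ as $pkt'_1, \dots, pkt'_{\ell_i}$ via $pkt'_j := pkt_{L_{i-1}+j}$, I would apply the induction hypothesis to $\vec{t}[i]$ with the permutation $\pi_i$ on $P_i$ that orders the elements of $P_i$ exactly as they appear, read in pop order, within the target word $\pi(pkt_1), \dots, \pi(pkt_{|t|})$: if those elements occur at positions $a^i_1 < \cdots < a^i_{\ell_i}$, set $\pi_i(pkt'_j) = \pi(pkt_{a^i_j})$, which one checks is a bijection of $P_i$. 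This yields well-formed children $\vec{q}[i] = q_i$ with $\mathsf{snap}(q_i) = [pkt_{L_{i-1}+1}, \dots, pkt_{L_i}]$ and with $\flush(q_i)$ enumerating $P_i$ in $\pi$-order. Finally, letting $c(k)$ denote the unique child index with $\pi(pkt_k) \in P_{c(k)}$, build the root index-PIFO $p$ by pushing $c(1), c(2), \dots, c(|t|)$ in that order at a common rank, so that FIFO tie-breaking makes $p$'s pop sequence exactly $c(1), \dots, c(|t|)$ (using $|t|$ strictly increasing ranks would work equally well); take $q_\pi = \mathsf{Internal}(\vec{q}, p)$.

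Well-formedness and the snapshot equation are then routine: $|p|_i = \#\{k : c(k) = i\} = |P_i| = \ell_i = |\vec{q}[i]|$, so $\vdash q_\pi$ follows from the inductive well-formedness of the $q_i$; and $\mathsf{snap}(q_\pi) = \mathsf{snap}(q_1) \mathop{+\!+} \cdots \mathop{+\!+} \mathsf{snap}(q_n) = [pkt_1, \dots, pkt_{|t|}]$ directly from the induction hypothesis.

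The flush equation is where the real work lies, and I expect it to be the main obstacle. \Cref{lemma:flush-interleave} only tells us that $\flush(q_\pi)$ is \emph{some} interleaving of the leaves' contents, which here merely says it is \emph{some} permutation of $P$ — far too weak. What is actually needed is the sharper fact that $\pop$, and hence $\flush$, on $\mathsf{Internal}(\vec{q}, p)$ interleaves the children's pop sequences \emph{exactly as prescribed by the pop sequence of $p$}: the $k$-th packet emitted by the node is the $m$-th packet popped from child $j$, where $j$ is the $k$-th index popped from $p$ and $m = \#\{k' \le k : \text{the } k'\text{-th index popped from } p \text{ equals } j\}$. I would establish this as a separate helper lemma, by induction on $|\mathsf{Internal}(\vec{q}, p)|$, by chasing a single $\pop$ step through the definitions of $\pop$ and $\flush$ (each $\pop$ drops the head of $p$ and advances the selected child by one). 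Instantiating it with our construction: $p$ has pop sequence $c(1), \dots, c(|t|)$, and for each $k$ the relevant multiplicity is $m = \#\{k' \le k : c(k') = c(k)\}$, which by the choice of $\pi_{c(k)}$ is precisely the position of $\pi(pkt_k)$ in child $c(k)$'s pop sequence; hence the node emits $\pi(pkt_1), \dots, \pi(pkt_{|t|})$ in pop order, i.e. $\flush(q_\pi) = \pi(pkt_{|t|}) \cdots \pi(pkt_1)$, which completes the induction.
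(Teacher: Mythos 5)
Your proof is correct and, as far as I can tell, takes essentially the same route the paper does: a construction by induction on the topology $t$, routing the contiguous block $P_i$ into the $i$-th child (forced by the snapshot condition), relabelling to invoke the inductive hypothesis with a permutation that reproduces the order in which $P_i$'s elements appear in the target word, and then choosing the root PIFO's pop sequence so that the interleaving matches. You also correctly identify that \Cref{lemma:flush-interleave} is too weak on its own and that the needed sharpening is the precise interleaving lemma — that the $k$-th packet emitted by $\mathsf{Internal}(\vec{q},p)$ is the $m$-th packet emitted by child $j$ where $j$ is the $k$-th index popped from $p$ and $m$ counts occurrences of $j$ among the first $k$ pops of $p$ — which is indeed the content needed and follows by an easy induction using well-formedness.
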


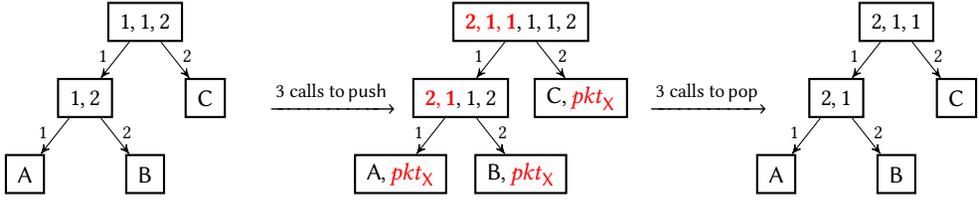
\begin{figure}
  \begin{center}
  \begin{tikzpicture}[->, >=stealth', auto, bend angle=10]
  \tikzstyle{dot}=[state, inner sep=1.3pt, minimum size=0pt, draw=black, fill=black]
  \tikzstyle{box}=[inner sep=0pt, minimum height=0.5cm, draw=black, thick]
  \small
    \node[box] (root) {$\phantom{b}1,1,2\phantom{b}$};
    \node[box] (N2) [below of=root, xshift=8mm] {$\phantom{b}\mathsf{C}\phantom{b}$};
    \node[box] (N1) [below of=root, xshift=-8mm] {$\phantom{b}1,2\phantom{b}$};
    \node[box] (N00) [below of=N1, xshift=-8mm] {$\phantom{b}\mathsf{A}\phantom{b}$};
    \node[box] (N01) [below of=N1, xshift=8mm] {$\phantom{b}\mathsf{B}\phantom{b}$};
    \path (root) edge node[swap, pos=.65, xshift=2pt, yshift=-2pt] {$\scriptstyle 1$} (N1);
    \path (root) edge node[pos=.65, xshift=-2pt, yshift=-2pt] {$\scriptstyle 2$} (N2);
    \path (N1) edge node[swap, pos=.65, xshift=2pt, yshift=-2pt] {$\scriptstyle 1$} (N00);
    \path (N1) edge node[pos=.65, xshift=-2pt, yshift=-2pt] {$\scriptstyle 2$} (N01);%
    \node [right of=root, node distance=25mm, yshift=-10mm] {$\xrightarrow{3 \text{ calls to} \push}$};
    \node[box] (root') [right of=root, node distance=50mm] {$\phantom{b}{\bf\color{red}{2,1,1}},1,1,2\phantom{b}$};
    \node[box] (N2') [below of=root', xshift=8mm] {$\phantom{b}\mathsf{C},{\color{red}\X}\phantom{b}$};
    \node[box] (N1') [below of=root', xshift=-8mm] {$\phantom{b}{\bf\color{red}{2,1}},1,2\phantom{b}$};
    \node[box] (N00') [below of=N1', xshift=-8mm] {$\phantom{b}\mathsf{A},{\color{red}\X}\phantom{b}$};
    \node[box] (N01') [below of=N1', xshift=8mm] {$\phantom{b}\mathsf{B},{\color{red}\X}\phantom{b}$};
    \path (root') edge node[swap, pos=.65, xshift=2pt, yshift=-2pt] {$\scriptstyle 1$} (N1');
    \path (root') edge node[pos=.65, xshift=-2pt, yshift=-2pt] {$\scriptstyle 2$} (N2');
    \path (N1') edge node[swap, pos=.65, xshift=2pt, yshift=-2pt] {$\scriptstyle 1$} (N00');
    \path (N1') edge node[pos=.65, xshift=-2pt, yshift=-2pt] {$\scriptstyle 2$} (N01');%
    \node [right of=root', node distance=25mm, yshift=-10mm] {$\xrightarrow{3 \text{ calls to} \pop}$};
    \node[box] (root'') [right of=root', node distance=50mm] {$\phantom{b}2,1,1\phantom{b}$};
    \node[box] (N2'') [below of=root'', xshift=8mm] {$\phantom{b}\mathsf{C}\phantom{b}$};
    \node[box] (N1'') [below of=root'', xshift=-8mm] {$\phantom{b}2,1\phantom{b}$};
    \node[box] (N00'') [below of=N1'', xshift=-8mm] {$\phantom{b}\mathsf{A}\phantom{b}$};
    \node[box] (N01'') [below of=N1'', xshift=8mm] {$\phantom{b}\mathsf{B}\phantom{b}$};
    \path (root'') edge node[swap, pos=.65, xshift=2pt, yshift=-2pt] {$\scriptstyle 1$} (N1'');
    \path (root'') edge node[pos=.65, xshift=-2pt, yshift=-2pt] {$\scriptstyle 2$} (N2'');
    \path (N1'') edge node[swap, pos=.65, xshift=2pt, yshift=-2pt] {$\scriptstyle 1$} (N00'');
    \path (N1'') edge node[pos=.65, xshift=-2pt, yshift=-2pt] {$\scriptstyle 2$} (N01'');
  \end{tikzpicture}
  \end{center}
  \caption{Effecting permutations in $2|t|$ steps.}%
  \label{fig:effecting-permutations}
\end{figure}

As it happens, PIFO trees as constructed in this way for a fixed topology are closely related, in the sense that each $q_\pi$ can be transformed into each $q_{\pi'}$ (up to simulation) via a specific sequence of operations.
The idea is to use $\push$ to append the contents of the PIFO of each internal node in $q_{\pi'}$ to the PIFO of the corresponding node in $q_\pi$.
The packets pushed will be the dummy packets $\X$, which will be scheduled at the head of each leaf.
A sequence of $\pop$s will then clear these dummy packets, as well as the original contents of the internal node PIFOs.
The tree that results is very similar to $q_\pi$, except that the ranks used in its internal PIFOs may have shifted by some constant.

As an example, consider the PIFO tree on the left in \Cref{fig:effecting-permutations}, which flushes to $\mathsf{ABC}$.
We can push $\X$ three times to obtain the PIFO tree in the middle, where the added contents are given in \textcolor{red}{red}.
Calling $\pop$ three times yields the PIFO tree on the right, which itself flushes to $\mathsf{CBA}$.

\begin{restatable}{lemma}{restatepermutation}%
\label{lemma:permutation}
Let $t \in \Topo$, let $P = \{ pkt_1, \dots, pkt_{|t|} \} \subseteq \mathsf{Pkt}$, and let $\pi,\pi'$ be permutations on~$P$.
Further, let $\X$ be a packet not occurring in $P$.
There exists some $q \in \mathsf{PIFOTree}(t)$ and a sequence of~$|t|$ $\push$es of $\X$ followed by $|t|$ $\pop$s that transforms $q_\pi$ into $q$, such that $q \preceq q_{\pi'}$.
\end{restatable}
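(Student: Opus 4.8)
The plan is to carry out the recipe sketched just above the lemma: push $\X$ into $q_\pi$ along paths that trace out a flush of $q_{\pi'}$, choosing the ranks so that each $\X$ becomes the strictly most favorable packet in its leaf while every inserted internal index becomes strictly \emph{less} favorable than everything already present, and then call $\pop$ $|t|$ times to drain the $\X$'s together with the original internal indices. Concretely, $\flush(q_{\pi'})$ consists of $|t|$ successive calls to $\pop$, the $j$-th of which descends from the root to a leaf along a unique path, popping one child index from each internal node it meets. Let $\hat\rho_j \in \mathsf{Path}(t)$ descend along this very sequence of child indices, attaching to each internal node the rank $\tau + C$, where $\tau$ is the rank at which that index currently sits in $q_{\pi'}$ and $C$ is a rank strictly above every rank occurring in $q_\pi$, and attaching to the target leaf a rank strictly below the (common) rank that $q_\pi$ assigns to a leaf's single packet. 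I then push $\X$ into $q_\pi$ along $\hat\rho_1, \dots, \hat\rho_{|t|}$ \emph{in that order}, obtaining an intermediate tree $q^{\mathrm{mid}}$, and follow with $|t|$ calls to $\pop$, obtaining the desired $q$. (I use here that the construction underlying \Cref{lemma:construct-perm} gives every leaf's lone packet one and the same rank in $q_\pi$ and in $q_{\pi'}$, so that their leaves literally coincide; since the rank held in a one-element PIFO carries no behavioral information this is harmless, and likewise I may assume that common rank is not the least element of $\mathsf{Rk}$ should such a least element exist.)

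I would then establish two facts about $q^{\mathrm{mid}}$. First, $q^{\mathrm{mid}}$ is well-formed and $|q^{\mathrm{mid}}| = 2|t|$, by \Cref{lemma:sanity-check}(i) together with $|q_\pi| = |t|$. Second, by the rank choices: at every internal node the indices inherited from $q_\pi$ are strictly more favorable than the inserted ones, and at every leaf the inserted $\X$ is strictly more favorable than the original packet; moreover, since the $j$-th push passes through an internal node $v$ exactly when $v$ lies on the $j$-th descent of $\flush(q_{\pi'})$, reading the inserted indices at $v$ in $\pop$-order recovers \emph{precisely} the PIFO of $v$ in $q_{\pi'}$, with every rank raised by $C$. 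This last point uses that we push in increasing order of $j$ and that the $j$-th descent of $\flush(q_{\pi'})$ reaches $v$ before the $j'$-th does whenever $j < j'$, so that the FIFO tie-break inside $v$ replays exactly what happens in $q_{\pi'}$.

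It remains to analyze the $|t|$ pops. Well-formedness is preserved by \Cref{lemma:sanity-check}(ii), and the size stays positive throughout, so each pop is defined. Because the inherited indices remain strictly more favorable than the inserted ones, and each leaf keeps $\X$ as its strictly most favorable packet until that $\X$ is popped, a short induction on $k$ shows that the $k$-th pop (for $1 \le k \le |t|$) follows exactly the descent that $\flush(q_\pi)$ makes at its $k$-th step and releases an $\X$. As $\flush(q_\pi) = \pi(pkt_{|t|}) \cdots \pi(pkt_1)$ visits $|t|$ \emph{distinct} leaves, every leaf is visited exactly once; hence in $q$ every leaf again holds just its original packet, and every internal node $v$ has been visited as many times as there are leaves below it, so it has had all of its $q_\pi$-inherited indices popped, leaving precisely the inserted ones. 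Thus $q$ has the same topology and leaves as $q_{\pi'}$, and at every internal node its PIFO is that of $q_{\pi'}$ with all ranks raised by $C$. Finally, $q \preceq q_{\pi'}$ follows from the $R_s$ simulation constructed in the example above: take $s$ to be a monotone map on $\mathsf{Rk}$ that fixes the common leaf rank and acts by $r \mapsto r - C$ on the ranks occurring in $q$'s internal PIFOs (all of which exceed $C$); then $q_{\pi'}$ is exactly the $s$-relabeling of $q$, and so $q \mathrel{R_s} q_{\pi'}$.

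The crux is the rank bookkeeping behind the two facts about $q^{\mathrm{mid}}$: one must choose $C$ and the leaf ranks of the $\hat\rho_j$ so that (i) inserted indices are uniformly less favorable than inherited ones and stay so as pops occur, (ii) each $\X$ is the unique most favorable packet in its leaf, and (iii) the inserted indices at each node faithfully reproduce $q_{\pi'}$'s PIFO there, FIFO ties included. Only once (iii) holds is the residual $q$ a global, constant-shift relabeling of $q_{\pi'}$, which is what lets us invoke $R_s$ rather than build a simulation by hand. The remaining pieces---termination of $\pop$, the per-node visit counts, and the benign assumption about leaf ranks---are routine given \Cref{lemma:sanity-check} and the flush/snapshot characterization of $q_\pi$ in \Cref{lemma:construct-perm}.
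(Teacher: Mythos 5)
Your proof follows the same construction the paper sketches above the lemma: push $\X$ along paths that append $q_{\pi'}$'s internal-node indices (with shifted ranks) as a less-favorable suffix and put $\X$ at the head of each leaf, then use $|t|$ pops to drain the $\X$'s along $\flush(q_\pi)$'s descents, leaving a rank-shifted copy of $q_{\pi'}$ that simulates $q_{\pi'}$ via the monotone relabeling $R_s$. The rank bookkeeping and the FIFO-tiebreak argument for why the inserted indices reproduce $q_{\pi'}$'s pop order are handled correctly, so this is essentially the paper's argument.
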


\begin{remark}
\Cref{lemma:construct-perm,lemma:permutation} can be generalized to PIFO trees with multiple elements enqueued at each leaf.
In a more lax notion of simulation, where the simulating tree may respond with more than one $\push$ or $\pop$ operation, this can be used to interrelate the semantics of PIFO trees with differing topologies.
\ifarxiv%
We refer interested readers to \Cref{appendix:more-general-simulation} for more details.
\else%
We refer to the extended version~\cite[Appendix B]{arxiv} for details.
\fi%
\end{remark}

With this transformation lemma in place, we are ready to formally state and prove a critical theorem: a PIFO tree can never be simulated by another PIFO tree with fewer leaves.
Contrapositively, if a PIFO tree $q_1$ is simulated by a PIFO tree $q_2$, then $q_2$ has at least as many leaves as~$q_1$.

\begin{restatable}{theorem}{restatesimulationsvsleaves}%
\label{thm:simulations-vs-leaves}
Let ${t_1, t_2 \in \Topo}$ with $|t_1| > |t_2|$.
For all $q_1 \in \mathsf{PIFOTree}(t_1)$ and $q_2 \in \mathsf{PIFOTree}(t_2)$ such that $\vdash q_1$ and $\vdash q_2$, we have that $q_1 \preceq q_2$ does \emph{not} hold.
\end{restatable}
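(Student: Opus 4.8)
The plan is a proof by contradiction: suppose $q_1 \preceq q_2$ with $\vdash q_1$, $\vdash q_2$, and $n := |t_1| > |t_2| =: m$; fix a simulation $R$ witnessing this; and manufacture a program of $\push$es and $\pop$s that $q_1$ can run but that $q_2$ provably cannot mirror. Three facts will be used repeatedly. (a) $R$ survives matched operations: clause~(2) gives this for $\pop$, and clause~(3) lets us \emph{choose}, for each $\push$ on the $q_1$ side, a matching $\push$ on the $q_2$ side keeping the results $R$-related; hence any finite program run on $q_1$ can be mirrored on $q_2$ into an $R$-related pair. (b) By \Cref{lemma:sanity-check}, $\push$ and $\pop$ preserve well-formedness, and a well-formed tree admits a $\pop$ exactly when it is non-empty. (c) By \Cref{lemma:sim-flush}, $R$-related well-formed trees have equal $\flush$ (hence equal size).

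First I would drain the trees. While $q_1$ is non-empty, $\pop(q_1)$ is defined, so by clause~(2) so is $\pop(q_2)$, with the results still $R$-related; after $|q_1|$ steps $q_1$ is empty, $\pop$ is now undefined on it, so by clause~(1) $\pop$ is undefined on the $q_2$-side tree as well, which---being well-formed---is therefore empty too. Write $q_1^{(1)} \mathrel{R} q_2^{(1)}$ for the resulting (empty) pair. Next, pick $n+1$ distinct packets $pkt_1,\dots,pkt_n,\X$, set $P = \{pkt_1,\dots,pkt_n\}$ and let $\pi_0$ be the identity on $P$. Since each leaf of $q_{\pi_0}$ holds exactly one packet, $q_{\pi_0}$ arises from the empty tree of topology $t_1$ by $n$ $\push$es along suitable paths (essentially the construction behind \Cref{lemma:construct-perm}); mirroring these on $q_2^{(1)}$ yields $q_2^{(2)}$ with $q_{\pi_0} \mathrel{R} q_2^{(2)}$. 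As each $\push$ adds exactly one packet, $q_2^{(2)}$ holds precisely $pkt_1,\dots,pkt_n$, so---because $n > m$---some leaf of $q_2^{(2)}$ holds at least two of them. Fix such a leaf and two packets $pkt_a, pkt_b$ in it, with $pkt_a$ preceding $pkt_b$ in the sequence this leaf contributes to $\mathsf{snap}(q_2^{(2)})$; then $pkt_a$ precedes $pkt_b$ in \emph{every} interleaving of $\mathsf{snap}(q_2^{(2)})$.

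The contradiction then comes from \Cref{lemma:permutation}. Pick a permutation $\pi'$ on $P$ with $pkt_b$ preceding $pkt_a$ in the word $\pi'(pkt_n)\cdots\pi'(pkt_1)$. Applying \Cref{lemma:permutation} with $\pi = \pi_0$ gives a program $\sigma$---namely $n$ $\push$es of $\X$ followed by $n$ $\pop$s---transforming $q_{\pi_0}$ into some $q$ with $q \preceq q_{\pi'}$; by \Cref{lemma:sim-flush} and \Cref{lemma:construct-perm}, $\flush(q) = \flush(q_{\pi'}) = \pi'(pkt_n)\cdots\pi'(pkt_1)$, a word over $P$ containing no $\X$. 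Since $q$ results from adding $n$ copies of $\X$ to $q_{\pi_0}$ and then removing $n$ packets, the $n$ $\pop$s of $\sigma$ remove exactly those $n$ $\X$'s and none of the $pkt_i$. Mirroring $\sigma$ on $q_2^{(2)}$ gives $q_2^{(3)}$ with $q \mathrel{R} q_2^{(3)}$: by clause~(2) its $n$ $\pop$s emit the same packets as on the $q_1$ side, i.e.\ the $n$ $\X$'s, so---using that $\pifopush$ and $\pifopop$ never reorder the surviving elements of a PIFO---each leaf of $q_2^{(3)}$ holds exactly the $pkt_i$ it held in $q_2^{(2)}$, in the same order; that is, $\mathsf{snap}(q_2^{(3)}) = \mathsf{snap}(q_2^{(2)})$. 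Finally, from $q \mathrel{R} q_2^{(3)}$ with both well-formed we get $\flush(q_2^{(3)}) = \flush(q) = \pi'(pkt_n)\cdots\pi'(pkt_1)$, which by \Cref{lemma:flush-interleave} must be an interleaving of $\mathsf{snap}(q_2^{(3)}) = \mathsf{snap}(q_2^{(2)})$---impossible, since every such interleaving keeps $pkt_a$ before $pkt_b$ whereas this word does not.

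I expect the delicate step to be the invariance $\mathsf{snap}(q_2^{(3)}) = \mathsf{snap}(q_2^{(2)})$: it is precisely what forbids $q_2$ from re-sorting packets that share a leaf, and it relies on pinning down---via \Cref{lemma:permutation} and \Cref{lemma:sim-flush}---that the $\pop$s the simulation forces on the $q_2$ side are exactly the dummy $\X$'s. The rest (draining the trees, tracking which packet sits where, and preservation of $R$ under mirrored operations) is routine; I would also record the mild standing assumption, already present in \Cref{lemma:construct-perm,lemma:permutation}, that $\mathsf{Pkt}$ contains the $n+1$ distinct packets used here.
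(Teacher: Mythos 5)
Your proof is correct and takes essentially the same route as the paper: drain both trees to emptiness, rebuild $q_1$ as $q_{\mathsf{id}}$ while mirroring on $q_2$, use pigeonhole on the leaf count to find a leaf of $q_2$ holding two of the $pkt_i$, then invoke \Cref{lemma:permutation} (with the dummy $\X$) together with \Cref{lemma:sim-flush} and \Cref{lemma:flush-interleave} to show that $q_2$ cannot change the relative order of two packets sharing a leaf. The paper presents the special case first and the draining reduction last, but the content — including the key observation that mirrored pushes and pops of $\X$ leave $\mathsf{snap}(q_2)$ unchanged — is the same, and you spell out that snapshot-invariance step in more detail than the paper's sketch does.
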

\begin{proof}[Proof sketch]
Take a set of distinct packets $P = \{ pkt_1, \dots, pkt_{|t_1|} \}$.
We first consider the case where $q_1 = q_\mathsf{id}$, in which $\mathsf{id}$ is the identity permutation.
If $q_1$ is simulated by $q_2 \in \mathsf{PIFOTree}(t_2)$, then $q_2$ has a leaf with at least two packets, $pkt_i$ and $pkt_j$.
By \Cref{lemma:permutation}, we can then permute those packets in $q_1$ to obtain a PIFO tree $q_1'$, using a sequence of $\push$ and $\pop$ operations that only involve a dummy packet $\X$.
Any way of applying $\push$ and $\pop$ operations to $q_2$ using this dummy packet will never change the position of $pkt_i$ and $pkt_j$, because they are in the same leaf; hence, $q_2$ cannot simulate $q_1$.
The more general case can be reduced to the above for well-formed trees, by simply $\pop$ping $q_1$ until it is empty, and then $\push$ing packets to turn it into $q_\mathsf{id}$.
\end{proof}

\section{Embedding and Simulation}%
\label{SEC:EMBEDDING} 

In the previous section, we formalized what it means for one tree to simulate the queuing behavior of another, and showed that the number of leaves is important: a PIFO tree \emph{cannot} simulate a PIFO tree with more leaves.
We now turn our attention to the converse: when can one PIFO tree simulate another?
As explained, we will focus primarily on the \emph{topology} of PIFO trees.
\ifarxiv%
In this section, we provide proof sketches to guide intution; more extensive proofs are available in \Cref{appendix:proofs-embedding}.
\else%
Proofs of the lemmas stated here are available in the extended version of this paper~\cite[Appendix C]{arxiv}.
\fi%

In a nutshell, the results in this section tell us that a hardware implementation of PIFO trees may fix a certain topology.
As long as the topologies of PIFO trees designed by the user embed in this topology (see below), the hardware implementation will be able to replicate their behavior.

\subsection{Embedding}

We start by developing a notion of \emph{embedding} at the level of PIFO tree topologies, and show that if $t_1 \in \Topo$ embeds inside $t_2 \in \Topo$, then PIFO trees over $t_1$ can be simulated by PIFO trees over $t_2$.
Intuitively, an embedding maps nodes of one topology to nodes of the other in a way that allows internal nodes to designate (new) intermediate nodes as their children.
To make this precise, we need a way to refer to nodes and leaves internal to a topology, which we formalize as follows.

\begin{definition}[Addresses]
Let $t \in \Topo$.
We write $\mathsf{Addr}(t)$ for the set of \emph{(node) addresses} in~$t$, which is defined as the smallest subset of $\mathbb{N}^*$ satisfying the rules
\begin{mathpar}
    \inferrule{~}{%
        \epsilon \in \mathsf{Addr}(t)
    }
    \and
    \inferrule{%
        n \in \mathbb{N} \\
        \vec{t} \in \Topo^n \\
        \addr \in \mathsf{Addr}(\vec{t}[i]) \\
        1 \leq i \leq n
    }{%
        i \cdot \addr \in \mathsf{Addr}(\mathsf{Node}(\vec{t}))
    }
\end{mathpar}
Intuitively, $\epsilon$ addresses the root, while $i \cdot \alpha$ refers to the node pointed to by $\alpha$ in the $i$-th subtree.

When $\addr \in \mathsf{Addr}(t)$, we write $t/\addr$ for the subtree reached by $\addr$, defined inductively by
\begin{mathpar}
    t/\epsilon = \epsilon
    \and
    \mathsf{Node}(\vec{t})/(i \cdot \addr) = \vec{t}[i]/\addr
\end{mathpar}
This defines a total function, because if $\addr = i \cdot \addr'$, then $t$ is necessarily of the form $\mathsf{Node}(\vec{t})$ for $\vec{t} \in \Topo^n$ such that $1 \leq i \leq n$, and $\addr' \in \mathsf{Addr}(\vec{t}[i])$.
\end{definition}

Note that this way of addressing nodes is compatible with the intuition of \emph{ancestry}: if $\addr$ and $\addr'$ are addresses in $t$, then $\addr$ points to an ancestor of the node referred to by $\addr'$ precisely when $\addr$ is a prefix of $\addr'$.
This guides the definition of embedding, as follows.

\begin{definition}[Embedding]
Let $t_1, t_2 \in \Topo$.
A \emph{(homomorphic) embedding} of $t_1$ in $t_2$ is an injective map $f: \mathsf{Addr}(t_1) \to \mathsf{Addr}(t_2)$ such that, for all $\addr, \addr' \in \mathsf{Addr}(t_1)$, three things hold:
\begin{enumerate}
    \item
    $f$ maps the root of $t_1$ to the root of $t_2$, i.e., if $\addr = \epsilon$ then $f(\addr) = \epsilon$.
    \item
    $f$ maps leaves of $t_1$ to leaves of $t_2$, i.e., if $t_1/\addr = *$, then $t_2/f(\addr) = *$.
    \item
    $f$ respects ancestry, i.e., $\addr$ is a prefix of $\addr'$ iff $f(\addr)$ is a prefix of $f(\addr')$.
\end{enumerate}
\end{definition}

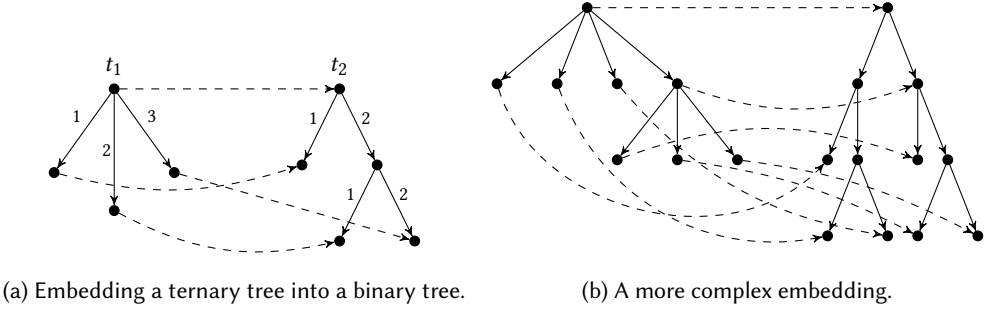
\begin{figure}
    \begin{subfigure}[b]{0.45\textwidth}
        \centering
        \begin{tikzpicture}[->, >=stealth', auto, bend angle=20]
        \tikzstyle{dot}=[state, inner sep=1.3pt, minimum size=0pt, draw=black, fill=black]
        \small
        \node[dot, label=above:{$t_1$}] (R) {};
        \node[dot] (C2) [below of=R, yshift=-6mm] {};
        \node[dot] (C1) [left of=C2, yshift=5mm, node distance=8mm] {};
        \node[dot] (C3) [right of=C2, yshift=5mm, node distance=8mm] {};
        \path (R) edge node[swap, xshift=2pt] {$\scriptstyle 1$} (C1);
        \path (R) edge node[swap, xshift=2pt] {$\scriptstyle 2$} (C2);
        \path (R) edge node[xshift=-2pt] {$\scriptstyle 3$} (C3);
        \node[dot, label=above:{$t_2$}] (eps) [right of=R, node distance=3cm]{};
        \node[dot] (N0) [below of=eps, xshift=-5mm] {};
        \node[dot] (N1) [below of=eps, xshift=5mm] {};
        \node[dot] (N10) [below of=N1, xshift=-5mm] {};
        \node[dot] (N11) [below of=N1, xshift=5mm] {};
        \path (eps) edge node[swap, xshift=2pt, yshift=-2pt] {$\scriptstyle 1$} (N0);
        \path (eps) edge node[xshift=-2pt, yshift=-2pt] {$\scriptstyle 2$} (N1);
        \path (N1) edge node[swap, xshift=2pt, yshift=-2pt] {$\scriptstyle 1$} (N10);
        \path (N1) edge node[xshift=-2pt, yshift=-2pt] {$\scriptstyle 2$} (N11);
        \path (R) edge[dashed] (eps);
        \path (C2) edge[dashed, bend right] (N10);
        \path (C3) edge[dashed] (N11);
        \path (C1) edge[dashed, bend angle=16, bend right] (N0);

        \end{tikzpicture}
        \caption{Embedding a ternary tree into a binary tree.}%
        \label{fig:example-embedding-a}
    \end{subfigure}
    \begin{subfigure}[b]{0.5\textwidth}
        \centering
        \begin{tikzpicture}[->, >=stealth', auto, bend angle=20]
        \tikzstyle{dot}=[state, inner sep=1.3pt, minimum size=0pt, draw=black, fill=black]
        \small
        \node[dot] (R) {};
        \node[dot] (C2) [below of=R, xshift=-4mm] {};
        \node[dot] (C1) [left of=C2, node distance=8mm] {};
        \node[dot] (C3) [below of=R, xshift=4mm] {};
        \node[dot] (C4) [right of=C3, node distance=8mm] {};
        \node[dot] (C42) [below of=C4] {};
        \node[dot] (C41) [left of=C42, node distance=8mm] {};
        \node[dot] (C43) [right of=C42, node distance=8mm] {};
        \path (R) edge node[swap, near end, xshift=2pt, yshift=-2pt] {} (C1);
        \path (R) edge node[swap, near end, xshift=2pt, yshift=-2pt] {} (C2);
        \path (R) edge node[near end, xshift=-2pt, yshift=-2pt] {} (C3);
        \path (R) edge node[near end, xshift=-2pt, yshift=-2pt] {} (C4);
        \path (C4) edge node[swap, xshift=2pt] {} (C41);
        \path (C4) edge node[swap, xshift=2pt] {} (C42);
        \path (C4) edge node[xshift=-2pt] {} (C43);

        \node[dot] (eps) [right of=R, node distance=4cm]{};
        \node[dot] (N0) [below of=eps, xshift=-4mm] {};
        \node[dot] (N00) [below of=N0, xshift=-4mm] {};
        \node[dot] (N01) [below of=N0, xshift=0mm] {};
        \node[dot] (N1) [below of=eps, xshift=4mm] {};

        \node[dot] (N10) [below of=N1, xshift=0mm] {};
        \node[dot] (N11) [below of=N1, xshift=4mm] {};
        \node[dot] (N110) [below of=N11, xshift=-4mm] {};
        \node[dot] (N111) [below of=N11, xshift=4mm] {};
        \path (N1) edge node[swap, xshift=2pt, yshift=-2pt] {} (N10);
        \path (N1) edge node[xshift=-2pt, yshift=-2pt] {} (N11);
        \path (N11) edge node[swap, xshift=2pt, yshift=-2pt] {} (N110);
        \path (N11) edge node[xshift=-2pt, yshift=-2pt] {} (N111);

        \node[dot] (N010) [below of=N01, xshift=-4mm] {};
        \node[dot] (N011) [below of=N01, xshift=4mm] {};
        \path (eps) edge node[swap, xshift=2pt, yshift=-2pt] {} (N0);
        \path (N0) edge node[swap, xshift=2pt, yshift=-2pt] {} (N00);
        \path (N0) edge node[xshift=-2pt, yshift=-2pt] {} (N01);
        \path (eps) edge node[xshift=-2pt, yshift=-2pt] {} (N1);
        \path (N01) edge node[swap, xshift=2pt, yshift=-2pt] {} (N010);
        \path (N01) edge node[xshift=-2pt, yshift=-2pt] {} (N011);
        \path (R) edge[dashed] (eps);
        \path (C4) edge[dashed, bend right] (N1);
        \path (C2) edge[dashed, bend angle=40, bend right] (N010);
        \path (C3) edge[dashed, bend right] (N011);
        \path (C1) edge[dashed, bend angle=60, bend right] (N00);
        \path (C41) edge[dashed, bend left] (N10);
        \path (C42) edge[dashed, bend angle=10, bend left] (N110);
        \path (C43) edge[dashed, bend angle=10, bend left] (N111);
        \end{tikzpicture}
        \caption{A more complex embedding.}%
        \label{fig:example-embedding-b}
    \end{subfigure}
\caption{Two examples of embedding.}%
\end{figure}

\noindent
For example, in \Cref{fig:example-embedding-a}, the ternary tree $t_1$ embeds inside the binary tree $t_2$ via the embedding $f$:
\begin{mathpar}
    f(\epsilon) = \epsilon
    \and
    f(1) = 1
    \and
    f(2) = 21
    \and
    f(3) = 22
\end{mathpar}
We do not explicate the embedding shown in \Cref{fig:example-embedding-b}, and we drop the child-indices to lighten the presentation, but observe that the three conditions of an embedding hold.
In general, neither the source nor the target tree needs to be regular-branching so long as these conditions are obeyed.

Given an embedding of a topology $t_1$ into another topology $t_2$, we can obtain embeddings of subtrees of $t_1$ into subtrees of $t_2$; this gives us a way to superimpose the inductive structure of topologies onto embeddings, which we will rely on for the remainder of this section.

\begin{restatable}{lemma}{restatedecomposeembedding}%
\label{lemma:decompose-embedding}
Let $t_1, t_2 \in \Topo$, and let $f: \mathsf{Addr}(t_1) \to \mathsf{Addr}(t_2)$ be an embedding.
The following hold:
    (1)~if $t_1 = *$, then $t_2 = *$ as well; and
    (2)~if $t_1 = \mathsf{Node}(\vec{t}_1)$, then there exists for $1 \leq i \leq |\vec{t}_1|$ an embedding $f_i$ of $t_1/i$ inside $t_2/f(i)$ satisfying $f(i \cdot \addr) = f(i) \cdot f_i(\addr)$.
\end{restatable}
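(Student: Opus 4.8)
The plan is to isolate two elementary facts about node addresses, settle part~(1) in two lines, and then prove part~(2) by an explicit construction whose verification reduces to those facts together with trivial manipulations of word prefixes. Concretely, I would first record the following auxiliary observations, each proved by a routine induction on the length of the address (equivalently, on the structure of $t$), using only the defining rules for $\mathsf{Addr}(-)$ and for $t/\addr$: (a)~subtree addressing is compositional, i.e.\ whenever $\addr \cdot \beta \in \mathsf{Addr}(t)$ we have $t/(\addr \cdot \beta) = (t/\addr)/\beta$; and (b)~addresses decompose, i.e.\ $\addr \cdot \beta \in \mathsf{Addr}(t)$ iff $\addr \in \mathsf{Addr}(t)$ and $\beta \in \mathsf{Addr}(t/\addr)$. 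In both, the base case $\addr = \epsilon$ is immediate (since $t/\epsilon = t$, and identifying the length-one word $i$ with $i \cdot \epsilon$), and the inductive step peels off a leading index $i$ and invokes the rule $\mathsf{Node}(\vec{t})/(i \cdot \addr') = \vec{t}[i]/\addr'$. I would also use, without comment, that $w \cdot u$ is a prefix of $w \cdot v$ iff $u$ is a prefix of $v$, and that a word together with one of its prefixes determines the complementary suffix uniquely. Part~(1) is then immediate: if $t_1 = *$ then $\mathsf{Addr}(t_1) = \{\epsilon\}$ and $t_1/\epsilon = *$, so condition~(2) of an embedding gives $t_2/f(\epsilon) = *$, while condition~(1) gives $f(\epsilon) = \epsilon$; hence $t_2 = t_2/\epsilon = *$.

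For part~(2), write $t_1 = \mathsf{Node}(\vec{t}_1)$ with $n = |\vec{t}_1|$ and fix $1 \leq i \leq n$; note $t_1/i = \vec{t}_1[i]$, so $\mathsf{Addr}(t_1/i) = \mathsf{Addr}(\vec{t}_1[i])$ and, by the defining rule for $\mathsf{Addr}(\mathsf{Node}(\vec{t}))$, $i \cdot \addr \in \mathsf{Addr}(t_1)$ for every $\addr \in \mathsf{Addr}(t_1/i)$. Since $i$ (that is, $i \cdot \epsilon$) is a prefix of $i \cdot \addr$, condition~(3) of the embedding makes $f(i)$ a prefix of $f(i \cdot \addr)$, so there is a unique word $f_i(\addr)$ with $f(i \cdot \addr) = f(i) \cdot f_i(\addr)$; this is the candidate map, and it satisfies the required equation by construction. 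It then remains to check that $f_i$ is an embedding of $t_1/i$ inside $t_2/f(i)$. Well-typedness follows from~(b): $f(i) \cdot f_i(\addr) = f(i \cdot \addr) \in \mathsf{Addr}(t_2)$ and $f(i) \in \mathsf{Addr}(t_2)$, so $f_i(\addr) \in \mathsf{Addr}(t_2/f(i))$. Injectivity: $f_i(\addr) = f_i(\addr')$ forces $f(i \cdot \addr) = f(i \cdot \addr')$, so injectivity of $f$ and cancellation of the leading $i$ give $\addr = \addr'$. Root preservation: $f(i \cdot \epsilon) = f(i) = f(i) \cdot \epsilon$, hence $f_i(\epsilon) = \epsilon$. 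Leaf preservation: if $(t_1/i)/\addr = *$ then $t_1/(i \cdot \addr) = *$ by~(a), so $t_2/f(i \cdot \addr) = *$ by condition~(2) of $f$, and $(t_2/f(i))/f_i(\addr) = t_2/(f(i) \cdot f_i(\addr)) = t_2/f(i \cdot \addr) = *$, again by~(a). Ancestry: $\addr$ is a prefix of $\addr'$ iff $i \cdot \addr$ is a prefix of $i \cdot \addr'$ iff (condition~(3) of $f$) $f(i \cdot \addr)$ is a prefix of $f(i \cdot \addr')$ iff $f(i) \cdot f_i(\addr)$ is a prefix of $f(i) \cdot f_i(\addr')$ iff $f_i(\addr)$ is a prefix of $f_i(\addr')$.

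The only step that calls for any genuine care is formulating and inducting on the auxiliary address facts~(a) and~(b) with a clean induction hypothesis; everything downstream is bookkeeping about prefixes and cancellation. I would also be mildly careful about the degenerate address $\epsilon$, the convention $t/\epsilon = t$, and the identification of a single index $i$ with the length-one word $i \cdot \epsilon$, so that the prefix-cancellation steps in the injectivity and leaf/root arguments are unambiguous.
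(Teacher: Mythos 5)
Your proof is correct and, given that the definition of $f_i$ as the unique suffix satisfying $f(i \cdot \addr) = f(i) \cdot f_i(\addr)$ is essentially forced by the prefix-preservation condition, it takes the same route the paper's appendix proof must take. The auxiliary facts (a) compositionality of subtree addressing and (b) address decomposition are exactly what is needed, your verification of the three embedding conditions and injectivity is clean, and you correctly read $t/\epsilon = t$ (the paper's display contains a typo stating $t/\epsilon = \epsilon$, which you implicitly and rightly corrected). One small point worth making explicit if you were writing this out fully: $i \in \mathsf{Addr}(t_1)$ because $\epsilon \in \mathsf{Addr}(\vec{t}_1[i])$ and hence $i = i \cdot \epsilon \in \mathsf{Addr}(\mathsf{Node}(\vec{t}_1))$, so $f(i)$ is indeed defined; you use this silently when invoking condition~(3) to get that $f(i)$ is a prefix of $f(i \cdot \addr)$.
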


\subsection{Lifting Embeddings}

Next, we develop a way to lift these embeddings so that they range over PIFO trees.
Specifically, given an embedding $f: \mathsf{Addr}(t_1) \to \mathsf{Addr}(t_2)$, we lift it into a map $\widehat{f}:~\mathsf{PIFOTree}(t_1) \to \mathsf{PIFOTree}(t_2)$.
Intuitively, if $q$ is a PIFO tree over $t_1$, $\widehat{f}(q)$ is a PIFO tree over $t_2$ that simulates $q$.
The map places packets at the leaves of the input PIFO tree at the corresponding leaves of the output PIFO tree, and populates the PIFOs at the internal nodes of the output tree in a way that preserves $\push$ and $\pop$.

\begin{definition}
Let $t_1, t_2 \in \Topo$, and let $f$ be an embedding of $t_1$ inside $t_2$.
We lift $f$ to a map $\widehat{f}$ from $\mathsf{PIFOTree}(t_1)$ to $\mathsf{PIFOTree}(t_2)$ by recursion on $t_1$.

\begin{itemize}
    \item
    In the base, where we have $t_1 = *$, we choose $\widehat{f}(q) = q$.
    This is well-defined by \Cref{lemma:decompose-embedding}.
    \item
    In the recursive step, let ${t_1 = \mathsf{Node}(\vec{t}_1)}$, ${n = |\vec{t}_1|}$, ${q = \mathsf{Internal}(\vec{q}, p)}$, ${p \in \mathsf{PIFO}(\{1, \dots, n\})}$, and, for ${1 \leq i \leq n}$, note that ${\vec{q}[i] \in \mathsf{PIFOTree}(\vec{t}_1[i])}$.
    For each prefix $\addr$ of $f(i)$ for some $1 \leq i \leq n$, we construct $\widehat{f}(q)_{\addr} \in \mathsf{PIFOTree}(t_2/\addr)$, by an inner \emph{inverse} recursion on $\addr$, starting from $f(i)$ for $1 \leq i \leq n$ and working our way to $\epsilon$.
    This eventually yields $\widehat{f}(q) = \widehat{f}(q)_\epsilon \in \mathsf{PIFOTree}(t_2/\epsilon) = \mathsf{PIFOTree}(t_2)$:
\begin{itemize}
    \item
    In the inner base, $\addr = f(i)$ for some $1 \leq i \leq n$.
    We choose $\widehat{f}(q)_{\addr} = \widehat{f_i}(\vec{q}[i])$, where~$f_i$ embeds $t_1/i$ inside $t_2/f(i)$, as obtained through \Cref{lemma:decompose-embedding}.
    This is well-defined because $f$ is injective, and because $\vec{q}[i] \in \mathsf{PIFOTree}(\vec{t}_1/i)$, as $\vec{t}_1/i$ is a subtree of~$t_1$.
    \item
    In the inner recursive step, $\addr$ points to an internal node of $t_2$ with, say, $m$ children.
    For all $1 \leq j \leq m$, $\addr \cdot j$ is a prefix of some $f(i)$.
    By recursion, we know $\widehat{f}(q)_{\addr \cdot j} \in \mathsf{PIFOTree}(t_2/(\addr \cdot j))$ exists.
    We create a new PIFO $p_{\addr}$ by replacing the indices $i$ in $p$ (found at the root of $q$, see above) by $1 \leq j \leq m$ such that $\addr \cdot j$ is a prefix of $f(i)$ if such a $j$ exists, and removing them otherwise.
    Finally, we choose:
    $\forall 1 \leq i \leq m.\ \vec{q}_{f,\addr}[j] = \widehat{f}(q_{\addr \cdot j})$
    and
    $\widehat{f}(q)_{\addr} = \mathsf{Internal}(\vec{q}_{f,\addr}, p_{\addr})$.
    By construction, we then have that $\widehat{f}(q)_{\addr} \in \mathsf{PIFOTree}(t_2/\addr)$.
\end{itemize}
\end{itemize}
\end{definition}

\noindent
To build intuition, let us revisit the embedding $f$ we showed in \Cref{fig:example-embedding-a} and lift it to operate on PIFO trees.
The upper row of \Cref{fig:lifting} shows this lifting at the point when we are computing $\widehat{f}$ for the root node, and we already have at hand the liftings of the form $\widehat{f_i}(s_i)$ for the root's children $s_i$.
We present this in two steps.
First, we remove indices of children that no longer exist below us (\fbox{$3,1,2,2,3,1,2$} becomes \fbox{$3,2,2,3,2$} because we remove $1$s).
Second, we renumber indices for the new address regime (\fbox{$3,1,2,2,3,1,2$} becomes \fbox{$2,1,2,2,2,1,2$} and \fbox{$3,2,2,3,2$} becomes \fbox{$2,1,1,2,1$}).

\begin{figure}
    \begin{subfigure}[b]{\textwidth}
        \centering
        \begin{tikzpicture}[->, >=stealth', auto, bend angle=10]
        \tikzstyle{dot}=[state, inner sep=1.3pt, minimum size=0pt, draw=black, fill=black]
        \tikzstyle{box}=[inner sep=0pt, minimum size=0pt, draw=black]
        \small
        \node[box] (R) {\fbox{$3,1,2,2,3,1,2$}};
        \node[dot] (C2) [below of=R, yshift=-6mm, label=below:{$s_2$}] {};
        \node[dot] (C1) [left of=C2, yshift=5mm, node distance=8mm, label=below:{$s_1$}] {};
        \node[dot] (C3) [right of=C2, yshift=5mm, node distance=8mm, label=below:{$s_3$}] {};
        \path (R) edge node[swap, near end, xshift=2pt, yshift=-2pt] {$\scriptstyle 1$} (C1);
        \path (R) edge node[swap, near end, xshift=2pt, yshift=-2pt] {$\scriptstyle 2$} (C2);
        \path (R) edge node[near end, xshift=-2pt, yshift=-2pt] {$\scriptstyle 3$} (C3);
        \node[box] (eps) [right of=R, node distance=46mm] {\fbox{$3,1,2,2,3,1,2$}};
        \node[dot] (N0) [below of=eps, xshift=-8mm, label=below:{$\widehat{f_1}(s_1)$}] {};
        \node[box] (N1) [below of=eps, xshift=8mm] {\fbox{$3,2,2,3,2$}};
        \node[dot] (N10) [below of=N1, xshift=-8mm, label=below:{$\widehat{f_2}(s_2)$}] {};
        \node[dot] (N11) [below of=N1, xshift=8mm, label=below:{$\widehat{f_3}(s_3)$}] {};
        \path (eps) edge node[swap, pos=.65, xshift=2pt, yshift=-2pt] {$\scriptstyle 1$} (N0);
        \path (eps) edge node[xshift=-2pt, yshift=-2pt] {$\scriptstyle 2$} (N1);
        \path (N1) edge node[swap, xshift=2pt, yshift=-2pt] {$\scriptstyle 1$} (N10);
        \path (N1) edge node[xshift=-2pt, yshift=-2pt] {$\scriptstyle 2$} (N11);
        \node[box] (epsA) [right of=eps, node distance=46mm] {\fbox{$2,1,2,2,2,1,2$}};
        \node[dot] (N0A) [below of=epsA, xshift=-8mm, label=below:{$\widehat{f_1}(s_1)$}] {};
        \node[box] (N1A) [below of=epsA, xshift=8mm] {\fbox{$2,1,1,2,1$}};
        \node[dot] (N10A) [below of=N1A, xshift=-8mm, label=below:{$\widehat{f_2}(s_2)$}] {};
        \node[dot] (N11A) [below of=N1A, xshift=8mm, label=below:{$\widehat{f_3}(s_3)$}] {};
        \path (epsA) edge node[swap, pos=.65, xshift=2pt, yshift=-2pt] {$\scriptstyle 1$} (N0A);
        \path (epsA) edge node[xshift=-2pt, yshift=-2pt] {$\scriptstyle 2$} (N1A);
        \path (N1A) edge node[swap, xshift=2pt, yshift=-2pt] {$\scriptstyle 1$} (N10A);
        \path (N1A) edge node[xshift=-2pt, yshift=-2pt] {$\scriptstyle 2$} (N11A);
        \node [right of=R, node distance=23mm, yshift=-5mm] {$\Imp$};
        \node [right of=eps, node distance=23mm, yshift=-5mm] {$\Imp$};
        \end{tikzpicture}
    \end{subfigure}\vspace{0.4cm}
    \begin{subfigure}[b]{\textwidth}
        \centering
        \begin{tikzpicture}[->, >=stealth', auto, bend angle=10]
        \tikzstyle{dot}=[state, inner sep=1.3pt, minimum size=0pt, draw=black, fill=black]
        \tikzstyle{box}=[inner sep=0pt, minimum size=0pt, draw=black]
        \small
        \node[box] (R) {\fbox{$3,1,2,2,{\color{red}2},3,1,2$}};
        \node[dot] (C2) [below of=R, yshift=-6mm, label=below:{$s_2$}] {};
        \node[dot] (C1) [left of=C2, yshift=5mm, node distance=8mm, label=below:{$s_1$}] {};
        \node[dot] (C3) [right of=C2, yshift=5mm, node distance=8mm, label=below:{$s_3$}] {};
        \path (R) edge node[swap, near end, xshift=2pt, yshift=-2pt] {$\scriptstyle 1$} (C1);
        \path (R) edge node[swap, near end, xshift=2pt, yshift=-2pt] {$\scriptstyle 2$} (C2);
        \path (R) edge node[near end, xshift=-2pt, yshift=-2pt] {$\scriptstyle 3$} (C3);
        \node[box] (eps) [right of=R, node distance=46mm] {\fbox{$3,1,2,2,{\color{red}2},3,1,2$}};
        \node[dot] (N0) [below of=eps, xshift=-8mm, label=below:{$\widehat{f_1}(s_1)$}] {};
        \node[box] (N1) [below of=eps, xshift=8mm] {\fbox{$3,2,2,{\color{red}2},3,2$}};
        \node[dot] (N10) [below of=N1, xshift=-8mm, label=below:{$\widehat{f_2}(s_2)$}] {};
        \node[dot] (N11) [below of=N1, xshift=8mm, label=below:{$\widehat{f_3}(s_3)$}] {};
        \path (eps) edge node[swap, pos=.65, xshift=2pt, yshift=-2pt] {$\scriptstyle 1$} (N0);
        \path (eps) edge node[xshift=-2pt, yshift=-2pt] {$\scriptstyle 2$} (N1);
        \path (N1) edge node[swap, xshift=2pt, yshift=-2pt] {$\scriptstyle 1$} (N10);
        \path (N1) edge node[xshift=-2pt, yshift=-2pt] {$\scriptstyle 2$} (N11);
        \node[box] (epsA) [right of=eps, node distance=46mm] {\fbox{$2,1,2,2,{\color{red}2},2,1,2$}};
        \node[dot] (N0A) [below of=epsA, xshift=-8mm, label=below:{$\widehat{f_1}(s_1)$}] {};
        \node[box] (N1A) [below of=epsA, xshift=8mm] {\fbox{$2,1,1,{\color{red}1},2,1$}};
        \node[dot] (N10A) [below of=N1A, xshift=-8mm, label=below:{$\widehat{f_2}(s_2)$}] {};
        \node[dot] (N11A) [below of=N1A, xshift=8mm, label=below:{$\widehat{f_3}(s_3)$}] {};
        \path (epsA) edge node[swap, pos=.65, xshift=2pt, yshift=-2pt] {$\scriptstyle 1$} (N0A);
        \path (epsA) edge node[xshift=-2pt, yshift=-2pt] {$\scriptstyle 2$} (N1A);
        \path (N1A) edge node[swap, xshift=2pt, yshift=-2pt] {$\scriptstyle 1$} (N10A);
        \path (N1A) edge node[xshift=-2pt, yshift=-2pt] {$\scriptstyle 2$} (N11A);
        \node [right of=R, node distance=23mm, yshift=-5mm] {$\Imp$};
        \node [right of=eps, node distance=23mm, yshift=-5mm] {$\Imp$};
        \end{tikzpicture}
    \end{subfigure}
    \caption{Lifting an embedding (above), and preserving $\push$ across a lifted embedding (below).
    The source and target topologies as shown in \Cref{fig:example-embedding-a}.
    The new indices inserted are in {\color{red}red}.}%
    \label{fig:lifting}
\end{figure}
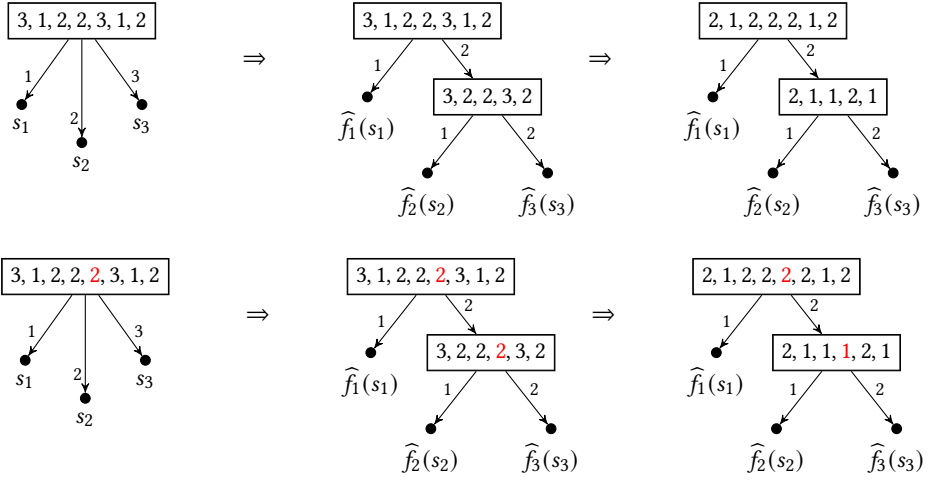

As a sanity check, we verify that a lifted embedding preserves well-formedness of PIFO trees.
The proof of the following property also serves as a small model for the proofs about lifted embeddings that follow, which all proceed by an induction that matches the recursive structure above.

\begin{restatable}{lemma}{restateliftedembeddingwellformed}
Let $t_1, t_2 \in \Topo$, and suppose $f$ embeds $t_1$ inside $t_2$.
If ${\vdash q}$, then ${\vdash \widehat{f}(q)}$.
\end{restatable}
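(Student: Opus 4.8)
The plan is to proceed by (dependent) induction on the topology $t_1$, following the recursive structure of the definition of $\widehat{f}$. It is convenient---in fact necessary for the induction---to strengthen the statement, proving simultaneously that $\vdash \widehat{f}(q)$ \emph{and} that $\widehat{f}$ preserves size, i.e.\ $|\widehat{f}(q)| = |q|$. The size statement is needed because the well-formedness rule for an internal node $\mathsf{Internal}(\vec{q}, p)$ demands $|p|_i = |\vec{q}[i]|$ for every $i$, so one cannot certify well-formedness of a node of $t_2$ without knowing exactly how many packets sit in the subtree below each of its children.

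\textbf{Base case and inductive setup.} If $t_1 = *$, then \Cref{lemma:decompose-embedding} forces $t_2 = *$, so $q = \mathsf{Leaf}(p)$ and $\widehat{f}(q) = q$, which is well-formed by the leaf rule and has the same size. For the inductive step, let $t_1 = \mathsf{Node}(\vec{t}_1)$ with $n = |\vec{t}_1|$ and $q = \mathsf{Internal}(\vec{q}, p)$; from $\vdash q$ we get $\vdash \vec{q}[i]$ and $|p|_i = |\vec{q}[i]|$ for all $i$. \Cref{lemma:decompose-embedding} supplies, for each $i$, an embedding $f_i$ of $t_1/i$ inside $t_2/f(i)$ with $f(i \cdot \addr) = f(i) \cdot f_i(\addr)$; since $t_1/i$ is a strictly smaller topology, the induction hypothesis applies to the triple $(t_2/f(i), f_i, \vec{q}[i])$ and yields $\vdash \widehat{f_i}(\vec{q}[i])$ and $|\widehat{f_i}(\vec{q}[i])| = |\vec{q}[i]|$.

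\textbf{Inner recursion.} It remains to climb $t_2$, which I do by the same inner inverse recursion on addresses $\addr$ of $t_2$ that are prefixes of some $f(i)$ used in the definition of $\widehat{f}$, now carrying the invariant
\[
\vdash \widehat{f}(q)_\addr
\qquad\text{and}\qquad
|\widehat{f}(q)_\addr| \;=\; \sum_{i:\; \addr \text{ a prefix of } f(i)} |\vec{q}[i]| .
\]
At the inner base, $\addr = f(i)$; injectivity of $f$ together with the ancestry condition forces $i$ to be the \emph{unique} index with $\addr$ a prefix of $f(i)$ (a length-$1$ address cannot be a strict prefix of another), so the invariant reduces to $\vdash \widehat{f_i}(\vec{q}[i])$ and $|\widehat{f_i}(\vec{q}[i])| = |\vec{q}[i]|$, which hold by the outer hypothesis. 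At the inner step, $\addr$ is an internal node of $t_2$ with $m$ children, $\widehat{f}(q)_\addr = \mathsf{Internal}(\vec{q}_{f,\addr}, p_\addr)$, and the inner hypothesis supplies the invariant at each child $\addr \cdot j$. To discharge the well-formedness side condition at $\addr$: by the definition of $p_\addr$, each occurrence of an index $i$ in $p$ with $\addr \cdot j$ a prefix of $f(i)$ (such a $j$ is unique when it exists) becomes an occurrence of $j$, so $|p_\addr|_j = \sum_{i:\, \addr\cdot j \text{ a prefix of } f(i)} |p|_i = \sum_{i:\, \addr\cdot j \text{ a prefix of } f(i)} |\vec{q}[i]| = |\widehat{f}(q)_{\addr\cdot j}|$, the last step by the inner hypothesis. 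For a child $\addr \cdot j$ that is a prefix of no $f(i)$---hence lies outside the range of the inner recursion---the empty PIFO tree over $t_2/(\addr\cdot j)$ is installed and $|p_\addr|_j = 0$, so the condition still holds; this ``orphan'' case must be included for $\widehat{f}(q)_\epsilon$ to be defined on all of $t_2$. Summing over $j$ and using that every $i$ with $\addr$ a prefix of $f(i)$ routes through a unique child re-establishes the size half of the invariant at $\addr$. Instantiating at $\addr = \epsilon$ gives $\vdash \widehat{f}(q)$ and $|\widehat{f}(q)| = \sum_i |\vec{q}[i]| = |q|$, which closes the induction.

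\textbf{Main obstacle.} The difficulty is bookkeeping rather than any deep idea: one must keep the outer induction (decreasing $t_1$) and the inner inverse recursion (climbing the addresses of $t_2$ from the $f(i)$ toward $\epsilon$) cleanly separated while threading the size invariant through both, since it is exactly the counting identity $|p_\addr|_j = |\widehat{f}(q)_{\addr\cdot j}|$ that discharges the well-formedness condition created whenever a fresh internal node of $t_2$ is assembled. The only genuinely fiddly point is remembering the children of $t_2$ that fall outside the image of $f$, where an empty PIFO (sub)tree must be inserted so that the invariant, and the resulting PIFO tree, are total over $t_2$.
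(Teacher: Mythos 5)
Your proof is correct and follows essentially the same approach as the paper's: an outer induction on $t_1$, an inner inverse induction on prefixes $\addr$ of the $f(i)$, and a size invariant ($|\widehat{f}(q)_\addr|$ equals the sum of $|\vec{q}[i]| = |p|_i$ over $i$ with $\addr$ a prefix of $f(i)$) that discharges the well-formedness side condition $|p_\addr|_j = |\widehat{f}(q)_{\addr\cdot j}|$ at each freshly assembled internal node of $t_2$. One small credit to you: you explicitly handle the ``orphan'' children $\addr\cdot j$ of $t_2$ that are prefixes of no $f(i)$ (e.g.\ when a binary node of $t_1$ maps under a node of $t_2$ with three children), where an empty, well-formed subtree must be installed and $|p_\addr|_j = 0$ matches it; the paper's definition of $\widehat{f}$ and its proof sketch pass over this case silently, so your treatment is a genuine tightening of the argument.
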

\begin{proof}[Proof sketch]
By induction on $t_1$.
In the base, where $t_1 = *$, this holds because $\widehat{f}(q) = q$.
Next, let $\mathsf{Node}(\vec{t}_1)$ and $n = |\vec{t}_1|$ and perform inverse induction on the prefixes of $f(i)$ for $1 \leq i \leq n$.
Show, more generally, that (1)~if $\addr$ is such a prefix, then $\vdash \widehat{f}(q)_{\addr}$, and (2)~$|\widehat{f}(q)_{\addr}|$ is equal to the sum of $|p|_i$ where $1 \leq i \leq n$ and $\addr$ is a prefix of $f(i)$.
Instantiating (1) with $\addr = \epsilon$ implies the claim.
\end{proof}

\subsection{Preserving \texorpdfstring{$\pop$}{pop}}

Simulation requires that the simulating tree can always be $\pop$ped when the simulated tree can be $\pop$ped.
If $q$ is well-formed and not empty, this must also be the case for $\widehat{f}(q)$---and hence if $q$ can be $\pop$ped so can $\widehat{f}(q)$ in this case.
In fact, this is true regardless of well-formedness.

\begin{restatable}{lemma}{restatepreservedefined}%
\label{lemma:preserve-defined}
Let $t_1, t_2 \in \Topo$, and let $f$ embed $t_1$ inside $t_2$.
If $\pop(q)$ is defined, then so is $\pop(\widehat{f}(q))$.
\end{restatable}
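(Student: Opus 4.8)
The plan is to prove the statement by induction on $t_1$, mirroring the recursion used to define $\widehat{f}$. The base case $t_1 = *$ is immediate: \Cref{lemma:decompose-embedding} forces $t_2 = *$, and $\widehat{f}(q) = q$ by definition, so $\pop(\widehat{f}(q)) = \pop(q)$ is defined by hypothesis. For the inductive step, write $t_1 = \mathsf{Node}(\vec{t}_1)$, $n = |\vec{t}_1|$, and $q = \mathsf{Internal}(\vec{q}, p)$. Since $\pop(q)$ is defined, the internal-node rule for $\pop$ gives $\pifopop(p) = (i, p')$ for some $i$ with $\pop(\vec{q}[i])$ defined. Using \Cref{lemma:decompose-embedding} to obtain the embedding $f_i$ of $t_1/i$ inside $t_2/f(i)$, and observing that $t_1/i = \vec{t}_1[i]$ is a strict subtree of $t_1$, the induction hypothesis yields that $\pop(\widehat{f_i}(\vec{q}[i]))$ is defined.

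The remaining work is to propagate this fact from the node at address $f(i)$ back up to the root $\epsilon$ of the target tree; concretely, I would show that $\pop(\widehat{f}(q)_{\addr})$ is defined for every prefix $\addr$ of $f(i)$, and then instantiate at $\addr = \epsilon$ to conclude, since $\widehat{f}(q) = \widehat{f}(q)_\epsilon$. This is an inner induction on the prefixes of $f(i)$, working downward from $f(i)$ toward $\epsilon$, exactly matching the inverse recursion in the definition of $\widehat{f}$. The inner base $\addr = f(i)$ is handled by the previous paragraph, since $\widehat{f}(q)_{f(i)} = \widehat{f_i}(\vec{q}[i])$. For the inner step, $\addr$ is a proper prefix of $f(i)$, so $t_2/\addr$ is an internal node with, say, $m \geq 1$ children and $\widehat{f}(q)_{\addr} = \mathsf{Internal}(\vec{q}_{f,\addr}, p_{\addr})$. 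Let $j_0$ be the unique child index such that $\addr \cdot j_0$ is a prefix of $f(i)$ (the step of $f(i)$ immediately after $\addr$). Then $\addr \cdot j_0$ is a strictly longer prefix of $f(i)$, so the inner induction hypothesis tells us that $\pop$ is defined on $\widehat{f}(q)_{\addr \cdot j_0}$, which is precisely the $j_0$-th child $\vec{q}_{f,\addr}[j_0]$ of $\widehat{f}(q)_\addr$. Hence it suffices to prove that $\pifopop(p_{\addr}) = (j_0, p'')$ for some $p''$.

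This last claim is where the real content lies, and I expect it to be the main obstacle. Recall that $p_{\addr}$ is obtained from $p$ by relabelling each occurrence of an index $i'$ to the child $j$ with $\addr \cdot j$ a prefix of $f(i')$ when such a $j$ exists, and deleting the occurrence otherwise, while leaving the ranks and the mutual arrival order of the surviving occurrences untouched. I would isolate this ``rank- and FIFO-order preservation'' as a small auxiliary observation about the construction of $p_\addr$ (it is reused by the later lemmas on lifted embeddings anyway), and then argue as follows: the occurrence of $i$ returned by $\pifopop(p) = (i, p')$ is not deleted, because $\addr$ is a prefix of $f(i)$ with $\addr \neq f(i)$, which guarantees that the index $j_0$ exists, and it is relabelled to $j_0$. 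Since that occurrence carried the most favorable rank in $p$ and, among equally ranked occurrences, was the earliest to arrive, after relabelling its image still has a rank no worse than any surviving occurrence of $p_{\addr}$ and is still earliest among the rank-ties; therefore $\pifopop(p_{\addr}) = (j_0, p'')$ for some $p''$, as needed. Note that well-formedness of $q$ is nowhere used: $\pop$ only descends into the single child selected by $\pifopop$, and the construction of $p_\addr$ faithfully tracks which child that is along the image path $f(i)$, which is exactly why the lemma holds unconditionally.
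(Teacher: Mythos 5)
Your proof is correct and follows the same strategy as the paper's: an outer induction on $t_1$, with an inner inverse induction on the prefixes $\addr$ of $f(i)$, instantiated at $\addr = \epsilon$. The extra detail you supply for the key claim $\pifopop(p_\addr) = (j_0, p'')$ --- namely that the surviving, relabelled image of the most-favourable occurrence in $p$ remains most favourable in $p_\addr$ because filtering and relabelling preserve ranks and FIFO order --- is exactly what the paper's sketch leaves implicit, and your closing observation that well-formedness is nowhere needed matches the paper's own remark preceding the lemma.
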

\begin{proof}[Proof sketch]
By induction on $t_1$.
In the base, where $t_1 = *$, the claim is trivial because $\widehat{f}(q) = q$.
For the inductive step, let $t_1 = \mathsf{Node}(\vec{t}_1)$ and $n = |\vec{t}_1|$.
Write $q = \mathsf{Internal}(\vec{q}, p)$.
Since $\pop(q)$ is defined, $p$ cannot be empty; write $i$ for the index at the head of $p$.
We then prove that $\pop(\widehat{f}(q)_{\addr})$ is defined for all prefixes $\addr$ of $f(i)$ by inverse induction on $\addr$; when $\addr = \epsilon$, this implies the claim.
\end{proof}

We furthermore need to show that if $\pop$ping $q$ yields a packet $pkt$ and a new tree $q'$, then $\pop$ping $\widehat{f}(q)$ gives us the same packet $pkt$, and a tree $q''$ that simulates $q'$.
To this end, we show something stronger, namely that $\pop$ \emph{commutes} with $\widehat{f}(q)$, in the following sense.

\begin{restatable}{lemma}{restatepreservepop}%
\label{lemma:preserve-pop}
Let $t_1, t_2 \in \Topo$, and let $f$ be an embedding of $t_1$ inside $t_2$.
Now $\pop$ is compatible with $\widehat{f}$, i.e., if $\pop(q) = (pkt, q')$, then $\pop(\widehat{f}(q)) = (pkt, \widehat{f}(q'))$.
\end{restatable}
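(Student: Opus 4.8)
The plan is to prove this by induction on $t_1$, following the recursive definition of $\widehat{f}$, with an inner \emph{inverse} induction along the path in $t_2$ from the root down to $f(i)$, where $i$ is the index at the head of the root PIFO of $q$. In the base case $t_1 = *$, \Cref{lemma:decompose-embedding} gives $t_2 = *$ and the definition gives $\widehat{f}(q) = q$, so the claim is immediate: writing $q = \mathsf{Leaf}(p)$ and $\pifopop(p) = (pkt, p')$, we have $q' = \mathsf{Leaf}(p')$ and $\pop(\widehat{f}(q)) = (pkt, \mathsf{Leaf}(p')) = (pkt, \widehat{f}(q'))$.

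For the inductive step, let $t_1 = \mathsf{Node}(\vec{t}_1)$ and write $q = \mathsf{Internal}(\vec{q}, p)$. Definedness of $\pop(q)$ forces $p$ to be nonempty; if $\pifopop(p) = (i, p'')$ and $\pop(\vec{q}[i]) = (pkt, q_i')$, then $q' = \mathsf{Internal}(\vec{q}[q_i'/i], p'')$. Let $f_i$ be the embedding of $t_1/i$ inside $t_2/f(i)$ obtained from \Cref{lemma:decompose-embedding}. The heart of the argument is the claim, proved by inverse induction on the prefixes $\addr$ of $f(i)$, that $\pop(\widehat{f}(q)_{\addr}) = (pkt, \widehat{f}(q')_{\addr})$; instantiating at $\addr = \epsilon$ yields the lemma. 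At $\addr = f(i)$ this is exactly the outer induction hypothesis applied to $f_i$ and $\vec{q}[i]$, since $\widehat{f}(q)_{f(i)} = \widehat{f_i}(\vec{q}[i])$ and $\widehat{f}(q')_{f(i)} = \widehat{f_i}(q_i')$. For a proper prefix $\addr$, the PIFO $p_\addr$ at that node is obtained from $p$ by relabelling each occurrence of an index $k$ to the unique direction $j$ with $\addr \cdot j$ a prefix of $f(k)$ (and deleting $k$ when no such $j$ exists). Since relabel-and-delete alters neither ranks nor relative order, the head of $p_\addr$ is the relabelled head of $p$ — namely the direction $j$ towards $f(i)$ — and popping that head commutes with the relabelling, so the residual PIFO is $(p'')_\addr$. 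Hence $\pop$ on $\widehat{f}(q)_\addr$ steps into the child at direction $j$, which is $\widehat{f}(q)_{\addr \cdot j}$, where the inner hypothesis applies; reassembling the node from residual PIFO $(p'')_\addr$ and the updated child gives $\widehat{f}(q')_\addr$. Here we use that $\widehat{f}(\cdot)_\beta$ depends only on the components of $p$ and of $\vec{q}$ indexed by those $k$ with $\beta$ a prefix of $f(k)$, and this data agrees for $q$ and $q'$ whenever $\beta$ is not a prefix of $f(i)$, so all children off the path are literally unchanged.

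The main obstacle is the bookkeeping surrounding the family of PIFOs $p_\addr$: one must carefully justify that relabel-and-delete is order-preserving enough to send the head of $p$ to the head of $p_\addr$, that it commutes with ``pop the head'' (so $(p'')_\addr = (p_\addr)'$), and that every node of $\widehat{f}(q)$ lying off the $t_2$-path to $f(i)$ coincides with the corresponding node of $\widehat{f}(q')$. Definedness of every intermediate $\pop$ along the path is already supplied by (the proof of) \Cref{lemma:preserve-defined}, so we may simply cite it; and, as in that lemma, no well-formedness hypothesis is needed anywhere.
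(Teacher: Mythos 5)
Your proof is correct and follows essentially the same structure as the paper's: outer induction on $t_1$, then an inner inverse induction along the path of prefixes of $f(i)$ establishing $\pop(\widehat{f}(q)_{\addr}) = (pkt, \widehat{f}(q')_{\addr})$. The only cosmetic difference is that the paper folds the ``off-path addresses are unchanged'' claim into the inverse induction as a second simultaneous invariant over all prefixes of all $f(j)$, whereas you dispatch it with a direct locality observation about $\widehat{f}(\cdot)_\beta$; both are fine.
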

\begin{proof}[Proof sketch]
By induction on $t_1$.
In the base, where $t_1 = *$, the claim holds because $\widehat{f}(q) = q$.
For the inductive step, let $t_1 = \mathsf{Node}(\vec{t}_1)$ and $n = |\vec{t}_1|$.
Furthermore, let $i$ be the index at the head of the PIFO attached to the root of $q$.
As before, we proceed by inverse induction on the prefixes $\addr$ of $f(j)$ with $1 \leq j \leq n$, arguing more generally that the following hold:
\begin{enumerate}
    \item
    If $\addr$ is a prefix of $f(i)$, then $\pop(\widehat{f}(q)_{\addr}) = (pkt, \widehat{f}(q')_{\addr})$.
    \item
    Otherwise, if $\addr$ is not a prefix of $f(i)$, then $\widehat{f}(q)_{\addr} = \widehat{f}(q')_{\addr}$.
\end{enumerate}
Instantiating the first property for the root address $\epsilon \in \mathsf{Path}(t_1)$ then tells us that:
\[
    \pop(\widehat{f}(q))
        = \pop(\widehat{f}(q)_\epsilon)
        = (pkt, \widehat{f}(q')_\epsilon)
        = (pkt, \widehat{f}(q')).
    \qedhere
\]
\end{proof}

\begin{remark}
For any $t \in \Topo$, we can define an operator that takes two PIFO trees $q_1, q_2 \in \mathsf{PIFOTree}(t)$ and produces a new PIFO tree $q_1 ; q_2 \in \mathsf{PIFOTree}(t)$, where the contents of the PIFO at each node in $q_2$ are concatenated to the corresponding node in $q_1$ (by shifting the weights of the former and inserting them).
This operator is associative, and has the empty PIFO tree as its neutral element on both sides, which makes $\mathsf{PIFOTree}(t)$ a monoid.
When restricted to well-formed PIFO trees, it is a \emph{free monoid}: every PIFO tree can uniquely be written as the concatenation of PIFO trees where each PIFO holds at most one element.
Further, given an embedding $f$ of $t_1$ into $t_2$, $\widehat{f}$ is a monoid homomorphism from $\mathsf{PIFOTree}(t_1)$ to $\mathsf{PIFOTree}(t_2)$.
These properties can then be exploited to obtain an abstract proof of \Cref{lemma:preserve-pop}.
Here we have chosen to give an explicit proof.
\end{remark}

\subsection{Preserving \texorpdfstring{$\push$}{push}}

We continue by showing that $\widehat{f}$ is also compatible with $\push$.
To this end, we must show that, if we $\push$ the packet $pkt$ into $q$ with path $pt$, then there exists a path $pt'$ such that $\push$ing $pkt$ into $\widehat{f}(q)$ with path $pt'$ results in a tree that simulates $\push(q, pkt, pt)$.
As it turns out, the correspondence between paths in $t_1$ and paths in $t_2$ can \emph{also} be obtained from $f$---intuitively, it extends paths that lead to a leaf of $t_1$ to the corresponding leaf in $t_2$, while duplicating the ranks as necessary.

For example, let us revisit the topologies $t_1$ and $t_2$ from \Cref{fig:example-embedding-a} and $\push$ a packet into a PIFO tree with topology $t_1$ using the path $(2, 5) :: 7$.
To preserve $\push$ on a tree with topology $t_2$, the path is translated into $(2, 5) :: (1, 5) :: 7$ by the embedding from the same figure.
See the lower row of \Cref{fig:lifting} for an illustration.
We give a formal definition of this translation below.

\begin{definition}
Let $t_1, t_2 \in \Topo$ and let $f$ be an embedding of $t_1$ inside $t_2$.
We define $\widetilde{f}: \mathsf{Path}(t_1) \to \mathsf{Path}(t_2)$ by induction on $t_1$.
First, if $t_1 = *$, then every $pt \in \mathsf{Path}(t_1)$ is of the form $r$ for $r \in \mathsf{Rnk}$; we set $\widetilde{f}(r) = r$.
Since $t_2 = *$ by \Cref{lemma:decompose-embedding}, this is well-defined.

Otherwise, if $t_1 = \mathsf{Node}(\vec{t}_1)$ with $|\vec{t}_1| = n$, then for $1 \leq i \leq n$ let $f_i$ be the embedding of $t_1/i$ into $t_2/f(i)$.
Every element of $\mathsf{Path}(t_1)$ is of the form ${(i,r) :: pt}$ where $1 \leq i \leq n$, $r \in \mathsf{Rk}$ and $pt \in \mathsf{Path}(t_1/i)$.
For every prefix $\addr$ of $f(i)$, we define ${\widetilde{f}((i, r) :: pt)_{\addr}}$ by inverse recursion.
In the base, where ${\addr = f(i)}$, we set ${\widetilde{f}((i, r) :: pt)_{\addr} = \widetilde{f_i}(pt)}$.
In the inductive step, where $\addr = \addr' \cdot j$, we set $\widetilde{f}((i, r) :: pt)_{\addr} = (j, r) :: \widetilde{f}((i, r) :: pt)_{\addr'}$.
Finally, we define $\widetilde{f}((i, r) :: pt) = \widetilde{f}((i, r) :: pt)_\epsilon$.
\end{definition}

We can now show that $\widehat{f}$ commutes with $\push$ if we translate the insertion path according to $\widetilde{f}$:

\begin{restatable}{lemma}{restatepreservepush}%
\label{lemma:preserve-push}
Let $t_1, t_2 \in \Topo$, $pkt \in \mathsf{Pkt}$, $pt \in \mathsf{Path}(t_1)$ and $q \in \mathsf{PIFOTree}(t_1)$.
We have
\[
    \widehat{f}(\push(q, pkt, pt)) = \push(\widehat{f}(q), pkt, \widetilde{f}(pt)).
\]
\end{restatable}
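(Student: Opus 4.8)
The plan is to prove the identity by induction on $t_1$, following the recursive structure of $\widehat{f}$ and $\widetilde{f}$, in the same style as the proof sketch of \Cref{lemma:preserve-pop}. In the base case $t_1 = *$, \Cref{lemma:decompose-embedding} forces $t_2 = *$, we have $\widehat{f}(q) = q$ and $\widetilde{f}(r) = r$, and every path over $*$ is a single rank $r$; writing $q = \mathsf{Leaf}(p)$, both sides then reduce to $\mathsf{Leaf}(\pifopush(p, pkt, r))$, so the claim holds.

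For the inductive step, write $t_1 = \mathsf{Node}(\vec{t}_1)$ with $n = |\vec{t}_1|$ (we may assume $n \geq 1$, since otherwise $\mathsf{Path}(t_1) = \emptyset$), $q = \mathsf{Internal}(\vec{q}, p)$, and $pt = (i, r) :: pt''$ with $pt'' \in \mathsf{Path}(t_1/i)$; recall that $\push(q, pkt, pt) = \mathsf{Internal}(\vec{q}[q'/i], p')$ where $q' = \push(\vec{q}[i], pkt, pt'')$ and $p' = \pifopush(p, i, r)$. Mirroring the inverse recursion used to define $\widehat{f}$, I will prove, by inverse induction on the prefixes $\addr$ of $f(j)$ for $1 \leq j \leq n$, that: \emph{(1)} if $\addr$ is a prefix of $f(i)$, then $\widehat{f}(\push(q, pkt, pt))_{\addr} = \push(\widehat{f}(q)_{\addr}, pkt, \widetilde{f}(pt)_{\addr})$; and \emph{(2)} otherwise $\widehat{f}(\push(q, pkt, pt))_{\addr} = \widehat{f}(q)_{\addr}$. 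Instantiating \emph{(1)} at $\addr = \epsilon$ yields exactly the lemma. For the inner base case $\addr = f(j)$: when $j = i$, unfolding the definition of $\widehat{f}$ on the two $\mathsf{Internal}$ nodes reduces the goal to $\widehat{f_i}(\push(\vec{q}[i], pkt, pt'')) = \push(\widehat{f_i}(\vec{q}[i]), pkt, \widetilde{f_i}(pt''))$, which is the outer induction hypothesis for the proper subtree $t_1/i$, the embedding $f_i$ supplied by \Cref{lemma:decompose-embedding}, the child $\vec{q}[i]$, and the path $pt''$ — here one also uses $\widetilde{f}(pt)_{f(i)} = \widetilde{f_i}(pt'')$ from the definition of $\widetilde{f}$; when $j \neq i$, injectivity and the ancestry condition imply $f(j)$ is not a prefix of $f(i)$, and both sides equal $\widehat{f_j}(\vec{q}[j])$ since the $j$-th child is untouched by the $\push$. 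For the inner inductive step, $\addr$ points to an internal node of $t_2$ with, say, $m$ children; if $\addr$ is a prefix of $f(i)$ then exactly one child index $k$ satisfies that $\addr \cdot k$ is again a prefix of $f(i)$, and the inverse-inductive hypothesis gives case \emph{(1)} for the $k$-th subtree and case \emph{(2)} for the rest, which is precisely how $\push(\widehat{f}(q)_{\addr}, pkt, \widetilde{f}(pt)_{\addr})$ descends into its $k$-th child along the tail of $\widetilde{f}(pt)_{\addr}$ while leaving the other children fixed (using that in $\push$ the PIFO update and the recursive call on a child act on disjoint parts of the tree). If $\addr$ is not a prefix of $f(i)$, no child is either, and case \emph{(2)} propagates.

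The one genuinely delicate point — which I expect to be the main obstacle — is verifying that the root PIFOs agree at each inner step. On one side, $\widehat{f}(\push(q, pkt, pt))_{\addr}$ carries the ``renumber-and-restrict'' of $p' = \pifopush(p, i, r)$: the operation that replaces every index $l$ in the PIFO by the child of $\addr$ leading toward $f(l)$ and deletes $l$ when there is no such child. On the other side, $\push(\widehat{f}(q)_{\addr}, pkt, \widetilde{f}(pt)_{\addr})$ carries $\pifopush(p_{\addr}, k, r)$, where $p_{\addr}$ is the renumber-and-restrict of $p$ and $k$ is the image of $i$. These coincide because renumber-and-restrict is exactly the map on PIFO contents induced by a partial function on the value set: it preserves every retained element's rank and the relative (arrival) order of the retained elements, hence it commutes with $\pifopush$, which inserts strictly by rank with first-in-first-out tie-breaking. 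I would isolate this commutation as a short, standalone lemma about PIFOs; granting it, the remaining steps are routine unfoldings of the definitions of $\push$, $\widehat{f}$, and $\widetilde{f}$.
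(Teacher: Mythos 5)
Your proposal is correct and follows essentially the same strategy as the paper's proof: an outer induction on $t_1$, with the inductive step handled by an inner inverse induction on the prefixes $\addr$ of $f(j)$, proving exactly the same pair of claims (1) and (2) and instantiating (1) at $\addr = \epsilon$. You go further than the paper's sketch by explicitly unpacking the inner base and step cases and by isolating the one nontrivial commutation fact about PIFOs — that the renumber-and-restrict map $p \mapsto p_\addr$ commutes with $\pifopush$ because it preserves ranks and relative arrival order — which is indeed the load-bearing observation that makes the inner inductive step go through.
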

\begin{proof}[Proof sketch]
By induction on $t_1$.
In the base, where $t_1 = *$, we know ${\widehat{f}(q) = q}$ and ${\widetilde{f}(pt) = pt}$, and further, since ${\push(q, pkt, pt) \in \mathsf{PIFOTree}(t_1)}$, we know ${\widehat{f}(\push(q, pkt, pt)) = \push(q, pkt, pt)}$.
The claim then holds trivially.
For the inductive step, let $t_1 = \mathsf{Node}(\vec{t}_1)$ and $n = |\vec{t}_1|$.
We can then write $pt = (i, r) :: pt'$ where $pt' \in \mathsf{Path}(\vec{t}_1[i])$.
Let $\addr \in \mathsf{Addr}(t_1)$ be a prefix of some $f(j)$ with $1 \leq j \leq n$.
We can then prove by inverse induction that the following properties hold:
\begin{enumerate}
    \item
    If $\addr$ is a prefix of $f(i)$, then
    \(
        \widehat{f}(\push(q, pkt, pt))_{\addr}
            = \push(\widehat{f}(q)_{\addr}, pkt, \widetilde{f}(pt)_{\addr}).
    \)
    \item
    Otherwise, if $\addr$ is not a prefix of $f(i)$, then
    \(
        \widehat{f}(\push(q, pkt, pt))_{\addr} = \widehat{f}(q)_{\addr}
    \)
\end{enumerate}
When $\addr = \epsilon$, the first property instantiates to the claim.
\end{proof}

\subsection{Simulation}

Now that we know that lifted embeddings are compatible with the $\push$ and $\pop$ operations on PIFO trees, we can formally state and prove the correctness of this operation as follows.

\begin{theorem}%
\label{theorem:correctness}
Let $t_1, t_2 \in \Topo$, $q \in \mathsf{PIFOTree}(t_1)$.
If $f$ embeds $t_1$ in $t_2$, then $q \preceq \widehat{f}(q)$.
\end{theorem}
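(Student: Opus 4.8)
The plan is to let the lifted embedding \emph{itself} be the witnessing simulation. Concretely, I would set
\[
    R = \{\, (q, \widehat{f}(q)) \mid q \in \mathsf{PIFOTree}(t_1) \,\} \subseteq \mathsf{PIFOTree}(t_1) \times \mathsf{PIFOTree}(t_2),
\]
and verify that $R$ is a simulation. Since $(q, \widehat{f}(q)) \in R$ by construction, this is exactly what is needed to conclude $q \preceq \widehat{f}(q)$. All of the substantive work has already been isolated in the compatibility lemmas, so this last step is essentially a matter of assembling them.

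For clause~(2), suppose $q \mathrel{R} \widehat{f}(q)$ and $\pop(q) = (pkt, q')$. Then \Cref{lemma:preserve-pop} gives $\pop(\widehat{f}(q)) = (pkt, \widehat{f}(q'))$: the released packet is the same, and the residual pair $(q', \widehat{f}(q'))$ is again in $R$. For clause~(3), given any $pt \in \mathsf{Path}(t_1)$ I would choose the translated path $\widetilde{f}(pt) \in \mathsf{Path}(t_2)$; \Cref{lemma:preserve-push} then gives $\push(\widehat{f}(q), pkt, \widetilde{f}(pt)) = \widehat{f}(\push(q, pkt, pt))$, so the resulting pair has the shape $(q'', \widehat{f}(q''))$ and hence lies in $R$. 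Both of these clauses are therefore immediate corollaries of lemmas proved above.

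The clause that needs a little extra care---and the one I expect to be the only real obstacle---is clause~(1): if $\pop(q)$ is undefined, then $\pop(\widehat{f}(q))$ must be undefined as well. \Cref{lemma:preserve-defined} supplies only the converse implication, so I would prove this by a fresh induction on $t_1$ that parallels the proof of \Cref{lemma:preserve-defined}. In the base $t_1 = *$ it is immediate since $\widehat{f}(q) = q$. In the step $t_1 = \mathsf{Node}(\vec{t}_1)$, write $q = \mathsf{Internal}(\vec{q}, p)$: if $p$ is empty then the root PIFO $p_\epsilon$ of $\widehat{f}(q)$ is empty and we are done; otherwise $\pifopop(p) = (i, p')$ with $\pop(\vec{q}[i])$ undefined, and the point is that the head index $i$ of $p$ survives the rank-preserving renumbering performed by $\widehat{f}$ and threads, step by step, down the address $f(i)$, so that $\pop(\widehat{f}(q))$ recurses all the way to $\widehat{f}(q)_{f(i)} = \widehat{f_i}(\vec{q}[i])$ and then fails there by the induction hypothesis applied to the embedding $f_i$ of the smaller topology $t_1/i$. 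Equivalently, one could simply strengthen \Cref{lemma:preserve-defined} to a biconditional, which the same argument proves, after which the theorem becomes a pure corollary. Tracking this ``failure propagation'' through the renumbering is the only mildly delicate bookkeeping; everything else is routine.
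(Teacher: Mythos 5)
Your proposal takes the same route as the paper: witness by the relation $R = \{(q', \widehat{f}(q')) : q' \in \mathsf{PIFOTree}(t_1)\}$, and discharge clauses~(2) and~(3) via \Cref{lemma:preserve-pop} and \Cref{lemma:preserve-push} respectively. Those parts line up exactly with the paper's own argument. Where you diverge---and correctly so---is on clause~(1). You rightly observe that \Cref{lemma:preserve-defined} as stated gives ``$\pop(q')$ defined $\Rightarrow$ $\pop(\widehat{f}(q'))$ defined,'' which is \emph{not} the implication that clause~(1) of the simulation definition demands, namely ``$\pop(q_1)$ undefined $\Rightarrow$ $\pop(q_2)$ undefined.'' (The direction \Cref{lemma:preserve-defined} does supply is in any case already subsumed by clause~(2), which concludes that $\pop(q_2)$ returns a pair.) The paper's own proof cites \Cref{lemma:preserve-defined} for clause~(1) without flagging this mismatch, so you have spotted a genuine gap that the paper glosses over. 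Your proposed repair is sound: a fresh parallel induction showing that when $\pop(q)$ is undefined, the head index $i$ of the root PIFO of $q$ survives the renumbering to the PIFOs $p_\addr$ for every prefix $\addr$ of $f(i)$ (since the renumbering preserves ranks and hence priority order), so the recursive $\pop$ of $\widehat{f}(q)$ is driven down to $\widehat{f}(q)_{f(i)} = \widehat{f_i}(\vec{q}[i])$ and fails there by the induction hypothesis. The alternative you mention---strengthening \Cref{lemma:preserve-defined} to a biconditional, so that the theorem becomes a clean corollary---is equally valid and arguably the tidier fix.
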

\begin{proof}
It suffices to prove that $R = \{ (q', f(q')) : q' \in \mathsf{PIFOTree}(t_1) \}$ is a simulation.
\begin{enumerate}
    \item
    If $\pop(q')$ is defined, then so is $\pop(\widehat{f}(q'))$ by \Cref{lemma:preserve-defined}.
    \item
    If $\pop(q') = (pkt, q'')$, then $\pop(\widehat{f}(q')) = (pkt, \widehat{f}(q''))$ by \Cref{lemma:preserve-pop}.
    \item
    If $\push(q', pkt, pt)$, then choose $pt' = \widetilde{f}(pt)$ to find by \Cref{lemma:preserve-push} that
    \[
        \push(q', pkt, pt)
            \; \mathrel{R} \; \widehat{f}(\push(q', pkt, pt))
            = \push(\widehat{f}(q'), pkt, pt').
        \qedhere
    \]
\end{enumerate}
\end{proof}

\noindent
Given an embedding $f$ of $t_1 \in \Topo$ into $t_2 \in \Topo$, we can now translate a control $c = (s, q, z) \in \mathsf{Control}(t_1)$ into a control $c' = (s, \widehat{f}(q), z') \in \mathsf{Control}(t_2)$, where $z'(s, pkt) = (\widetilde{f}(pt), s')$ when $z(s, pkt) = (pt, s')$.
\Cref{theorem:correctness} then tells us that $c'$ behaves just like $c$; there is no overhead in terms of state, and the translation of the scheduling transaction $z$ is straightforward.

\begin{remark}
Our results can be broadened to a more general model of scheduling.
Say $\pop$ returned a path containing all of the nodes whose PIFOs were popped, along with their ranks, and that a control could react to a $\pop$ using a ``descheduling transaction'' that could look at this $\pop$-path and update the state.
This extended version of $\pop$ would still be compatible with embedding, i.e., \Cref{lemma:preserve-pop} can be updated to show that if $\pop(q) = (pt, pkt, q')$, then $\pop(\widehat{f}(q)) = (\widetilde{f}(pt), pkt, \widehat{f}(q'))$.
\end{remark}

\subsection{A Counterexample}

A natural question to ask next is whether embeddings can be lifted in the opposite way, i.e., if $f$ embeds $t_1$ in $t_2$ and $q_2 \in \mathsf{PIFOTree}(t_2)$, can we use $f$ to find a $q_1 \in \mathsf{PIFOTree}(t_1)$ such that $q_2 \preceq q_1$?
\Cref{thm:simulations-vs-leaves} tells us that this is impossible if $t_2$ has more leaves than $t_1$.
But even if $t_1$ embeds in $t_2$ and $t_2$ does not have any more leaves than $t_1$, such a mapping may be impossible to find.

\begin{figure}[t]
\centering
\begin{tikzpicture}[->, >=stealth', auto, bend angle=10]
\tikzstyle{dot}=[state, inner sep=1.1pt, minimum size=0pt, draw=black, fill=black]
\tikzstyle{box}=[inner sep=0pt, minimum size=0pt, draw=black]
\small
  \node[box] (R) {\fbox{$\cdots{}3,2,1$}};
  \node[box] (C2) [below=16mm of R] {\fbox{$\cdots{}pkt_b$}};
  \node[box] (C1) [below left of=R, node distance=14mm] {\fbox{$\cdots{}pkt_a$}};
  \node[box] (C3) [below right of=R, node distance=14mm] {\fbox{$\cdots{}pkt_c$}};
  \path (R) edge node[swap, pos=.65, xshift=2pt, yshift=-2pt] {$\scriptstyle 1$} (C1);
  \path (R) edge node[swap, pos=.45, xshift=1pt] {$\scriptstyle 2$} (C2);
  \path (R) edge node[pos=.65, xshift=-2pt, yshift=-2pt] {$\scriptstyle 3$} (C3);
\end{tikzpicture}
\hspace{5mm}
\begin{tikzpicture}[->, >=stealth', auto, bend angle=10]
\tikzstyle{dot}=[state, inner sep=1.1pt, minimum size=0pt, draw=black, fill=black]
\tikzstyle{box}=[inner sep=0pt, minimum size=0pt, draw=black]
  \node[box] (epsA) {\fbox{$\cdots{}2,2,1$}};
  \node[box] (N0A) [below of=epsA, xshift=-10mm] {\fbox{$\cdots{}pkt_a$}};
  \node[box] (N1A) [below of=epsA, xshift=10mm] {\fbox{$\cdots{}2,1$}};
  \node[box] (N10A) [below of=N1A, xshift=-10mm] {\fbox{$\cdots{}pkt_b$}};
  \node[box] (N11A) [below of=N1A, xshift=10mm] {\fbox{$\cdots{}pkt_c$}};
  \path (epsA) edge node[swap, pos=.65, xshift=2pt, yshift=-2pt] {$\scriptstyle 1$} (N0A);
  \path (epsA) edge node[xshift=-2pt, pos=.65, yshift=-2pt] {$\scriptstyle 2$} (N1A);
  \path (N1A) edge node[swap, pos=.65, xshift=2pt, yshift=-2pt] {$\scriptstyle 1$} (N10A);
  \path (N1A) edge node[pos=.65, xshift=-2pt, yshift=-2pt] {$\scriptstyle 2$} (N11A);
\end{tikzpicture}
\hspace{5mm}
\begin{tikzpicture}[->, >=stealth', auto, bend angle=10]
\tikzstyle{dot}=[state, inner sep=1.1pt, minimum size=0pt, draw=black, fill=black]
\tikzstyle{box}=[inner sep=0pt, minimum size=0pt, draw=black]
\small
  \node[box] (epsA) {\fbox{$\cdots{}2,2,1,{\bf\color{red} 2}$}};
  \node[box] (N0A) [below of=epsA, xshift=-10mm] {\fbox{$\cdots{}pkt_a$}};
  \node[box] (N1A) [below of=epsA, xshift=10mm] {\fbox{$\cdots{\bf\color{red} 2},2,1$}};
  \node[box] (N10A) [below of=N1A, xshift=-10mm] {\fbox{$\cdots{}pkt_b$}};
  \node[box] (N11A) [below of=N1A, xshift=10mm] {\fbox{$\cdots{}{\color{red}pkt_d},pkt_c$}};
  \path (epsA) edge node[swap, pos=.65, xshift=2pt, yshift=-2pt] {$\scriptstyle 1$} (N0A);
  \path (epsA) edge node[xshift=-2pt, pos=.65, yshift=-2pt] {$\scriptstyle 2$} (N1A);
  \path (N1A) edge node[swap, pos=.65, xshift=2pt, yshift=-2pt] {$\scriptstyle 1$} (N10A);
  \path (N1A) edge node[pos=.65, xshift=-2pt, yshift=-2pt] {$\scriptstyle 2$} (N11A);
\end{tikzpicture}
\caption{Impossibility of simulation. The trees from left to right are $q_1', q_2', q_2''$.
Although $q_1'$ ostensibly simulates~$q_2'$, we can $\push$ $pkt_d$ into $q_2'$ to get $q_2''$, and no $\push$ of $pkt_d$ into $q_1'$ can simulate this.
}%
\label{fig:counterexample}
\end{figure}
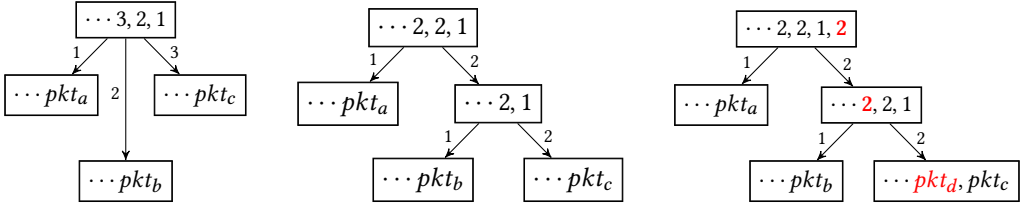

\begin{proposition}
There exist $t_1, t_2 \in \Topo$ s.t.\ for all PIFO trees $q_1 \in \mathsf{PIFOTree}(t_1)$ and $q_2 \in \mathsf{PIFOTree}(t_2)$, $q_2$ does not simulate $q_1$, even though $t_1$ embeds inside $t_2$ and $|t_1| = |t_2|$.
\end{proposition}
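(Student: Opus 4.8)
The plan is to exhibit small witnesses and argue by contradiction. Take $t_1 = \mathsf{Node}(*, *, *)$ and $t_2 = \mathsf{Node}(*, \mathsf{Node}(*, *))$; these are exactly the topologies of \Cref{fig:example-embedding-a}, so $t_1$ embeds in $t_2$ (via $f(\epsilon) = \epsilon$, $f(1) = 1$, $f(2) = 21$, $f(3) = 22$) and $|t_1| = |t_2| = 3$. Fix three distinct packets $pkt_a, pkt_b, pkt_c$ and let $q_2 \in \mathsf{PIFOTree}(t_2)$ be the tree $q_{\mathsf{id}}$ supplied by \Cref{lemma:construct-perm}: the three packets occupy the three leaves one apiece and $\flush(q_2) = pkt_c \cdot pkt_b \cdot pkt_a$ (the tree $q_2'$ of \Cref{fig:counterexample} is one such $q_{\mathsf{id}}$). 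I will show that no well-formed $q_1 \in \mathsf{PIFOTree}(t_1)$ satisfies $q_2 \preceq q_1$, which gives the claimed $t_1$, $t_2$ (and $q_2$).

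Suppose toward a contradiction that $q_2 \preceq q_1$. By \Cref{lemma:sim-flush}, $\flush(q_1) = \flush(q_2)$, so $q_1$ holds exactly $pkt_a, pkt_b, pkt_c$ and releases them in that order; and by \Cref{lemma:flush-interleave}, any two of these that lie in a common leaf of $q_1$ are released in their in-leaf order. The first step forces $q_1$ to have \emph{one packet per leaf}. For any pair $\{x, y\} \subseteq \{pkt_a, pkt_b, pkt_c\}$, \Cref{lemma:permutation} (with $\pi = \mathsf{id}$ and $\pi'$ the transposition of $x$ and $y$) gives a sequence of three $\push$es of a fresh packet $\X$ followed by three $\pop$s carrying $q_2$ to a tree $q^{(xy)}$ with $q^{(xy)} \preceq q_{(xy)}$; a size count shows these $\pop$s retrieve only the three $\X$s, so by the $\push$- and $\pop$-clauses of simulation the whole sequence can be mirrored on $q_1$, yielding $q_1^{(xy)}$ with $q^{(xy)} \preceq q_1^{(xy)}$, hence (\Cref{lemma:sim-flush}, using well-formedness preservation from \Cref{lemma:sanity-check}) $\flush(q_1^{(xy)}) = \flush(q_{(xy)})$. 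Pushing and popping only $\X$s leaves every original packet in its leaf and does not reorder two originals sharing a leaf, so if $x$ and $y$ were co-located in $q_1$ then $q_1^{(xy)}$ would still release them in their original order, whereas $\flush(q_{(xy)})$ releases them swapped---a contradiction. Hence $pkt_a, pkt_b, pkt_c$ sit in three distinct leaves of $q_1$; relabeling leaves, take them in leaves $1, 2, 3$, so by well-formedness the root PIFO of $q_1$ holds exactly the indices $1, 2, 3$ and releases them in head order $1, 2, 3$.

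The second step uses the extra internal node of $t_2$. I $\push$ a new packet $pkt_d$ into the address-$22$ leaf of $q_2$ with ranks chosen so that the new root index becomes the head of the root PIFO, the new index at the address-$2$ node sinks to the bottom of that PIFO, and $pkt_d$ lands just below $pkt_c$; a short flush computation shows the resulting tree $q_2''$ releases packets in the order $pkt_b, pkt_a, pkt_c, pkt_d$. By the $\push$-clause of simulation there is a path $pt_1$ with $q_2'' \preceq \push(q_1, pkt_d, pt_1) =: q_1''$, so $\flush(q_1'') = \flush(q_2'')$ and $q_1''$ must release $pkt_b, pkt_a, pkt_c, pkt_d$. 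But $q_1''$ is $q_1$ with $pkt_d$ dropped into a single leaf $\ell$, so its root PIFO is the old $1, 2, 3$ (still in head order $1, 2, 3$) plus one extra copy of $\ell$. A case analysis on $\ell$ then finishes things: if $\ell \neq 2$ or the new index is not the head, the first packet released is drawn from leaf $1$ or from leaf $\ell$, neither of which is $pkt_b$; if $\ell = 2$ with the new index as head, then $pkt_b$ can come out first (forcing $pkt_d$ below $pkt_b$ in leaf $2$), but the root PIFO then reverts to head order $1, 2, 3$, so the remaining packets come out $pkt_a, pkt_d, pkt_c$ rather than the required $pkt_a, pkt_c, pkt_d$. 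Either way we have a contradiction.

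I expect the one-packet-per-leaf case to be the crux: the reduction to it mirrors the mechanism behind \Cref{thm:simulations-vs-leaves}, whereas ruling out that shape is exactly where the topology of $t_2$ does real work---it can achieve, via a single insertion path that appends an index to \emph{two} PIFOs, a reordering that $q_1$ cannot match by touching its root PIFO only once. The fiddly parts are choosing the three ranks in the $pkt_d$-push so that $q_2''$ releases packets in precisely the order $pkt_b, pkt_a, pkt_c, pkt_d$, and keeping track of well-formedness throughout so that \Cref{lemma:sim-flush} stays applicable.
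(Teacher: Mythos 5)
Your proof follows essentially the same route as the paper's: you use the same pair of topologies, the same gadget (a $pkt_d$ pushed so that the binary tree reorders $pkt_a, pkt_b, pkt_c, pkt_d$ to $pkt_b, pkt_a, pkt_c, pkt_d$), and the same two-stage argument (first force the simulating tree to spread the three originals across distinct leaves, then show that no single $\push$ into the ternary tree can reproduce the new order). Your use of \Cref{lemma:permutation} in the first stage is slightly more explicit than the paper's, and your final case analysis on the target leaf $\ell$ of $pkt_d$ is equivalent to the paper's enumeration of root-PIFO configurations; both are correct.

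The one substantive difference is quantifier scope. The proposition (and the paper's proof) is universal in \emph{both} $q_1$ and $q_2$: it starts from arbitrary $q_1, q_2$ with $q_1$ allegedly simulating $q_2$, and uses the $\push$-clause of simulation to drive $q_2$ into the three-packets-one-per-leaf configuration, carrying $q_1$ along. You instead pick $q_2 = q_{\mathsf{id}}$ up front, which proves the weaker $\exists q_2. \forall q_1$ form. That weaker form already answers the motivating question (the converse of \Cref{theorem:correctness} fails), but as written it does not literally establish the proposition's universal. The fix is small: start from arbitrary $q_1, q_2$ with $q_2 \preceq q_1$, first $\push$ three fresh packets into $q_2$ with paths chosen so they sit at the heads of the three leaves (exactly the paper's $q_2'$), mirror those $\push$es on $q_1$, and then run your argument on the resulting pair. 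Everything else you wrote goes through unchanged.
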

\begin{proof}
Consider the embedding shown in \Cref{fig:example-embedding-a}, and say that $t_1$ and $t_2$ are as labeled in that figure.
Consider any $q_1 \in \mathsf{PIFOTree}(t_1)$ and $q_2 \in \mathsf{PIFOTree}(t_2)$, and suppose towards a contradiction that $q_1$ simulates $q_2$.
Take three distinct packets $pkt_a$, $pkt_b$, and $pkt_c$, and let $q_2'$ be the tree obtained from $q_2$ by pushing $pkt_a$, $pkt_b$, and $pkt_c$ to the front of the first, second and third leaf respectively, such that $pkt_a$, $pkt_b$, and $pkt_c$ will be popped in that order, as depicted in the center in~\Cref{fig:counterexample}.

Because $q_1$ simulates $q_2$, we can $\push$ these packets to obtain a tree $q_1'$ that simulates $q_2'$.
In this tree, $pkt_a$, $pkt_b$, and $pkt_c$ \emph{must} appear in different leaves---otherwise, we could execute a series of $\push$es and $\pop$s (as in \Cref{lemma:permutation}) that changed the order of two packets in $q_2'$ but not in $q_1'$.
Furthermore, because these packets are popped first from $q_2'$, they are also popped first from $q_1'$.
We can therefore assume (without loss of generality) that $q_1'$ is as depicted on the left of \Cref{fig:counterexample}.

Now take a fourth packet $pkt_d$ distinct from $pkt_a$, $pkt_b$ and $pkt_c$, and $\push$ it into $q_2'$ to obtain the PIFO tree $q_2''$, depicted on the right in \Cref{fig:counterexample} (new elements in {\color{red}red}).
In $q_2''$, the first four packets $\pop$ped are $pkt_b$, $pkt_a$, $pkt_c$, and $pkt_d$.
Because $q_1'$ simulates $q_2'$, there exists a way to $\push$ $pkt_d$ into $q_1'$ that results in a PIFO tree $q_1''$ where the first four packets $\pop$ped match those of $q_2''$.
We can constrain the position of $pkt_d$ in $q_1''$: $pkt_d$ \emph{must} be enqueued at the second position of the third leaf of $q_1'$---otherwise, if, for instance, $pkt_d$ appeared in the same leaf as $pkt_a$ in $q_1''$, then $q_2''$ could swap the relative order of $pkt_d$ and $pkt_a$, but $q_1''$ could not (via the same technique as in \Cref{lemma:permutation}).

Together, this means that, in $q_1''$, {\color{red}$3$} must be enqueued near the head of the root PIFO\@.
We have four possibilities for the root PIFO of $q_1''$, listed here along with the first four pops they would effect:
\begin{mathpar}
    321{\color{red}3}: pkt_c, pkt_a, pkt_b, pkt_d
    \and
    32{\color{red}3}1: pkt_a, pkt_c, pkt_b, pkt_d
    \and
    3{\color{red}3}21/{\color{red}3}321: pkt_a, pkt_b, pkt_c, pkt_d
\end{mathpar}
Because none of these match the first four packets popped from $q_2''$, we have reached a contradiction.
Our assumption that $q_1$ simulates $q_2$ must therefore be false, which proves the claim.
\end{proof}

\section{Embedding Algorithms}%
\label{sec:embedding_algs}

So far, we showed that embeddings can be used to replicate the behavior of one PIFO tree with a PIFO tree of a different topology.
To exploit these results in practice, we need to calculate an embedding of one (user-supplied) topology into another (hardware-mandated) topology.
This section proposes two efficient algorithms that can be used to find such an embedding, if it exists.

\subsection{Embedding into Complete \texorpdfstring{$d$}{d}-ary Topologies}

We start by treating the special case where the target of our embedding is a complete topology of fixed arity.
In this case, the algorithm also has a favorable complexity.

\begin{theorem}%
\label{thm:embed1}
Let $t_1, t_2 \in \Topo$ such that $t_2$ is a complete $d$-ary topology.
There is an $O(n\log n)$ algorithm ($n = |\mathsf{Addr}(t_1)|$) to determine whether $t_1$ embeds in $t_2$, and to find such an embedding if so.
\end{theorem}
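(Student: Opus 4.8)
The plan is to reduce the decision (and construction) to a single bottom-up pass over $t_1$ followed by a top-down pass that writes out the embedding. For each node $v$ of $t_1$, let $D(v)$ be the least height $H$ such that the subtree of $t_1$ rooted at $v$ embeds into the complete $d$-ary topology of height $H$, with $v$ going to the root; since all leaves of a complete $d$-ary topology sit at depth $H$, this automatically forces the leaves of the subtree to depth exactly $H$. Three facts drive everything. First, $D$ is computed by a local recurrence (below). Second, a \emph{padding lemma}: if the subtree at an \emph{internal} $v$ embeds into a complete $d$-ary topology of height $H$, it embeds into one of every height $\ge H$; I would prove this by repeatedly routing each leaf image one level deeper, a move that preserves all three embedding conditions and injectivity precisely because the subtree's root is internal, hence not itself a leaf. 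Third, combining these, $t_1$ embeds in $t_2$ (of height $h$) iff $D(\epsilon) \le h$ --- with the sole exception of the degenerate case $t_1 = {*}$, which embeds in $t_2$ iff $t_2 = {*}$ and is dispatched in one line; this last wrinkle is exactly the ``root is also a leaf'' situation the padding lemma must avoid.

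Next I would establish the recurrence. For a leaf, $D(v) = 0$. For an internal node $v$ with children $c_1,\dots,c_k$ and values $D_i := D(c_i)$,
\[
  D(v) = \max\Bigl(1 + \max_{1\le i\le k} D_i,\ \bigl\lceil \log_d \sum_{i=1}^{k} d^{D_i} \bigr\rceil\Bigr),
\]
where the first argument is only relevant when $k=1$. The justification: by the ancestry/prefix condition (and \Cref{lemma:decompose-embedding} for the recursive structure), the images of the siblings $c_1,\dots,c_k$ must form an antichain strictly below $f(v)$; if $c_i$ is placed at depth $\delta_i$ below $f(v)$, its subtree must embed into the complete $d$-ary tree of height $H-\delta_i$ hanging there, forcing $\delta_i + D_i \le H$ and $\delta_i \ge 1$, while an antichain with depths $\delta_i$ exists iff the Kraft--McMillan inequality $\sum_i d^{-\delta_i} \le 1$ holds. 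Choosing $\delta_i = H - D_i$ minimizes the sum, so feasibility of height $H$ reduces to $\sum_i d^{D_i} \le d^H$ (together with $H \ge 1 + \max_i D_i$), giving the formula. The upper bound (an embedding really exists at height $D(v)$) follows by turning this around: realize the Kraft-feasible depths as an explicit $d$-ary prefix code below $f(v)$, recurse into each child, and invoke the padding lemma when $h$ exceeds $D(\epsilon)$. The lower bound (no smaller complete $d$-ary topology works) is the same Kraft computation read as a lower bound, by induction on $t_1$.

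The algorithm is then: (1) compute $D$ for all nodes by a post-order traversal; (2) accept iff $D(\epsilon) \le h$ (with $t_1 = {*}$ handled specially); (3) if accepted, do a pre-order traversal setting $f(\epsilon) = \epsilon$ and, at each internal node already mapped to an address $\alpha$ with residual height $h - |\alpha| \ge D(v)$, pick depths $\delta_i$ with $\delta_i + D_i \le h - |\alpha|$ and $\sum_i d^{-\delta_i} \le 1$ and lay the children out below $\alpha$ as a $d$-ary prefix code. For complexity, the only nontrivial step is evaluating $\lceil \log_d \sum_i d^{D_i}\rceil$ at a node with $k$ children: I would sort the multiset $\{D_i\}$ and sweep from low to high levels propagating carries in base $d$, keeping a sparse list-based representation of the digits so that the exponentially large sum is never materialized; this costs $O(k\log k)$, and since $\sum_v k_v$ equals the number of non-root nodes of $t_1$, we get $\sum_v k_v \log k_v \le \bigl(\sum_v k_v\bigr)\log n = O(n\log n)$.

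I expect the principal obstacle to be pinning down the recurrence precisely: recognizing that sibling placement is exactly a Kraft/prefix-code packing problem, getting the $k=1$ and $t_1 = {*}$ corner cases right, and proving the matching lower bound that excludes any smaller complete $d$-ary target. The padding lemma, though conceptually simple, also requires a careful case analysis to check that pushing leaf images downward preserves every embedding condition and injectivity; the subtle point there --- a node that is simultaneously root and leaf --- is the very thing that forces the separate treatment of $t_1 = {*}$.
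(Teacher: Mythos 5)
Your proof is correct, and it takes a genuinely different route from the paper's. The paper runs a greedy bottom-up merge reminiscent of Huffman coding: at each node of $t_1$, it keeps a min-priority queue of the minimum-height complete $d$-ary targets for the subtrees, repeatedly pulls up to $d$ items of equal minimum height $m$ and replaces them with one item of height $m+1$, and justifies optimality by an exchange argument. You instead derive a closed-form recurrence $D(v) = \max\bigl(1 + \max_i D_i,\ \lceil \log_d \sum_i d^{D_i} \rceil\bigr)$ via the Kraft--McMillan inequality, which treats sibling placement as a prefix-code packing problem; the two characterizations compute the same minimum height, but yours makes the optimum explicit rather than obtained through a greedy process. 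Each approach has something to recommend it: the paper's greedy loop directly builds the padded target topology (and hence the embedding) as it goes, whereas your recurrence needs a separate top-down pass to materialize the codewords; on the other hand, your formula makes the lower bound transparent (it is literally the Kraft computation read backwards), you isolate the \emph{padding lemma} as an explicit ingredient that the paper uses only implicitly in the line ``an embedding of the whole topology is thus possible provided that $t$ is no taller than $t_2$,'' and you catch the $t_1 = {*}$ corner case (a root that is also a leaf cannot be padded downward, so ${*}$ embeds only in ${*}$), which the paper's sketch silently glosses over. The only parts of your write-up that would need to be fleshed out for a full proof are the amortized analysis of the base-$d$ carry sweep (the bound is really $O(k \log_d k)$ per node after sorting, which still sums to $O(n\log n)$) and the observation that the prefix code realizing depths $\delta_i = H - D_i$ can be laid out greedily after sorting; both are routine.
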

\begin{proof}
We construct an embedding $f$ of $t_1$ into complete $d$-ary topology of minimal height, in a greedy bottom-up fashion reminiscent of the construction of optimal Huffman codes~\cite{CoverThomas06}.
Let $\addr \in \mathsf{Addr}(t_1)$ and suppose, at some stage of the algorithm, we have for each $\addr \cdot i \in \mathsf{Addr}(t_1)$ an embedding of $t_1/(\addr \cdot i)$ inside a complete $d$-ary topology $t^d_i$ of minimum height---in particular, when $\addr \cdot i$ is a leaf, this is the trivial embedding.
To get a minimum-height embedding of $t_1/\addr$ inside a complete $d$-ary topology, we insert all of these topologies $t^d_i$ into a min-priority queue, ranked by height.
We repeat the following until the queue has one element:
\begin{enumerate}
    \item Extract up to $d$ elements of the same minimum priority (say $m$).
    \item Create a new topology of height $m+1$ with those elements as children.
    If desired, pad this new topology---add dummy leaves just below the root---to make it a \emph{complete} $d$-ary topology.
    \item Insert this new topology with height $m+1$ into the priority queue with priority $m+1$.
\end{enumerate}
Note, if there is only one topology of minimum height $m$, we do not need to form a new topology with one child.
We simply reinsert the topology, unchanged, into the queue but with priority $m+1$.

It follows from an exchange argument (given below; see~\cite[Ch.~4]{KleinbergTardo05} and~\cite{dangelo}) that the single remaining topology $t$ is of the minimum height into which $t_1/\addr$ embeds.
An embedding of the whole topology is thus possible provided that $t$ is no taller than $t_2$.
The time required to process a node with $d_s$ children is $O(d_s \log d_s)$ for the maintenance of the priority queue and $O(d_s)$ for all other operations, or $O(n\log n)$ in all, where $n = |t_1|$.

Here is the exchange argument.
At any stage of the algorithm, let $t_1', \dots, t_k'$ be a maximal set, up to size $d$, of queue elements with minimum priority $m$.
Suppose the action of forming a new topology of height $m+1$ with these children and inserting it into the queue was not the right thing to do, i.e., some other action would have led to a topology $t'$ of optimal height.
Assume without loss of generality that $t_1'$ occurs deepest in $t'$; thus the path from the root to $t_1'$ is the longest among the queue elements $t_i'$.
Permute $t'$ so that $t_1'$ occurs rightmost.
The topologies $t_2', \dots, t_k'$ can be swapped for the siblings of $t_1'$ in $t'$, or, if $t_1'$ has fewer than $k$ siblings, added as a new child to the parent of $t_1'$ in $t'$, without increasing the height of the topology.
Thus the action of the algorithm was correct.

For the case $k=1$ with more than one topology left in the queue, the argument is the same, with the observation that in any embedding, $t_1'$ must have a sibling of greater height.
Therefore, it does not hurt to consider $t_1'$ to be of height $m+1$ even though its height is actually $m$.
\end{proof}

\subsection{Embedding into Arbitrary Topologies}

We now turn our attention to the more general case, where the target of embedding is an arbitrary PIFO topology.
An embedding algorithm can still be achieved here, at the cost of a higher complexity.

\begin{theorem}%
\label{thm:embed2}
Let $t_1, t_2 \in \Topo$, such that each node in $t_1$ has at most $d$ children.
There is a polynomial-time algorithm to determine whether $t_1$ embeds in $t_2$, and to find such an embedding if so.
\end{theorem}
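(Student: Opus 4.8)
The plan is to reduce the question to a dynamic program indexed by pairs of addresses, one from $t_1$ and one from $t_2$. For $\addr_1 \in \mathsf{Addr}(t_1)$ and $\addr_2 \in \mathsf{Addr}(t_2)$, let $\mathsf{Emb}(\addr_1, \addr_2)$ record whether $t_1/\addr_1$ embeds into $t_2/\addr_2$ via a map that sends $\addr_1$ to $\addr_2$. The answer we want is $\mathsf{Emb}(\epsilon, \epsilon)$, and an explicit embedding is recovered by remembering, for each entry, the witnessing choices. The base cases are immediate from the definition of embedding: if $\addr_1$ is a leaf then $\mathsf{Emb}(\addr_1, \addr_2)$ holds exactly when $\addr_2$ is a leaf of $t_2$, and an internal $\addr_1$ can never be sent to a leaf (forced by the ancestry condition).

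The recursive case is the heart of the matter. Write the children of $\addr_1$ as $\addr_1 \cdot 1, \dots, \addr_1 \cdot k$, with $k \le d$. Then $\mathsf{Emb}(\addr_1, \addr_2)$ should hold iff $\addr_2$ is internal and the subtrees $t_1/(\addr_1 \cdot i)$ can be planted into $t_2/\addr_2$ so that their roots occupy pairwise-incomparable proper descendants of $\addr_2$. I would capture this with an auxiliary predicate $\mathsf{EmbForest}(F, \addr_2)$, where $F$ ranges only over \emph{sets of children-subtrees of a single $t_1$-node}. Its defining recursion distributes $F$ among the children $\addr_2 \cdot 1, \dots, \addr_2 \cdot m$ of $\addr_2$: since each planted root lies strictly below $\addr_2$, it lies in exactly one of the subtrees $t_2/(\addr_2 \cdot j)$, so a distribution is simply a function $F \to \{1, \dots, m\}$; a child receiving a singleton $\{s\}$ is satisfied exactly when $s$ embeds somewhere inside $t_2/(\addr_2 \cdot j)$, and a child receiving two or more subtrees recurses into $\mathsf{EmbForest}$ at that child. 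Elements routed to distinct children are incomparable for free, so incomparability only needs checking within a part, and the recursion takes care of that. ``$s$ embeds somewhere inside $t_2/c$'' is itself a cheap auxiliary: it holds iff $s$ embeds at $c$ or it embeds somewhere inside some child of $c$, computed by a single bottom-up pass over $t_2$. Soundness and completeness of all these recurrences I would prove by induction on $t_2$, leaning on the subtree-decomposition of embeddings from \Cref{lemma:decompose-embedding}.

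For the running time: every forest $F$ that can appear is a subset of the children of some $t_1$-node, so at most $2^d\,|\mathsf{Addr}(t_1)|$ forests arise; with the $|\mathsf{Addr}(t_2)|$ choices of $\addr_2$ this is a polynomial-size table for fixed $d$. Each $\mathsf{EmbForest}(F, \addr_2)$ entry asks for a partition of the $\le d$ elements of $F$ across the children of $\addr_2$ meeting the per-child feasibility conditions above; a subset-indexed dynamic program over $F$, sweeping the children one at a time, decides this in time polynomial in $2^{|F|}$ and the number of children. Everything else (including the ``somewhere inside'' auxiliary and the readout of an actual embedding) is cheap, so the procedure runs in polynomial time overall.

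I expect the main obstacle to be pinning down the recursive characterization rather than the complexity bookkeeping. The subtle point is that, unlike in classical subtree-homeomorphism, \emph{several} children of a single $t_1$-node may legitimately be absorbed into \emph{one} child-subtree of the corresponding $t_2$-node --- exactly as in \Cref{fig:example-embedding-b}, where three of the root's children all land inside a single subtree. This rules out the tempting shortcut of reducing each node step to a bipartite matching between the two nodes' children, and it is precisely this grouping of $t_1$-children that makes the degree bound $d$ indispensable: the subset-indexed subroutine is exponential in $d$ but polynomial in $|t_1|$ and $|t_2|$, which is what the statement asks for. Establishing that every embedding induces a valid distribution of a node's children, and conversely that any valid distribution can be assembled into an embedding, is the delicate step, but it is a routine induction once these invariants are stated precisely.
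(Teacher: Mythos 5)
Your proposal is correct, and it shares the paper's outer structure --- a dynamic program on pairs $(\addr_1, \addr_2)$ indicating whether $t_1/\addr_1$ embeds into $t_2/\addr_2$ with root going to root --- but it solves the per-pair subproblem by a genuinely different method. For fixed $(\addr_1, \addr_2)$ the question is whether the $d_1 \le d$ children of $\addr_1$ can be sent to a $d_1$-element antichain of descendants of $\addr_2$. The paper tackles this head-on: it builds a conflict graph over (child of $\addr_1$, descendant of $\addr_2$) pairs (after a push-down normalization that keeps only the deepest viable targets) and brute-forces an independent set, costing $O(m^d)$ per pair with $m=|\mathsf{Addr}(t_2)|$ and hence $O(nm^{d+1})$ overall. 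Your $\mathsf{EmbForest}(F,\addr_2)$ instead descends $t_2$ one level at a time and uses a subset-indexed DP to route the $\le d$ subtrees in $F$ through the children of $\addr_2$, replacing the ``deepest target'' normalization with the ``embeds somewhere inside'' auxiliary; incomparability falls out for free across children and is pushed into the recursion within a child. The state space blows up by a factor of $2^d$ compared to the paper's single $T$ table, but each entry is now solvable in time polynomial in $2^d$ and the local arity rather than $m^d$, so the total running time becomes something on the order of $\mathrm{poly}(2^d)\cdot n m^2$ --- still exponential only in $d$, but with fixed-degree dependence on $|t_2|$ rather than the paper's $m^{d+1}$. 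Your correctness argument is sound: the soundness/completeness induction on $t_2$ goes through (the key observation that two or more items routed into the same child $\addr_2\cdot j$ must all lie strictly below it, since otherwise they would be comparable, is exactly what justifies the recursive call), and the objection to a naive bipartite matching between children --- that several $t_1$-children may collapse into one $t_2$-subtree, as in \Cref{fig:example-embedding-b} --- is precisely the subtlety that both your subset DP and the paper's independent-set search are designed to absorb.
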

\begin{proof}
The algorithm is based on bottom-up dynamic programming.
Construct a boolean-valued table~$T$ with entries $T(\addr_1,\addr_2)$ for $\addr_1 \in \mathsf{Addr}(t_1)$ and $\addr_2 \in \mathsf{Addr}(t_2)$ that says whether there exists an embedding that maps $\addr_1$ to $\addr_2$.
Suppose that $\addr_1 \in \mathsf{Addr}(t_1)$ has $d_1$ children, that we have determined the values of $T(\addr_1', \addr_2')$ such that $\addr_1$ (resp. $\addr_2$) is an ancestor (i.e., prefix) of $\addr_1'$ (resp. $\addr_2'$), and that we wish to determine $T(\addr_1,\addr_2)$.
The value will be $\mathsf{true}$ if we can find target nodes for the children of $\addr_1$ among the descendants of $\addr_2$ that form an antichain w.r.t.\ the ancestor/descendant relation.

Without loss of generality, we can ignore node $\addr_2'$ as a candidate target for a child $\addr_1 \cdot i$ if $\addr_2'$ has a descendant $\addr_2''$ such that $T(\addr_1 \cdot i,\addr_2'')=\mathsf{true}$; any embedding that maps $\addr_1 \cdot i$ to $\addr_2'$ can map $\addr_1 \cdot i$ to $\addr_2''$ instead.
We now form a graph with nodes $(\addr_1',\addr_2')$ such that $T(\addr_1',\addr_2')=\mathsf{true}$ and edges $((\addr_1',\addr_2'),(\addr_1'',\addr_2''))$ such that either $\addr_1'=\addr_1''$, or $\addr_2'=\addr_2''$, or $\addr_2'$ is an ancestor of $\addr_2''$.
An independent set of size $d_1$ in this graph consists of pairs $(\addr_1',\addr_2')$ such that the first components are all the $d_1$ children of $\addr_1$ and the second components form an antichain among the descendants of $\addr_2$ to which the first components can map.
If such an independent set exists, then we set $T(\addr_1,\addr_2)$ to $\mathsf{true}$.

Let $n=|\mathsf{Addr}(t_1)|$ and $m=|\mathsf{Addr}(t_2)|$.
To calculate $T(\addr_1, \addr_2)$ for each of the $nm$ pairs $\addr_1, \addr_2$, we need to find an independent set in a graph with $dm$ nodes, which can be solved with brute force in time $O(m^d)$, and this dominates the complexity.
The total time of the algorithm is $O(nm^{d+1})$.
\end{proof}

\section{Implementation}%
\label{sec:implementation}

\definecolor{red}{rgb}{1.0, 0.0, 0.0}
\definecolor{skyblue}{rgb}{0.53, 0.81, 0.92}
\definecolor{forestgreen}{rgb}{0.13, 0.55, 0.13}
\definecolor{lightsalmon}{rgb}{1.0, 0.63, 0.48}
\definecolor{dodgerblue}{rgb}{0.12, 0.56, 1.0}
\definecolor{darkseagreen}{rgb}{0.56, 0.74, 0.56}
\definecolor{orchid}{rgb}{0.85, 0.44, 0.84}

\newcommand\cfbox[2]{\fcolorbox{black}{#1}{#2}}

\newcommand\abox{\cfbox{red}{A}}
\newcommand\bbox{\cfbox{skyblue}{B}}
\newcommand\cbox{\cfbox{forestgreen}{C}}
\newcommand\dbox{\cfbox{lightsalmon}{D}}
\newcommand\ebox{\cfbox{dodgerblue}{E}}
\newcommand\ffbox{\cfbox{darkseagreen}{F}}
\newcommand\gbox{\cfbox{orchid}{G}}

We implemented PIFO trees as described in~\S\ref{sec:overview}, along with the embedding algorithm covered in~\Cref{thm:embed1}, in OCaml.
Our implementation includes a simulator that can ``run'' a traffic sample of packets through a PIFO tree and visualize the results with respect to $\push$ and $\pop$ time.

We show that our scheduler gives expected results when programming standard scheduling algorithms.
We compare the traces from our scheduler to those obtained from a state-of-the-art hardware switch and find reasonable correlation.
This reinforces the idea that PIFO trees are a reasonable primitive for standard programmable scheduling.
We then move beyond the capabilities of a modern switch by showing the result of running hierarchical algorithms on our scheduler.
Finally, we show how our implementation supports the embedding of one PIFO tree into another without having to program the latter tree anew and without appreciable loss in performance.

\subsection{Preliminaries}%
\label{sec:impl_preliminaries}

A PIFO tree is programmed, as in the presentation thus far, by specifying a topology and a control.
The simulator takes a collection of incoming packets, represented as a \pcap, and attempts to $\push$ the packets into the tree.
It paces these $\push$es to match the packets' timestamps in the given \pcap.
The simulator also requires a \emph{line rate}, which is the frequency at which it automatically calls $\pop$.

For ease of presentation, we have standardized a few things that are not fundamental to our implementation.
All our traffic samples contain $60$ packets of the same size, coming from seven source addresses that we label A through G.
We partition the traffic into flows simply based on these addresses, though more sophisticated partitioning is of course possible.

An important subtlety is \emph{saturation}: we need to enqueue packets quickly enough that the switch actually has some backlog of packets and therefore has to make a scheduling decision among those packets.
This is easy to do in our OCaml simulator, where we have total control: we set packets to arrive at the rate of $10$ packets/s and simulate a slower line rate of $4$ packets/s.
We also achieve similar saturation when running comparative experiments on the hardware switch.


\paragraph{Reading our visualizations}
In the visualizations that follow, the x-axis, read left to right, shows the time (in seconds) since the simulation started; the y-axis, read top to bottom, shows packets in the order they arrive at the scheduler.
A colored horizontal bar represents a packet, and is colored based on its source address: \abox\bbox\cbox\dbox\ebox\ffbox\gbox.
A horizontal bar is drawn starting at the time when the packet arrives at the scheduler and ending at the time when it is released by the scheduler, so shorter bars indicate preferential service.
Later packets tend to have longer bars precisely because we are saturating the scheduler.
To notice trends generally, it is useful to zoom out and look at the ``shadow'' that a certain-colored flow casts.
To focus more specifically on which packets are released when, it is useful to zoom in and study the right edges of the horizontal bars.

\subsection{Standard Algorithms}

\begin{table}\sffamily
  \caption{Standard scheduling algorithms as run against our scheduler and a programmable hardware switch.}%
  \begin{tabular}{l*2{C}@{}}
  \toprule
   & Through a PIFO tree & Through a programmable switch \\
  \midrule
  \vertlabel{FCFS} & \pic[fcfs] & \pic[tofino_fcfs]  \\
  \vertlabel{Strict} & \pic[strict] & \pic[tofino_strict] \\
  \vertlabel{R'Robin} & \pic[rr] & \pic[tofino_rr] \\
  \vertlabel{WFQ} & \pic[wfq] & \pic[tofino_wfq] \\
  \bottomrule
  \end{tabular}%
  \label{table:standard}
  \end{table}

We first show that PIFO trees are a reasonable primitive for standard, non-hierarchical algorithms.
We use a simple ternary topology: one root node with three leaves below it.
The standard algorithms discussed below correspond to the rows of \Cref{table:standard}.
We have constructed a \pcap~of packets with sources \abox\bbox\cbox, and, in the left column, we schedule that \pcap~using PIFO trees.

The first come, first served (FCFS) algorithm transmits packets in the order they are received.
The strict algorithm strictly prefers C to B and B to A, up to availability of input packets.
It is easy to see the expected trend in the visualization: focus on the ``shadow'' casy by each flow; A's is the smallest, followed by B's, followed by C's.
Round-robin seeks to alternate between flows A, B, and C.
Weighted fair queueing (WFQ)~\cite{demers} allows user-defined weights for each flow; in the table we show a split of 10/20/30 between A/B/C.
The shadows guide intuition here as well.

As a further, albeit informal, endorsement of PIFO trees, we compare our tree-based scheduling trends to those of a modern switch.
The right column of \Cref{table:standard} shows the same algorithms and \pcap s scheduled by the FIFO-based programmable scheduler exposed by our hardware switch.
We find the trends to be in agreement.
Our testbed consists of two host Linux machines connected by a hardware switch.
To keep the total number of packets small enough to present here, we use a slow ingress rate of $60$ kbps and an egress rate of $24$ kbps, with packets of size $1$ kb.
Although it is untenable to visualize them here, we also conduct similar experiments with more realistic ingress and egress rates ($2.5$ Mbps and $1$ Mbps respectively) and observe similar scheduling trends.

We point out a few oddities in the hardware experiments:
\begin{enumerate}
\item
The first few packets going into the switch spend so little time in the queue that the horizontal bars representing them do not render in our visualizations.
\item
The switch releases packets not one by one but in batches of four.
This trend does not appear in our more realistic experiments. We conjecture that this stagger is an artifact of our unrealistically slow rates: the egress thread appears to pull packets in batches.
\item
The packets do not arrive in as regular a pattern as in software; see the left edges of the horizontal bars.
This is perhaps to be expected in a real-world experiment.
\end{enumerate}

\subsection{Hierarchical Algorithms}

\begin{table}\sffamily
  \caption{HPFQ running a traffic flow. Share ratios as indicated.}%
  \begin{tabular}{l*2{C}@{}}
  \toprule
   & Tree sketch & Visualization \\
  \midrule
  \vertlabel{HPFQ} & \begin{tikzpicture}[->, >=stealth', auto, bend angle=10]
\tikzstyle{dot}=[state, inner sep=1.3pt, minimum size=0pt, draw=black, fill=black]
\tikzstyle{box}=[inner sep=0pt, minimum size=0pt, draw=black]
\small
\node[box] (epsA) [node distance=5cm] {\fbox{WFQ: 80/20}};
\node[box] (N0A) [below of=epsA, xshift=8mm] {\cbox};
\node[box] (N1A) [below of=epsA, xshift=-8mm] {\fbox{WFQ: 75/25}};
\node[box] (N10A) [below of=N1A, xshift=-8mm] {\abox};
\node[box] (N11A) [below of=N1A, xshift=8mm] {\bbox};
\path (epsA) edge node[swap, pos=.65, xshift=2pt, yshift=4.5pt] {} (N0A);
\path (epsA) edge node[xshift=-2pt, pos=.65, yshift=4.5pt] {} (N1A);
\path (N1A) edge node[swap, pos=.65, xshift=2.5pt, yshift=-2pt] {} (N10A);
\path (N1A) edge node[pos=.65, xshift=-2pt, yshift=-2pt] {} (N11A);
\end{tikzpicture} & \pic[hpfq] \\
  \bottomrule
  \end{tabular}%
  \label{table:hpfq}
\end{table}

\begin{table}\sffamily
  \caption{Hierarchical algorithms (rows 1 and 3), and compiling ternary trees to binary (rows 2 and 4).}%
  \begin{tabular}{l*2{C}@{}}
  \toprule
   & Tree sketch & Visualization \\
  \midrule
  \vertlabel{TwoPol} &  \begin{tikzpicture}[->, >=stealth', auto, bend angle=10]
 \tikzstyle{dot}=[state, inner sep=1.3pt, minimum size=0pt, draw=black, fill=black]
 \tikzstyle{box}=[inner sep=0pt, minimum size=0pt, draw=black]
 \small
 \node[box] (root) [node distance=5cm] {\fbox{WFQ: 10/10/80}};
 \node[box] (N1) [below of=root, xshift=-8mm] {\abox};
 \node[box] (N2) [below of=root, xshift=0mm] {\bbox};
 \node[box] (root2) [below of=root, xshift=18mm] {\fbox{Strict: E$>$D$>$C}};
 \node[box] (N1') [below of=root2, xshift=-8mm] {\cbox};
 \node[box] (N2') [below of=root2, xshift=0mm] {\dbox};
 \node[box] (N3') [below of=root2, xshift=8mm] {\ebox};
 \path (root) edge node[swap, pos=.65, xshift=2pt, yshift=4.5pt] {} (N1);
 \path (root) edge node[xshift=-2pt, pos=.65, yshift=4.5pt] {} (N2);
 \path (root) edge node[swap, pos=.65, xshift=2.5pt, yshift=-2pt] {} (root2);
 \path (root2) edge node[swap, pos=.65, xshift=2pt, yshift=4.5pt] {} (N1');
 \path (root2) edge node[xshift=-2pt, pos=.65, yshift=4.5pt] {} (N2');
 \path (root2) edge node[swap, pos=.65, xshift=2.5pt, yshift=-2pt] {} (N3');
 \end{tikzpicture} & \pic[fairstrict2tier] \\
  \vertlabel{TwoPol Bin} &  \begin{tikzpicture}[->, >=stealth', auto, bend angle=10]
  \tikzstyle{dot}=[state, inner sep=1.3pt, minimum size=0pt, draw=black, fill=black]
  \tikzstyle{box}=[inner sep=0pt, minimum size=0pt, draw=black]
  \small
  \node[box] (root) [node distance=5cm] {\fbox{WFQ: 10/10/80}};
    \node[box, fill=gray] (N2) [below of=root, xshift=-8mm] {\fbox{T}};
      \node[box] (N21) [below of=N2, xshift=-4mm] {\abox};
      \node[box] (N22) [below of=N2, xshift=4mm] {\bbox};
    \node[box] (N1) [below of=root, xshift=9mm] {\fbox{{Strict: E$>$D$>$C}}};
      \node[box, fill=gray] (N11) [below of=N1, xshift=-4mm] {\fbox{T}};
        \node[box] (N111) [below of=N11, xshift=-4mm] {\cbox};
        \node[box] (N112) [below of=N11, xshift=4mm] {\dbox};
      \node[box] (N12) [below of=N1, xshift=4mm] {\ebox};
  \path (root) edge node[swap, pos=.65, xshift=2pt, yshift=4.5pt] {} (N1);
  \path (root) edge node[xshift=-2pt, pos=.65, yshift=4.5pt] {} (N2);
  \path (N1) edge node[swap, pos=.65, xshift=2pt, yshift=4.5pt] {} (N11);
  \path (N1) edge node[xshift=-2pt, pos=.65, yshift=4.5pt] {} (N12);
  \path (N2) edge node[swap, pos=.65, xshift=2pt, yshift=4.5pt] {} (N21);
  \path (N2) edge node[xshift=-2pt, pos=.65, yshift=4.5pt] {} (N22);
  \path (N11) edge node[swap, pos=.65, xshift=2pt, yshift=4.5pt] {} (N111);
  \path (N11) edge node[xshift=-2pt, pos=.65, yshift=4.5pt] {} (N112);
\end{tikzpicture} & \pic[fairstrict2tier_bin] \\
  \vertlabel{3Tier3} &  \begin{tikzpicture}[->, >=stealth', auto, bend angle=10]
 \tikzstyle{dot}=[state, inner sep=1.3pt, minimum size=0pt, draw=black, fill=black]
 \tikzstyle{box}=[inner sep=0pt, minimum size=0pt, draw=black]
 \small
 \node[box] (root) [node distance=5cm] {\fbox{WFQ: 40/40/20}};
 \node[box] (N1) [below of=root, xshift=-8mm] {\abox};
 \node[box] (N2) [below of=root, xshift=0mm] {\bbox};
 \node[box] (root2) [below of=root, xshift=8mm] {\fbox{RR}};
 \node[box] (N1') [below of=root2, xshift=-8mm] {\cbox};
 \node[box] (N2') [below of=root2, xshift=0mm] {\dbox};
 \node[box] (root3) [below of=root2, xshift=16mm] {\fbox{WFQ: 10/40/50}};
 \node[box] (N1'') [below of=root3, xshift=-8mm] {\ebox};
 \node[box] (N2'') [below of=root3, xshift=0mm] {\ffbox};
 \node[box] (N3) [below of=root3, xshift=8mm] {\gbox};
 \path (root) edge node[swap, pos=.65, xshift=2pt, yshift=4.5pt] {} (N1);
 \path (root) edge node[xshift=-2pt, pos=.65, yshift=4.5pt] {} (N2);
 \path (root) edge node[swap, pos=.65, xshift=2.5pt, yshift=-2pt] {} (root2);
 \path (root2) edge node[swap, pos=.65, xshift=2pt, yshift=4.5pt] {} (N1');
 \path (root2) edge node[xshift=-2pt, pos=.65, yshift=4.5pt] {} (N2');
 \path (root2) edge node[swap, pos=.65, xshift=2.5pt, yshift=-2pt] {} (root3);
 \path (root3) edge node[swap, pos=.65, xshift=2pt, yshift=4.5pt] {} (N1'');
 \path (root3) edge node[xshift=-2pt, pos=.65, yshift=4.5pt] {} (N2'');
 \path (root3) edge node[swap, pos=.65, xshift=2.5pt, yshift=-2pt] {} (N3);
 \end{tikzpicture} & \pic[fair3tier] \\
  \vertlabel{3Tier3 Bin} &  \begin{tikzpicture}[->, >=stealth', auto, bend angle=10]
  \tikzstyle{dot}=[state, inner sep=1.3pt, minimum size=0pt, draw=black, fill=black]
  \tikzstyle{box}=[inner sep=0pt, minimum size=0pt, draw=black]
  \small
  \node[box] (root) [node distance=5cm] {\fbox{WFQ: 40/40/20}};
    \node[box] (N1) [below of=root, xshift=12mm, yshift=4mm] {\fbox{RR}};
      \node[box] (N11) [below of=N1, xshift=6mm] {\fbox{WFQ: 10/40/50}};
        \node[box, fill=gray] (N111) [below of=N11, xshift=-4mm] {\fbox{T}};
          \node[box] (N1111) [below of=N111, xshift=-4mm] {\ebox};
          \node[box] (N1112) [below of=N111, xshift=4mm] {\ffbox};
        \node[box] (N112) [below of=N11, xshift=4mm] {\gbox};
      \node[box, fill=gray] (N12) [below of=N1, xshift=-8mm] {\fbox{T}};
        \node[box] (N121) [below of=N12, xshift=-8mm] {\cbox};
        \node[box] (N122) [below of=N12, xshift=0mm] {\dbox};

    \node[box, fill=gray] (N2) [below of=root, xshift=-12mm, yshift=4mm] {\fbox{T}};
      \node[box] (N21) [below of=N2, xshift=-4mm] {\abox};
      \node[box] (N22) [below of=N2, xshift=4mm] {\bbox};

  \path (root) edge node[swap, pos=.65, xshift=2pt, yshift=4.5pt] {} (N1);
  \path (root) edge node[xshift=-2pt, pos=.65, yshift=4.5pt] {} (N2);
  \path (N1) edge node[swap, pos=.65, xshift=2pt, yshift=4.5pt] {} (N11);
  \path (N1) edge node[xshift=-2pt, pos=.65, yshift=4.5pt] {} (N12);
  \path (N11) edge node[swap, pos=.65, xshift=2pt, yshift=4.5pt] {} (N111);
  \path (N11) edge node[xshift=-2pt, pos=.65, yshift=4.5pt] {} (N112);
  \path (N111) edge node[swap, pos=.65, xshift=2pt, yshift=4.5pt] {} (N1111);
  \path (N111) edge node[xshift=-2pt, pos=.65, yshift=4.5pt] {} (N1112);
  \path (N12) edge node[swap, pos=.65, xshift=2pt, yshift=4.5pt] {} (N121);
  \path (N12) edge node[xshift=-2pt, pos=.65, yshift=4.5pt] {} (N122);
  \path (N2) edge node[swap, pos=.65, xshift=2pt, yshift=4.5pt] {} (N21);
  \path (N2) edge node[xshift=-2pt, pos=.65, yshift=4.5pt] {} (N22);

\end{tikzpicture} & \pic[fair3tier_bin] \\
  \bottomrule
  \end{tabular}%
  \label{table:compilation}
\end{table}

Our implementation supports arbitrary hierarchical PIFO trees; there is no equivalent in our programmable hardware switch.
Hierarchical packet fair queueing (HPFQ)~\cite{bennett} is WFQ applied at many levels, and is a more general instance of the motivating problem we discussed in~\S\ref{sec:motivating_trees}.
\Cref{table:hpfq} shows HPFQ scheduling a traffic sample, now with unequal splits as shown.
Paying attention to the right hand side of the horizontal colored bars, we see that B gets excellent service (80/20, versus C) until the first packets from A arrive, after which B's share decreases.
This could not have been recreated using fair scheduling without hierarchies.
Additionally, trees need not run the same algorithm on different nodes.
The first and third rows of \Cref{table:compilation} show examples of more complicated trees running combinations of algorithms on their nodes.

\subsection{Compilation}

We implement a compilation from ternary PIFO trees to binary PIFO trees using the technique described in~\Cref{thm:embed1}, meaning that we can take any ternary topology along with a control written against it, and automatically create a binary topology and control that together simulate the former.
We use this to compile all the ternary algorithms described in \Cref{table:standard} into binary PIFO trees, and we find the resultant visualizations unchanged.
We also compile the complex trees shown in rows 1 and 3 of \Cref{table:compilation} into binary trees; these are shown in rows 2 and 4.
The transit nodes that were automatically created are shown in gray, and the visualizations produced remain unchanged.

Our visualizations are sensitive to $\push$ and $\pop$ time, and so it is encouraging to see them stay the same across the compilation process: this suggests that we experience no appreciable loss in performance as a result of the compilation despite the introduction of new intermediary nodes.

\section{Related Work}

We outline a class of algorithms that we have not considered in our formalization.
We review other efforts in formalizing schedulers.
We study other efforts in programmable packet scheduling.

\subsection{Non Work-Conserving Algorithms}%
\label{sec:shaping}

The focus of our study has been \emph{work-conserving} scheduling: a $\push$ed packet can immediately be $\pop$ped.
\emph{Non work-conserving} algorithms, also known as shaping algorithms, say that a packet cannot be $\pop$ped until some \emph{time} that is computed, specifically for that packet, at $\push$.
This means that a less favorably ranked packet may be released before a more favorably ranked one if the former is ready and the latter is not, and the link may go idle if no packets are ready.
We leave shaping for future work, but provide a few pointers.
\citet{Sivaraman16} include shaping in their PIFO tree model.
Loom~\cite{loom} repeatedly reinserts shaped packets into the tree.
Carousel~\cite{carousel} takes a more general approach, applying shaping to all algorithms.

\subsection{Formalizations in the Domain of Packet Scheduling}

There is a wealth of work from the algorithms and theory communities towards formally studying packet scheduling using competitive analysis~\cite{aiello, mansour, kesselman}.
We refer interested readers to a comprehensive survey by \citet{goldwasser}.

\citet{chakareski2011} formalizes in-network packet scheduling for video transmission as a constrained optimization problem expressed using Lagrange multipliers.
Nodes coordinate to compute the optimal rate at which other nodes should send packets.
\citet{durr2016} map packet scheduling to the no-wait job-shop scheduling problem from operations research, arriving at an integer linear program that exposes the constraints under which to minimize maximum congestion.

SP-PIFOs~\cite{Alcoz20} orchestrate a collection of FIFOs to approximate the behavior of a PIFO\@.
This is bound to be imperfect, so they give a formal way to measure the number of mistakes their model makes relative to a perfect PIFO\@.
Follow-on analysis by \citet{vass2022} problematizes Alcoz et al.'s push-up/push-down heuristic, showing that it can introduce mistakes linearly up to the number of FIFOs. 
Their solution is a new heuristic called Spring that counts packets using exponentially weighted moving averages and can achieve twofold speedup compared to SP-PIFOs.

In search of a universal packet scheduler, \citet{mittal2015} develop a general model of packet scheduling.
They formally define a \emph{schedule} as a set of the (fixed set of) packets, the paths those packets take through the network, and the input and output times the packets enjoy.
A schedule is effected by the set of scheduling algorithms that the routers along the way implement.
One set of scheduling algorithms can \emph{replay} another if, for all packets in the fixed set, it gives each packet an equal or better output time.
Mittal et al.\ show that there is no universal set of algorithms that can replay all others, but that the classic least slack time first algorithm~\cite{lstf} comes close.


\subsection{Programmable Scheduling}

We outline other lines of work that also orchestrate FIFOs and PIFOs to eke out more expressivity.
We also review work that allows higher-level programming of schedulers.

\paragraph{Orchestrating FIFOs and PIFOs}

SP-PIFOs~\cite{Alcoz20}, orchestrate a collection of FIFOs to approximate the behavior of a PIFO\@.
\citet{sharma2018} approximate WFQ on reconfigurable switches.
They present a scheduler called rotating strict priority, which transmits packets from multiple queues in approximate sorted order by grouping FIFOs into two groups and intermittently switching the relative priority of one group versus the other.
Tonic~\cite{tonic} can effect a range of scheduling algorithms using a FIFO along with the added notion of \emph{credit} to implement the round-robin algorithm.
\citet{pipo} study packet scheduling using the push-in pick-out (PIPO) data structure.
A PIPO is an approximation of the push-in extract-out (PIEO) queue~\cite{pieo} which picks out the best-ranked item that also satisfies some $\pop$-time predicate.
\citet{Sivaraman16} orchestrate PIFOs into a tree structure.
They provide a hardware design targeting shared memory switches.
They explain how a PIFO tree can be laid out in memory as a complete mesh, and sketch a compiler from the logical tree to the theoretical hardware mesh.

\paragraph{Programming at a higher level}

QtKAT~\cite{qtkat} extends NetKAT~\cite{netkat} to bring quality of service into consideration.
Their work paves the way for formal analysis using network calculus and, eventually, the verification of network-wide queueing.
\citet{snap} present SNAP, a tool that allows a ``one big switch'' abstraction: users can reason at a global level and program a fictitious big switch, and the tool checks the program for correctness and compiles it to the distributed setting.
NUMFabric~\cite{numfabric} allows the user to specify a utility maximization problem, once, and it then calculates the allocations of bandwidth that would maximize the utility function across the distributed network.
\citet{domino} allow data-plane programming as a high level ``packet transaction'' written in Domino, a new imperative language that compiles to line-rate hardware-level code running on programmable switches.

\section{Conclusion and Future Work}

This paper explored higher-level abstractions for packet scheduling.
Starting with a proposal by \citet{Sivaraman16}, we formalized the syntax and semantics of PIFO trees, developed alternate characterizations in terms of permutations on packets, established expressiveness results.
We also designed, implemented and tested embedding algorithms.
Overall, we believe our work represents the next step toward developing a programming language account of scheduling algorithms---an important topic that has mostly remained in the domain of networking and systems.
In the future, we are interested in further exploring the theory and practice of scheduling algorithms, including non-work conserving algorithms as well as applications to other domains, such as task scheduling.

\section{Data Availability Statement}

Code supporting this paper is maintained publicly on GitHub~\cite{github}.
The version submitted to the OOPSLA '23 AEC is permanently archived on Zenodo~\cite{artifact}.
A version of this paper that includes proofs of lemmas in~\S\ref{SEC:LIMITS} and~\S\ref{SEC:EMBEDDING} is on arXiv~\cite{arxiv}.

\section*{Acknowledgments}
Thanks to \'{E}va Tardos for suggesting the more general embedding algorithm for arbitrary target trees.
This work was supported by
the NSF under award 2118709, 
grant CCF-2008083, 
and grant FMiTF-1918396, the ONR under contract N68335-22-C-0411, and DARPA under contract HR001120C0107. 
T. Kappé was partially supported by the EU’s Horizon 2020 research and innovation program under Marie Skłodowska-Curie grant VERLAN (101027412). 

\ifarxiv%
\appendix
\include{appendix}
\fi%

\bibliography{bibliography}

\end{document}